\newtheorem{theorem}{Theorem}
\newtheorem{proposition}{Proposition}
\newtheorem{definition}{Definition}
\newtheorem{lemma}{Lemma}
\newtheorem{remark}{Remark}
\newtheorem{assumption}{Assumption}
\newtheorem{proof}{Proof}
\newtheorem{proof of Theorem 1}{Proof of Theorem 1}
\newcommand*{\QEDA}{\hfill\ensuremath{\blacksquare}}
\begin{document}
%

\title{Aggregative games with bilevel structures: Distributed algorithms and\\ convergence analysis}
%
%

\author{Kaihong~Lu,
~ Huanshui Zhang,
~ Long Wang

\thanks{K. Lu and H. Zhang are with College of Electrical Engineering and Automation, Shandong University of Science and Technology,
Qingdao 266590, China 
(e-mails: khong\_lu@163.com; hszhang@sdu.edu.cn).}

 \thanks{L. Wang is with the Center for Systems and Control, College of Engineering, Peking University, Beijing 100871, China (e-mail: longwang@pku.edu.cn)}


}

\maketitle

\begin{abstract}
In this paper, the problem of distributively seeking the equilibria of aggregative games with bilevel structures is studied. Different from the traditional aggregative games, here the aggregation is determined by the minimizer of a virtual leader's objective function in the inner level. Moreover, the global objective function of the virtual leader is formed by the sum of local functions, each of which is determined by the local action of a
 player and the aggregation variable, and is strongly convex with respect to the aggregation variable.
When making decisions, each player only has
access to a local part of the virtual leader's objective function, and can communicate with its neighbors via a connected graph. To handle this problem, first, we propose a second order gradient-based distributed algorithm, where the Hessian matrices associated with the objective functions of the leader are involved. 
Under mild assumptions on the graph and cost functions, we prove that the actions of players asymptotically converge to the Nash equilibrium point. Then, for the case where the Hessian matrices associated with the objective functions of the virtual leader are not
available, we propose a first order gradient-based distributed algorithm, where a distributed estimate strategy
is developed to estimate the gradients of players' cost functions in the outer level. Under the same conditions, we prove that the convergence errors of players' actions to the Nash equilibrium point are linear with respect to the estimate parameters. Finally, simulations are
provided to demonstrate the effectiveness of our theoretical
results.
\end{abstract}
\begin{keywords}
Aggregative games; Bilevel structures; Distributed algorithms; Graph
\end{keywords}
%
\IEEEpeerreviewmaketitle
\section{Introduction}
\IEEEPARstart{A}GGREGATIVE games (AGs) are the noncooperative games where the cost function of each player depends on a common term determined by the actions of all players, and the common term is referred to as an aggregation \cite{Jensen, Ye1}. Distributively seeking the solutions of AGs without full information has received increasing attention in recent years. This is due to its wide practical applications in many areas such as the network congestion control \cite{Alpcan}, the charging
control for large-scaled electric vehicle systems \cite{Shahidehpour}, the Cournot oligopoly market \cite{Algazin}, and real-time energy trading in the smart grid \cite{Y. You}.
\subsection{Literature review}
AGs are a typical kind of noncooperative games. For AGs, the Nash equilibrium (NE) is one
of the most important solution concepts \cite{{Pave2}}-\cite{Lu}. Recently, various results on the problems of distributively seeking NEs have been achieved \cite{Parise2020}-\cite{Liang2017}. In the networked environment, the aggregation is formulated as the average of some local functions, each of which is determined
by the action of a specific player. In this scenario, distributed average tracking strategies are employed for players to estimate the aggregation by communicating with their immediate neighbors \cite{Ye1, Belgioioso2021, Liang2017}. To seek the NE of AGs, a distributed optimal response algorithm is proposed in \cite{Parise2020}, while a decentralized asymmetric projected gradient algorithm is presented in \cite{Paccagnan2019}. Under the unbalanced graph, a continuous-time distributed gradient algorithm is proposed in \cite{Zhu2021}.
For monotone AGs, a distributed Tikhonov regularization algorithm is developed in \cite{Lei2020}. For strongly monotone AGs, the privacy-preserving distributed gradient algorithms are proposed in \cite{Ye2022, Wang2024}. For stochastic AGs, a distributed mirror descent algorithm based on the operator extrapolation strategy is proposed in \cite{Yi2023}. Furthermore, for the AGs with coupled constraints, some primal-dual based distributed algorithms are proposed to seek the generalized NEs in \cite{Belgioioso2021}-\cite{Liang2017}.

All the aforementioned investigations are conducted for AGs with single-level structures. However, AGs with bilevel structures often appear in engineering applications. For example, in the distributed power allocation of
small-cell networks, each small-cell base station aims to minimize its own cost in the outer level. While the costs of all small-cell base stations are influenced by a common price determined by the total cost of the whole power network, which is usually formulated as an objective function of the service provider \cite{Hwang}. For AGs with bilevel structures, the aggregation in the cost functions in the outer level may be determined by the solution of the problem in the inner level, and the equilibrium points in the inner level depend on those in the outer level. These factors bring difficulties in solving the AGs with bilevel structures. In fact, some works on the AG with bilevel structures have been conducted by formulating the problem as a Stackelberg game. The cost functions of the followers are assumed to be quadratic functions with respect to the actions in the outer level and be linear with respect to the decision variable of a leader, centralized algorithms are proposed to seek the Stackelberg equilibrium in \cite{Chen, Maljkovic}, while a semi-decentralized distributed algorithm is presented in \cite{Fabiani}. The aggregations are assumed to be determined by the linear combinations of followers' actions, and the projected gradient algorithm involving real gradient information of cost functions both in the inner level and in the outer level is proposed to seek the Stackelberg equilibrium in \cite{Shokri}.

 It is worth pointing out that, by running the algorithms in \cite{Chen}-\cite{Shokri}, each player needs to have access to the information of the aggregation term. Unfortunately, in practical applications, the aggregation term may rely on the objective function in the inner level and achieving full information from the inner level is usually impossible. For example, in the
small-cell networks, the total cost of the whole network, that decides the price of each station's transmission power, depends on the transmission power strategies of all small-cell base stations \cite{Hwang}. In large-scaled small-cell networks, it is rather expensive, sometimes impossible, for each small-cell base station to achieve the price by gathering the transmission power strategies of all others together. For the AGs with bilevel structures, if the full information associated with the objective function in the inner level is not available, players do not have access to the information
of the aggregation any more.
How to effectively estimate the gradient information of the cost functions in the outer level becomes a challenging problem. These factors motivate the study of this paper.

\subsection{Our contributions}

In this paper, we study the problem of distributively seeking the NE in a new framework of AGs with bilevel structures, whose structure is shown in Fig. \ref{fig-1}. As shown in Fig. \ref{fig-1}, the aggregation in the AGs is determined by the minimizer of a virtual
leader's objective function in the inner level. The global
objective function of the virtual leader is formed by the sum of some
local functions, each of which is determined by the local action of a
 player and the aggregation variable, and is strongly convex with respect to the aggregation variable. In the outer level, players intend to selfishly minimize their own cost functions which are convex with respect to their own actions.
~When making decisions, each player only has access to the information associated with its own cost function, a local part of the virtual
leader's objective function and its own action set, and can communicate with its neighbors via a connected graph.
 The \emph{novelties} and \emph{contributions} of the paper are summarized as follows.
\begin{figure}
\centering
\includegraphics[width=0.46\textwidth]{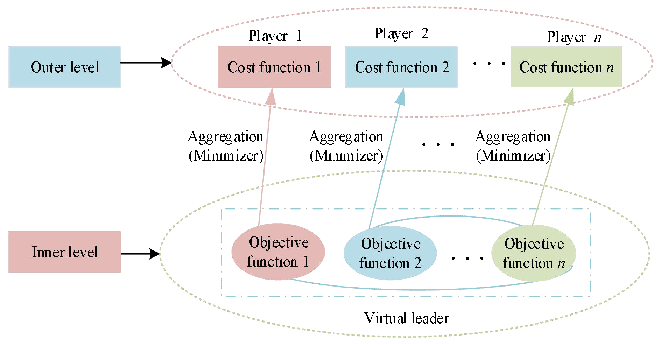}
\caption{The AGs with bilevel structures.} \label{fig-1}
\end{figure}

(i) Compared with AGs with single-level structures \cite{Ye1}, \cite{Parise2020}-\cite{Liang2017}, the problem studied here is more general. More specifically, the formulated AGs with bilevel structures cover the AGs with single-level structures \cite{Ye1}, \cite{Parise2020}-\cite{Liang2017} as well as the distributed optimization problems with strongly convex objective functions \cite{BU1}-\cite{Pu}. Moreover, the problem formulated in this paper is rather different from those studied in \cite{Chen}-\cite{Shokri} where each player has access to the full gradient information of its own cost function. Here the aggregation term, which determines the gradient of each player's cost function in the outer level, relies on the objective function of the virtual
leader in the inner level. However, it is impossible for the players to achieve full information of the objective function in the inner level. It means that, each player in the
outer level does not have access to the full gradient information
of its own cost function.

(ii) To handle this problem, a new second order gradient-based distributed (SOGD) algorithm is proposed based on the primal-dual algorithm, the distributed average tracking strategy and the gradient descent algorithm. Particularly, the distributed average tracking strategy and the gradient descent algorithm are employed to estimate a term determined by the inverse of the Hessian matrix associated with the global objective function.
Different from \cite{Qu} and \cite{Pu}, here the distributed average tracking strategy is employed to track the average of the Hessian matrices associated with the objective functions in the inner level. Moreover, in the proposed algorithm, the projected
gradient strategy involving the Krasnoselskii-Mann iterative
strategy is used to update players' actions in the outer level. While a primal-dual based distributed algorithm involving a fixed step-size is developed for players to estimate the value of aggregation by cooperatively minimizing the global objective function in the inner level. Combining consensus theory, convex analysis theory, and matrix theory, we prove that the actions of players in the outer level asymptotically converge to the NE, and the convergence rate is $\mathcal{O}(\sqrt{\ln t/t})$.

(iii) Considering the case where the Hessian matrices associated with the objective functions in the inner level are not available, we propose a first order gradient-based distributed (FOGD) algorithm to address the problem. Particularly, in the algorithm, a new distributed estimate strategy is developed to estimate the product of the aggregation's gradient and the gradient of the cost function with respect to the second argument in the outer level. By the classical calculus,
the Hessian matrices associated with the objective functions in the inner level are never used when players
make decisions. Using the calculus theory, a linear relationship between the estimated error and the estimate parameter is established. Furthermore, we prove that the convergence errors of players' actions in the outer level to the NE are linear with respect to the estimate parameters.

\subsection{Notations}

Throughout the paper, we use $\mathbb{R}$ and $\mathbb{R}^{m}$
to denote the set of real numbers and $m$-dimensional real vector space, respectively. Given a scalar $x\in\mathbb{R}$, we use $\lceil x\rceil$ to represent the smallest integer that is not smaller than $x$. Given vector ${y}\in\mathbb{R}^{m}$, $\|{y}\|$ denotes the standard Euclidean norm of $y$.  $\langle{u,v}\rangle$ denotes the inner product of $u, v\in\mathbb{R}^{m}$. ${\textbf{1}}_m$ represents an $m$-dimensional column vector whose elements are all $1$. $I_m$ is the $m\times m$ identity matrix.  Given a set $\mathcal{V}=\{1, \cdots, n\}$ and vectors $x_i\in \mathbb{R}^{m}$, where $i\in\mathcal{V}$, we denote $col(x_i)_{i\in\mathcal{V}}=[x_1^T,\cdots,x_n^T]^T$.  The projection onto a set $\Omega$ is denoted by $P_\Omega[\cdot]$, i.e., $P_\Omega[x]=\arg\min_{y\in\Omega}\|y-x\|$.
 Given a function $g(\cdot,\cdot): \mathbb{R}^{m_1}\times \mathbb{R}^{m_2}\rightarrow\mathbb{R}$, we use $\nabla_1g(\cdot,\cdot)$, $\nabla_2g(\cdot,\cdot)$, and $\nabla_{22}g(\cdot,\cdot)$ to represent the gradient of $g$ with respect to the first argument, the gradient with respect to the second argument, and the Hessian
matrix with respect to the second argument, respectively. We denote the Jacobian matrix of $\nabla_2g(\cdot,\cdot)$ with respect to the first argument by $\nabla_{21}g(\cdot,\cdot)$. Given a mapping $\sigma(\cdot): \mathbb{R}^{m_1}\rightarrow\mathbb{R}^{m_2}$, $\nabla_x\sigma(x)\in \mathbb{R}^{m_2\times m_1}$ denotes the Jacobian matrix of $\sigma(x)$ with respect to $x$, where $x\in\mathbb{R}^{m_1}$. Differentiable function $\psi(\cdot): \mathbb{R}^{m}\rightarrow\mathbb{R}$ is said to be convex (respectively, $\mu-$strongly convex) if $\psi(x)-\psi(y)\geq\langle \nabla\psi(y), x-y\rangle$ (respectively, $\psi(x)-\psi(y)\geq\langle \nabla\psi(y), x-y\rangle+\frac{\mu}{2}\| x-y\|^2$) for any $x, y\in\mathbb{R}^m$. Given a mapping $g(\cdot): \mathbb{R}^{m_1}\rightarrow\mathbb{R}^{m_2}$, we say $g(\cdot)$ is $L-$Lipschitz continuous if $\|g(x)-g(y)\|\leq L\|x-y\|$, $\forall~x,y\in\mathbb{R}^{m_1}$ for some constant $L>0$. For a matrix ${A}$, $\|A\|_{\mathbb{F}}$ represents its Frobenius norm, and $[{A}]_{i\cdot}$ represents the $i^{th}$ row. Given a vector $x\in\mathbb{R}^r$ and a matrix $A\in\mathbb{R}^{m\times n}$, we use $x^T$ and $A^T$ to denote their transpositions, respectively. For a mapping $F(\cdot): \mathbb{R}^{m}\rightarrow\mathbb{R}^{n\times r}$, we say $F(\cdot)$ is $L_0-$Lipschitz continuous if each element in $F(x)$ is Lipschitz continuous and $\|F(x)-F(y)\|_{\mathbb{F}}\leq L_0\|x-y\|$, $\forall~x,y\in\mathbb{R}^{m}$ for some constant $L_0>0$. Given two matrices $U$ and $R$, we denote the Kronecker product of them by $U\otimes R$. Given two functions $\phi(\cdot)$, $h(\cdot): \mathcal{R}\rightarrow\mathcal{R}$, $\phi(t)=\mathcal{O}(h(t))$ means $\lim_{t\rightarrow\infty} \frac{\phi(t)}{h(t)}<\infty$.

\section{Problem formulation}
Consider a noncooperative game denoted by $\Gamma(\mathcal{V}, \Omega, {J})$. Here $\mathcal{V}=\{1, 2, \cdots, n\}$ represents the set of players; $\Omega=\Omega_{1}\times\cdots\times\Omega_n$ is the action set of players, where $\Omega_i\subset \mathbb{R}^{m_1}$ is the action set of player $i$; ${J}=\{{J}_{1}, \cdots, {J}_{n}\}$ is the set of players' cost functions, where ${J}_{i}(\cdot, \cdot): \mathbb{R}^{m_1}\times\mathbb{R}^{m_2}\rightarrow \mathbb{R}$ is the cost function of player $i$ at time $t$. For $i\in \mathcal{V}$, $x_i\in \Omega_i$ is the action of player $i$, while $x_{-i}$ denotes the actions of the players other than player $i$, i.e., $x_{-i}=[{x_1}^T,\cdots, {x_{i-1}}^T, {x_{i+1}}^T,\cdots, {x_n}^T]^T$. Moreover, we use $x=(x_i, x_{-i})$ to denote
all players' actions. For any $i\in \mathcal{V}$, player $i$ intends to address the following problem
\begin{equation}\label{a2}\begin{split}
&\min_{{x_i}\in \Omega_i} J_{i}(x_i, \sigma(x))\\
&\emph{\emph{s. t. }}~~\sigma(x)=\arg\min_y \left(\sum_{i=1}^n g_i(x_i,y)\right)
\end{split}
\end{equation}
where ${J}_{i}(\cdot, \cdot): \mathbb{R}^{m_1}\times\mathbb{R}^{m_2}\rightarrow \mathbb{R}$ is differentiable with respect to both arguments, ${J}_{i}(x_i, \sigma(x))$ is convex with respect to $x_i$, $g_i(\cdot, \cdot): \mathbb{R}^{m_1}\times\mathbb{R}^{m_2}\rightarrow \mathbb{R}$ is $\mu_i$-strongly convex and twice differentiable with respect to the second argument for any $x_i\in\Omega_i$. For the sake of simplicity, we denote $\mu=\min_{1\leq i\leq n}\mu_i$. Due to the strong convexity of $g_i(x_i, \cdot)$, for any $x_i\in\Omega_i$, we have the following two equivalent conditions: (i) $\nabla_{22}g_i(x_i, y)-\mu I_{m_2}$ is positive semidefinite; (ii) $\langle\nabla_2g_i(x_i, x)-\nabla_2g_i(x_i, y), x-y\rangle\geq\mu\|x-y\|^2$, $\forall~x,y\in\mathbb{R}^{m_2}$. Game $\Gamma(\mathcal{V}, \Omega, {J})$ is a typical AG with bilevel structures. The one, who holds the objective
function $\sum_{i=1}^n g_i(x_i,y)$ in the inner level, is viewed as a virtual leader. When the players in the outer level make decisions, we assume that each player $i$ only has access to the information associated with its cost function $J_{i}$, a local function $g_i$ in the inner level, and a local set $\Omega_i$. 
In this paper, the main goal is to seek the equilibrium in the outer level. When minimizing the costs in the outer level, the players need to cooperatively estimate the aggregation $\sigma(x)$ by exchanging local information with their neighbors via a network. The network topology is formulated as an undirected graph $\mathcal{G}(\mathcal{V}, E, A)$, where ${\mathcal{V}}$ represents a set of vertices, ${E}\subset{\mathcal{V}}\times{\mathcal{V}}$ denotes an edge set, and ${A}=(a_{ij})_{n\times n}$ is a weighted adjacency matrix satisfying $a_{ij}>0$ if $ (j,i)\in {E}$ and $a_{ij}=0$ otherwise. The following assumptions are made.
\begin{assumption}\label{as1}
$A\textbf{1}_n=\textbf{1}_n$, and $\mathcal{G}(\mathcal{V}, E, A)$ is connected.
 \end{assumption}

 \begin{assumption}\label{as2}
$\Omega_i$ is nonempty, convex and compact for any $i\in\mathcal{V}$, i.e., $\sup_{x_i\in \Omega_i}\|x_i\|\leq K$ for some $K>0$. (Thus, there exists some constants $K_i, i=0, 1, \cdots, 3$ such that $\|\nabla_2J_i(x_i,y)\|\leq K_1$, $\|\nabla_{2}g_i(x_{i},y)\|\leq K_2$, $\|\nabla_{21}g_i(x_i,y)\|_\mathbb{F}\leq K_3$, $\|\nabla_{22}g_i(x_{i},$ $y)\|_\mathbb{F}\leq K_0$ for any $x_i\in \Omega_i$ and $y\in \mathbb{R}^{m_2}$ if $y$ is bounded).
 \end{assumption}

 \begin{assumption}\label{as3}
(Lipschitz continuity) $\nabla_1J_i(x,y)$, $\nabla_2J_i(x,$ $y)$, $\nabla_{2}g_i(x,y)$, $\nabla_{21}g_i(x,y)$ and $\nabla_{22}g_i(x,y)$, $i=1, \cdots, n$, are all $L$-Lipschitz continuous with respect to both the first argument and the second argument for any $x\in \Omega_i$ and $y\in \mathbb{R}^{m_2}$.
 \end{assumption}

 Define the pseudo-gradient mapping ${F}(x)=col({\nabla _{{x_i}}}J_{i}(x_i,$ $\sigma(x)))_{i\in\mathcal{V}}$, where $x=(x_i, x_{-i})$. To guarantee the uniqueness of the equilibrium in the outer level, the following assumption is necessary.
\begin{assumption}\label{as4}(Strong monotonicity) $\langle F(x)-F(y), x-y\rangle\geq \theta \|x-y\|^2, ~\forall x,y\in \Omega$ for some $\theta>0$.
 \end{assumption}
\begin{remark}In (\ref{a2}), if we let $x_i=d_i$ and $f_i(y)=g_i(d_i, y)$ for some constant $d_i \in \Omega_i$, then the action variables $x_i$, $i\in\mathcal{V}$  are dropped in the inner level, accordingly, computing the value of the aggregation becomes a distributed optimization problem with strongly convex objective functions \cite{BU1}-\cite{Pu}. While if letting $g_i(x_i,y)=\frac{\|y\|^2}{2n}-\langle y, \sigma_i(x_i)\rangle$ for some $\sigma_i(\cdot): \mathbb{R}^{m_1}\rightarrow\mathbb{R}^{m_2}$, then we have $\sigma(x)=\sum_{i=1}^n\sigma_i(x_i)$,
accordingly, problem (\ref{a2}) becomes the AG with single-level structure \cite{Ye1}, \cite{Parise2020}-\cite{Liang2017}. Hence, the distributed optimization problem and the traditional AG are two special cases of (\ref{a2}). Consequently, studying (\ref{a2}) benefits for establishing a unified research framework for these two problems. To the best of our knowledge, this is the first time to study the distributed AGs with bilevel structures as (\ref{a2}).
\end{remark}

Due to the strong convexity of $g_i$ with respect to the second argument, the function $\sigma(x)$ is uniquely determined, so in (\ref{a2}), the cost functions of players in the outer level are single-valued mappings with respect to $x$, i.e., $J_i(x_i, \sigma(x)):\mathbb{R}^{nm_1}\rightarrow\mathbb{R}$. The equilibrium point of the players' actions in the outer level is an NE, which is defined as follows.
\begin{definition}
In problem (\ref{a2}), the action profile $x^*=(x_i^*, x_{-i}^*)\in \Omega$ is an NE if it satisfies $J_{i}(x_i^*, \sigma(x^*))\leq J_{i}(x_i, \sigma(x_i, x_{-i}^*))$ for any $x_i\in\Omega_i$ and $i\in\mathcal{V}$.
\end{definition}
\begin{remark}
By (\ref{a2}), if $x^*=(x_i^*, x_{-i}^*)$ is the NE defined in Definition 1, then we have $\sigma(x^*)=\arg\min_y (\sum_{i=1}^n g_i(x_i^*,$ $y))$. If we view the one who holds the objective function in the inner level as a leader, and view the players who hold the cost functions in the outer level as followers, then the problem can be formulated as a Stackelberg game. Letting $\Omega_i=\mathbb{R}^{m_1}$, we know that the first order optimality condition for the equilibrium in Definition 1 implies
\begin{equation*}\label{a2091-bb}\left\{\begin{split}
& \nabla_{x_i}J_{i}(x^*_i, \sigma(x^*))=\nabla_{1}J_{i}(x^*_i, \sigma(x^*))+\\
&~~~~~~~~~(\nabla_{x_i} \sigma(x^*))^T\nabla_2 J_i(x_i^*, \sigma(x^*)) =0,~i\in\mathcal{V}\\
&\sum_{i=1}^n \nabla_2g_i(x_i^*,\sigma(x^*))=0.
\end{split}\right.
\end{equation*}
Since the gradient $\nabla_{x_i}J_{i}$ in the outer level depends on the term $\nabla_{x_i} \sigma(x)$, we need to establish the relationship between $\nabla_{x_i} \sigma(x)$ and $g_i$. In \cite{Chen}-\cite{Shokri}, the Stackelberg game where $\sigma(x)$ does not rely on the best response in the inner level is studied. Different from them, here we consider the case where the aggregation $\sigma(x)$ relies on the minimizer of the objective function in the inner level, and each player only has access to a local part of the objective function.
\end{remark}

Based on Definition 1 and results in \cite{Chen, Facchinei}, a sufficient and necessary condition associated with the NEs of AG with bilevel structures is provided.
\begin{lemma}[\cite{Chen, Facchinei}]\label{LE000}
For game $\Gamma(\mathcal{V}, \Omega, {J})$, if $\Omega$ is convex and compact, and $J_i(x_i,$ $\sigma(x))$, $i\in\mathcal{V}$ is differentiable and convex with respect to $x_i$, then the NEs always exist. Moreover,
$x^*$ is an NE if and only if for any $k>0$,
\begin{equation}\label{eq-00002}\begin{split}
&x^*-P_\Omega[x^*-k{F}(x^*)]=0.\\
\end{split}\end{equation}
\end{lemma}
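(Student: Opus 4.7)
The plan is to split the argument into an existence step and a two-way characterization, handling the inner and outer levels separately and then combining them through the classical projection/variational-inequality equivalence.

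First I would establish that the inner optimizer $\sigma(\cdot)$ is a well-defined continuous map on $\Omega$. Strong convexity of $\sum_i g_i(x_i,\cdot)$ (part of the problem setup in (\ref{a2})) makes the inner minimization uniquely solvable for each $x$. Applying the implicit function theorem to the optimality system $\sum_i \nabla_2 g_i(x_i,\sigma(x)) = 0$ together with the Lipschitz bounds on $\nabla_{21}g_i$ and $\nabla_{22}g_i$ given in Assumption~\ref{as2} yields continuity (in fact continuous differentiability) of $\sigma$. Consequently every composed cost $x\mapsto J_i(x_i,\sigma(x_i,x_{-i}))$ is continuous on the convex compact set $\Omega$, and a Kakutani-type fixed-point argument applied to the best-response correspondence delivers an outer-level profile $x^*$; setting $y^*=\sigma(x^*)$ furnishes the inner-level component.

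For the inner-level equivalence, the strong convexity of $y\mapsto \sum_i g_i(x_i^*,y)$ together with the fact that $y^*$ ranges over all of $\mathbb{R}^{m_2}$ makes stationarity both necessary and sufficient for minimality, so $y^*=\sigma(x^*)$ if and only if $\sum_i \nabla_2 g_i(x_i^*,y^*) = 0$. For the outer-level part, which is the core step, an SE requires each $x_i^*$ to minimize $J_i(x_i,\sigma(x_i,x_{-i}^*))$ over $\Omega_i$. Computing the total derivative by the chain rule together with the implicit-function identity
\[
\nabla_{x_i}\sigma(x) = -\Bigl[\sum_{j=1}^{n}\nabla_{22}g_j(x_j,\sigma(x))\Bigr]^{-1}\nabla_{21}g_i(x_i,\sigma(x))
\]
produces $\nabla_{x_i}J_i(x_i,\sigma(x))$, which is precisely the $i$th block of $F(x)$. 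The stationarity of each player's outer-level problem is therefore equivalent, in the sense of \cite{Chen, Facchinei}, to the variational inequality $\langle F(x^*), x - x^* \rangle \geq 0$ for all $x \in \Omega$. The claimed fixed-point form is then the standard projection characterization: for any $k>0$, $\langle (x^*-kF(x^*))-x^*, x-x^*\rangle \leq 0$ for all $x\in\Omega$ is equivalent to $x^* = P_\Omega[x^* - kF(x^*)]$.

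The main obstacle will be justifying the variational-inequality step rigorously, because the composed map $x_i\mapsto J_i(x_i,\sigma(x_i,x_{-i}^*))$ need not inherit convexity from $J_i(\cdot,y)$ once $\sigma$ is substituted in. I would circumvent this by following the bilevel framework of \cite{Chen, Facchinei}, where $F$ is defined precisely so that the VI encapsulates all first-order effects of the bilevel coupling; the subsequent equivalence with the projection fixed-point equation is purely a consequence of the normal-cone characterization of projection onto a closed convex set and requires no additional convex-composition hypothesis.
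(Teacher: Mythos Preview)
The paper does not supply its own proof of this lemma: it is stated as a cited result from \cite{Chen, Facchinei}, so there is no in-paper argument to compare your proposal against. Your outline is a reasonable reconstruction of the standard route---continuity of $\sigma$ via the implicit function theorem, existence of the outer Nash point via a fixed-point argument, first-order optimality in the unconstrained inner problem, and the projection/variational-inequality equivalence for the constrained outer problem.

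One point deserves correction. Your stated obstacle, that the composed map $x_i\mapsto J_i(x_i,\sigma(x_i,x_{-i}^*))$ ``need not inherit convexity from $J_i(\cdot,y)$,'' misreads the hypothesis. In the paper's problem formulation (equation~(\ref{a2})) the standing assumption is precisely that the \emph{composed} function $J_i(x_i,\sigma(x))$ is convex in $x_i$; the lemma's phrase ``$J_i$ is convex with respect to $x_i$'' is meant in that sense. With this convexity in hand, the first-order condition for each player's outer problem is both necessary and sufficient, and the equivalence with the variational inequality $\langle F(x^*),x-x^*\rangle\ge 0$ for all $x\in\Omega$ (and hence with the projection fixed-point identity $x^*=P_\Omega[x^*-kF(x^*)]$ for any $k>0$) follows directly from standard convex analysis, without needing the workaround you propose through the bilevel VI framework.
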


Based on Lemma \ref{LE000}, we know that under Assumption 2, the solution to (2) exists. In the following proposition, each component of ${F}(x)$ is provided.
\begin{proposition} Given ${J}_{i}(\cdot, \cdot): \mathbb{R}^{m_1}\times\mathbb{R}^{m_2}\rightarrow \mathbb{R}$ and $g_i(\cdot,$ $\cdot): \mathbb{R}^{m_1}\times\mathbb{R}^{m_2}\rightarrow \mathbb{R}$ as (\ref{a2}),\\
(i) $\nabla_{x_i} \sigma(x)=-\left(\sum_{i=1}^n \nabla_{22}g_i(x_i, \sigma(x))\right)^{-1}\nabla_{21}g_i(x_i, \sigma(x))$;\\
(ii)${\nabla _{{x_i}}}J_{i}(x_i,$ $\sigma(x))=\nabla_1 J_i(x_i,\sigma(x))+(\nabla_{x_i} \sigma(x))^T\nabla_2 J_i(x_i,$ $\sigma(x))$.
 \end{proposition}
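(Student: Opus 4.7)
The plan is to establish (i) by implicit differentiation of the first-order optimality condition characterizing $\sigma(x)$, and then to obtain (ii) as a direct application of the multivariable chain rule. The strong convexity and twice-differentiability assumptions on $g_i(x_i,\cdot)$ together ensure that $\sum_{j=1}^n \nabla_{22}g_j(x_j,\sigma(x))$ is positive definite, hence invertible, so the implicit function theorem applies and $\sigma(\cdot)$ is continuously differentiable in a neighborhood of any $x$. This is where all the analytical heavy lifting happens; once that is in place the two formulas drop out algebraically.

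For part (i), I would start from the fact that, by strong convexity of $\sum_{j=1}^n g_j(x_j,\cdot)$ in its second argument, $\sigma(x)$ is the unique minimizer of the inner problem in (\ref{a2}) and is therefore characterized by the stationarity condition
\begin{equation*}
\sum_{j=1}^n \nabla_2 g_j\bigl(x_j,\sigma(x)\bigr)=0,
\end{equation*}
which holds as an identity in $x$. Differentiating both sides with respect to $x_i$ and invoking the chain rule on the composition $x\mapsto (x_j,\sigma(x))\mapsto \nabla_2 g_j$, only the $j=i$ term contributes a direct dependence on $x_i$, while every term contributes an indirect dependence through $\sigma(x)$. Following the Jacobian convention adopted in the Notations (so that $\nabla_{21}g_i$ has shape $m_1\times m_2$ and $\nabla_{x_i}\sigma(x)$ has shape $m_1\times m_2$), the differentiated identity reads
\begin{equation*}
\nabla_{21}g_i\bigl(x_i,\sigma(x)\bigr)+\nabla_{x_i}\sigma(x)\sum_{j=1}^n \nabla_{22}g_j\bigl(x_j,\sigma(x)\bigr)=0.
\end{equation*}
Since the Hessian sum is positive definite and symmetric, right-multiplying by its inverse isolates $\nabla_{x_i}\sigma(x)$ and yields the claimed formula.

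For part (ii), I would simply observe that $J_i(x_i,\sigma(x))$ is the composition of $J_i(\cdot,\cdot)$ with the map $x_i\mapsto (x_i,\sigma(x_i,x_{-i}))$. Applying the standard chain rule for gradients of a scalar-valued function of two vector arguments gives
\begin{equation*}
\nabla_{x_i}J_i(x_i,\sigma(x))=\nabla_1 J_i(x_i,\sigma(x))+\nabla_{x_i}\sigma(x)\,\nabla_2 J_i(x_i,\sigma(x)),
\end{equation*}
the second term arising from the indirect dependence on $x_i$ through $\sigma(x)$. Substituting the expression for $\nabla_{x_i}\sigma(x)$ obtained in (i) would then make explicit the dependence on the inverse Hessian, but the statement of (ii) is already established at this point.

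The main obstacle, if any, is a bookkeeping one: reconciling the paper's Jacobian-orientation convention (as encoded in the shapes of $\nabla_{21}g_i$ and $\nabla_{x_i}\sigma(x)$ implicit in the statement) with the direction in which one multiplies by the inverse Hessian. I expect no genuine analytic difficulty beyond the invocation of the implicit function theorem, which is justified by Assumption \ref{as2} and the strong convexity in the inner level; no smoothness issue related to the projection onto $\Omega_i$ enters here because differentiation is taken at the unconstrained minimizer of the inner problem.
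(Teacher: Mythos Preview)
Your proposal is correct and follows essentially the same route as the paper: implicit differentiation of the stationarity condition $\sum_j \nabla_2 g_j(x_j,\sigma(x))=0$ for part (i), invoking positive definiteness of the Hessian sum to invert, and a direct chain rule for part (ii). The paper's proof is terser (it does not name the implicit function theorem or dwell on Jacobian orientation), but the argument is identical in substance.
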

\begin{proof}See APPENDIX. A for details. \end{proof}

Due to the strong convexity of $g_i(x_i, \cdot)$ and the boundedness of $\nabla_{21}g_i(x_i, \sigma(x))$, based on (i) in Proposition 1 and Assumption 2, there holds that $\nabla_{x_i} \sigma(x)$ is bounded, then it is not difficulty to verify that
\begin{equation}\label{104-22}\begin{split}
\|\sigma(x)-\sigma(y)\|\leq\ell\|x-y\|, \forall~x,y\in\Omega
\end{split}
\end{equation}
where $\ell=\frac{K_3}{\sqrt{n}\mu}$.

Under Assumption \ref{as3}, we can conclude that $F(x)$ is $L_0$-Lipschitz continuous for any $x\in \Omega$.
\begin{lemma}\label{LE0-00}
Under Assumptions \ref{as2} and \ref{as3}, $F(x)$ is $L_0$-Lipschitz continuous for any $x\in \Omega$, where $L_0=nL(K_2(K_3$ $+\mu +K_3\ell+\mu \ell)+(K_3 \mu+\mu^2) (1+\ell))/\mu^2$.
\end{lemma}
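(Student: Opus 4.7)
The plan is to expand $F_i(x)$ using Proposition~1 and bound $\|F(x)-F(y)\|$ coordinate-by-coordinate through repeated triangle inequality, using Assumption~\ref{as2} for uniform bounds, Assumption~\ref{as3} for Lipschitz continuity of first/second derivatives, and (\ref{104-22}) to convert changes in $\sigma$ into changes in $x$. Writing $H(x):=\sum_{j=1}^n \nabla_{22}g_j(x_j,\sigma(x))$, Proposition~1 gives
\begin{equation*}
F_i(x) \;=\; \nabla_1 J_i(x_i,\sigma(x)) \;-\; \nabla_{21}g_i(x_i,\sigma(x))\,H(x)^{-1}\,\nabla_2 J_i(x_i,\sigma(x)).
\end{equation*}
I would split $F_i(x)-F_i(y)$ into four pieces by inserting $\pm$-terms: (a) the change in $\nabla_1 J_i$; (b) the change in $\nabla_{21}g_i$ times $H(x)^{-1}\nabla_2 J_i(x_i,\sigma(x))$; (c) $\nabla_{21}g_i(y_i,\sigma(y))$ times $(H(x)^{-1}-H(y)^{-1})$ times $\nabla_2 J_i(x_i,\sigma(x))$; and (d) $\nabla_{21}g_i(y_i,\sigma(y))H(y)^{-1}$ times the change in $\nabla_2 J_i$.

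For pieces (a), (b) and (d), the Lipschitz hypothesis of Assumption~\ref{as3} applied to each factor, combined with (\ref{104-22}) to bound $\|\sigma(x)-\sigma(y)\|\le\ell\|x-y\|$, yields $L(1+\ell)\|x-y\|$ times the relevant uniform magnitude supplied by Assumption~\ref{as2}. The strong convexity of $\sum_j g_j(x_j,\cdot)$, with constant at least $\mu$, gives the spectral bound $\|H(\cdot)^{-1}\|\le 1/\mu$, which feeds into (b) and (d) as a multiplicative factor.

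The main obstacle is piece (c), because the inverse Hessian appears. The key step is the matrix identity
\begin{equation*}
H(x)^{-1}-H(y)^{-1} \;=\; H(x)^{-1}\bigl(H(y)-H(x)\bigr)H(y)^{-1},
\end{equation*}
together with
\begin{equation*}
\|H(y)-H(x)\|_{\mathbb{F}} \;\le\; \sum_{j=1}^{n}\bigl\|\nabla_{22}g_j(x_j,\sigma(x))-\nabla_{22}g_j(y_j,\sigma(y))\bigr\|_{\mathbb{F}} \;\le\; nL(1+\ell)\|x-y\|,
\end{equation*}
where the last inequality uses Assumption~\ref{as3} and (\ref{104-22}). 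This injects a factor $n/\mu^2$ and explains the $n$ and $\mu^2$ that appear in the statement of $L_0$. Combining with the $K_3$ bound on $\nabla_{21}g_i$ and the bound on $\nabla_2 J_i$ finishes piece (c).

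Finally, I would assemble the four estimates, use $(K_3+\mu)(1+\ell)=K_3+\mu+K_3\ell+\mu\ell$ to recognise the grouping in the claimed constant, and sum over $i\in\mathcal{V}$ (the factor $n$ absorbs the $\sqrt{n}$ from $\|F(x)-F(y)\|^2=\sum_i\|F_i(x)-F_i(y)\|^2$ together with the factor already produced in piece (c)). This yields $\|F(x)-F(y)\|\le L_0\|x-y\|$ with the $L_0$ stated in the lemma.
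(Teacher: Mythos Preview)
Your proposal is correct and follows essentially the same route as the paper: expand $F_i$ via Proposition~1, use the identity $H(x)^{-1}-H(y)^{-1}=H(x)^{-1}(H(y)-H(x))H(y)^{-1}$ together with $\|H(\cdot)^{-1}\|\le 1/\mu$, apply Assumption~\ref{as3} and (\ref{104-22}) to each factor, and combine. The only organisational difference is that the paper first bounds $\|\nabla_{x_i}\sigma(x)-\nabla_{x_i}\sigma(y)\|_{\mathbb F}$ as a single object (your pieces (b) and (c) together) and then does a two-term product split for $\nabla_{x_i}\sigma\cdot\nabla_2 J_i$, whereas you unfold directly into four terms; the ingredients are identical. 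One small point on bookkeeping: the paper obtains the leading factor $n$ by using the crude bound $\|F(x)-F(y)\|\le\sum_{i=1}^n\|F_i(x)-F_i(y)\|$ rather than the $\ell^2$ identity you wrote, so your remark about ``$n$ absorbing the $\sqrt n$'' is not quite how the constant arises.
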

\begin{proof}See APPENDIX. B for details.\end{proof}

Assumptions \ref{as2} and
 \ref{as4} ensure the existence and uniqueness of the solution to (\ref{eq-00002}), denoted by $x^*=(x_i^*, x_{-i}^*)$. Moreover, due to the strong convexity of $\sum_{i=1}^ng_i(x_i^*, \cdot)$ for any $x_i^*\in\Omega_i$, $\sigma(x^*)$ is the unique solution to $\min_{y}\sum_{i=1}^ng_i(x_i^*, y)$. In fact, the Lipschitz continuity and strong monotonicity of the pseudo-gradient $F(x)$ are commonly adopted in the study of games, even in the cases with single-level structures \cite{Ye1}, \cite{Parise2020}-\cite{Liang2017}, \cite{Facchinei}-\cite{Lu10}.
\begin{remark}
In traditional AGs \cite{Ye1}, \cite{Parise2020}-\cite{Liang2017}, the aggregation is defined as $\sigma(x)=\sum_{i=1}^n\sigma_i(x_i)$, where $\sigma_i(\cdot): \mathbb{R}^{m_1}\rightarrow\mathbb{R}^{m_2}$. It is easy to achieve that $\nabla_{x_i} \sigma(x)=\nabla_{x_i} \sigma_i(x_i)$, so the components of $\nabla_{x_i} \sigma(x)$ are not coupled at all, and the information of $\nabla_{x_i} \sigma_i(x_i)$ is available to player $i$. However, by the fact that $\sum_{i=1}^n \nabla_{22}g_i(x_i, \sigma(x))$ is symmetric and positive definite, and using Proposition 1, we have
\begin{equation}\label{a5001}\begin{split}
&{\nabla _{{x_i}}}J_{i}(x_i, \sigma(x))=\nabla_1 J_i(x_i,\sigma(x))-(\nabla_{21}g_i(x_i, \sigma(x)))^T\\
&~~~~\times \left(\sum_{i=1}^n \nabla_{22}g_i(x_i, \sigma(x))\right)^{-1}\nabla_2 J_i(x_i, \sigma(x)).
\end{split}
\end{equation}
In (\ref{a5001}), $\sigma(x)$ is determined by the distributed optimization problem in the inner level of (\ref{a2}), and it can not be directly accessed. Moreover, all ${\nabla _{{x_i}}}J_{i}(x_i, \sigma(x))$ and all $\nabla_{x_i} \sigma(x)$, $i=1, \cdots, n$ are coupled through the common term $\left(\sum_{i=1}^n \nabla_{22}g_i(x_i, \sigma(x))\right)^{-1}$, which can not be directly accessed by the players. When making decisions, how to estimate the value of ${\nabla _{{x_i}}}J_{i}$ for players only using local information is a challenging problem. This problem formulation is rather different from the cases studied in \cite{Fabiani}-\cite{Shokri} where the information of ${\nabla _{{x_i}}}J_{i}$ is available to player $i$.
\end{remark}
\section{Distributed methods involving second order gradient information}\label{se2}
\subsection{Algorithm design}
Denote the action of player $i$ at any time $t\geq0$ by $x_{i, t}\in\mathbb{R}^{m_1}$, where $i\in\mathcal{V}$. To estimate $\sigma(x_t)$ defined in the inner level of (\ref{a2}), it is necessary to solve the optimization problem $\min_{y\in\mathbb{R}^{m_2}}\sum_{i=1}^n g_i(x_{i, t},y)$. 
 Define $\tilde{y}=col(y_i)_{i\in\mathcal{V}}$, where $y_{i}\in\mathbb{R}^{m_2}$. Under Assumption 1, we know that $y_1=y_2=\cdots=y_n$ is equivalent to $((I-A)\otimes I_{m_2})\tilde{y}=0$ \cite{Olfati}. Accordingly, the problem $\min_{y\in\mathbb{R}^{m_2}}\sum_{i=1}^n g_i(x_{i, t},y)$ is equivalent to
\begin{equation}\label{-a2}\begin{split}
&\min_{\tilde{y}\in \mathbb{R}^{nm_2}} g(x_t, \tilde{y})=\sum_{i=1}^n g_i(x_{i,t},y_i)\\
&\emph{\emph{s. t. }}~~ ((I-A)\otimes I_{m_2})\tilde{y}=0.
\end{split}
\end{equation}
 To address this problem, we propose the following primal-dual based distributed algorithm
\begin{equation}\label{s1}
 \left\{
\begin{split}
y_{i,t+1}&=y_{i,t}+\kappa\Big(\sum_{j \in {\mathcal{N}_i}}a_{ij}\zeta_{j,t}-\zeta_{i,t}-\nabla_2g_i(x_{i,t},y_{i,t})\Big)\\
\zeta_{i,t+1}&=\zeta_{i,t}-\kappa\left(\sum_{j \in {\mathcal{N}_i}}a_{ij}y_{j,t+1}-y_{i,t+1}\right)
\end{split}
\right.
\end{equation}
where $y_{i, t}\in\mathbb{R}^{m_2}$ represents the estimate on the aggregation $\sigma(x_t)$, $\zeta_{i,t}\in\mathbb{R}^{m_2}$ is the dual variable, $\mathcal{N}_i$ is the neighbor set of player $i$, and $\kappa$ is a fixed step-size satisfying $0<\kappa<\min\Big(\frac{1}{L}, $  $\frac{\mu}{(\lambda_n(I-A))^2}\Big)$, $\lambda_n(I-A)$ is the largest eigenvalue of $I-A$. Moreover, to minimize the player's cost function in the outer level, the information of gradient ${\nabla _{{x_i}}}J_{i}(x_i, \sigma(x))$ is required. Based on (\ref{a5001}), to compute ${\nabla _{{x_i}}}J_{i}(x_i, \sigma(x))$, it is necessary to compute $-\left(\sum_{i=1}^n \nabla_{22}g_i(x_{i,t}, y_{i,t})\right)^{-1}\nabla_2 J_i(x_{i,t}, y_{i,t})$, which is a solution of the following optimization problem
\begin{equation}\label{a5003}\begin{split}
&\min_{z_i\in \mathbb{R}^{m_2}} \frac{z_i^T \left(\sum_{j=1}^n \nabla_{22}g_j(x_{j,t}, y_{j,t})\right)z_i}{2}+z_i^T\nabla_2 J_i(x_{i,t}, y_{i,t}).
\end{split}
\end{equation}
So estimating the term $-\left(\sum_{i=1}^n \nabla_{22}g_i(x_{i,t}, y_{i,t})\right)^{-1}\nabla_2 J_i(x_{i,t},$ $ y_{i,t})$ is equivalent to solving (\ref{a5003}). It is not difficult to verify that the gradient of objective function in (\ref{a5003}) is $(\sum_{j=1}^n \nabla_{22}g_j(x_{j,t}, y_{j,t}))z_i+\nabla_2 J_i(x_{i,t}, y_{i,t})$, where the second order gradient information associated with $g_i$ is involved. To estimate $(\sum_{j=1}^n \nabla_{22}g_j(x_{j,t},$ $ y_{j,t}))^{-1}\nabla_2 J_i(x_{i,t}, y_{i,t})$, we can first employ distributed average tracking algorithm (\ref{s2}) to track $\sum_{j=1}^n \nabla_{22}g_j(x_{j,t}, y_{j,t})$, and then use the gradient descent algorithm (\ref{s3}) to solve (\ref{a5003}). Note that when tracking the term $\sum_{j=1}^n \nabla_{22}g_j(x_{j,t}, y_{j,t})$ in a distributed way, the current information of players' actions and the estimate on the aggregation is involved, so the distributed average tracking algorithm for estimating the term $\sum_{j=1}^n \nabla_{22}g_j(x_{j,t}, y_{j,t})$ should be implemented after the estimate on the aggregation and the players' actions update. \\
  \noindent
\rule[0\baselineskip]{8.9cm}{1pt}
\emph{Algorithm 1: SOGD algorithm}\\
\rule[0.53\baselineskip]{8.9cm}{1pt}
\textbullet~\textbf{Initialization:} At each iteration time $t\geq0$, each player $i$ for any $i\in\mathcal{V}$
maintains its action $x_{i, t}\in\mathbb{R}^{m_1}$, the estimate $y_{i, t}\in\mathbb{R}^{m_2}$ on the aggregation $\sigma(x_t)$, the dual variable $\zeta_{i,t}\in\mathbb{R}^{m_2}$, the estimate $v_{i,t}\in \mathbb{R}^{m_2\times m_2}$ on the average of $\nabla_{22}g_i(\cdot, \cdot)$, and the estimate $z_{i,t}\in \mathbb{R}^{m_2}$ on $-\left(\sum_{i=1}^n \nabla_{22}g_i(x_i, y_{i,t})\right)^{-1}\nabla_2 J_i(x_i,$ $y_{i,t})$.
Set the initial values as
$x_{i, 0} \in \Omega_i$, $y_{i, 0}\in \mathbb{R}^{m_2}$, $\zeta_{i,0}=0$,  $z_{i, 0}\in \mathbb{R}^{m_2}$, $v_{i, 0}=\nabla_{22}g_i(x_{i, 0}, y_{i, 0})$.\\
\textbullet~\textbf{Iteration: } For $t=0, 1, 2, \cdots$ and $i\in\mathcal{V}$, each player $i$ updates variables using the following rules.

\textbullet~Update variables $y_{i,t}$ and $\zeta_{i,t}$ in the inner level as (\ref{s1}).

\textbullet~Update action $x_{i,t}$ in the outer level by the following projected gradient strategy
\begin{equation}\label{s4}
 \left\{
\begin{split}
&x_{i,t+1}=(1-\eta_t)x_{i,t}+\eta_tP_{\Omega_i}\big[x_{i,t}-k\hat{F}_{i,t}\big]\\
&\hat{F}_{i,t}=\nabla_1J_i(x_{i,t},y_{i,t})+(\nabla_{21}g_i(x_{i,t},y_{i,t}))^Tz_{i,t}
\end{split}
\right.
\end{equation}
where $\eta_t$ is a nonincreasing step-size such that $0\leq\eta_t\leq 1$, and $\frac{\theta}{L_0}> k>0$ is constant.

~\textbullet~Update $v_{i,t}$ by the following distributed average tracking algorithm
\begin{equation}\label{s2}
\begin{split}
v_{i,t+1}=&\sum_{j \in {\mathcal{N}_i}}a_{ij}v_{j,t}+\nabla_{22}g_i(x_{i,t+1},y_{i,t+1})\\
&-\nabla_{22}g_i(x_{i,t},y_{i,t}).
\end{split}
\end{equation}

~\textbullet~Update $z_{i,t}$ by the following gradient descent algorithm
\begin{equation}\label{s3}
\begin{split}
&z_{i,t+1}=z_{i,t}-\alpha(nv_{i,t}z_{i,t}+\nabla_2J_i(x_{i,t},y_{i,t}))\\
\end{split}
\end{equation}
where $\alpha$ is a fixed step-size such that $0<\alpha<\frac{\mu}{2nK_0^2}$.\\
\rule[0.3\baselineskip]{8.9cm}{1pt}

To handle problem (\ref{a2}), now we propose the SOGD algorithm, see Algorithm 1 for details. Note that by running Algorithm 1, each player $i$ updates its
  action only using local action information received from its neighbors, its own action set $\Omega_i$, and the information associated with a local function $g_i$ in the inner level and its own cost function $J_{i}$ in the outer level. Thus, Algorithm 1 is distributed.
\begin{remark}
In Algorithm 1, the design of dynamics (\ref{s1}) is motivated by the primal-dual algorithms in \cite{Feijer}-\cite{Hamedani}. The design of dynamics (\ref{s2}) is motivated by the distributed average tracking algorithm \cite{Qu, Pu}, as well as the consensus algorithms \cite{Olfati, Ren}. Different from the distributed algorithms for minimizing the objective function only with single argument \cite{Feijer}-\cite{Hamedani}, here we need to deal with the functions with two arguments which are inseparable. The convergence of (\ref{s1}) relies on the actions $x_{i,t}$, $i\in\mathcal{V}$ of the players. If the value of $x_{i,t}$ changes fast, then (\ref{s1}) could not converge. To reduce the fluctuation of $x_{i,t}$, the projected gradient strategy involving the Krasnoselskii-Mann iterative strategy is proposed in (\ref{s4}).
\end{remark}

\subsection{Convergence results}

For simplicity, we denote
\begin{equation}\label{-a0}\begin{split}
B=I-A.
\end{split}
\end{equation}
Note that if the graph is connected, then $B$ has a simple zero-eigenvalue, and the others are all positive \cite{Olfati}-\cite{Barshooi}. The eigenvalues can be presented in an ascending order: $0=\lambda_1(B)<\lambda_2(B)\leq\lambda_3(B)\leq\cdots\leq\lambda_n(B)$. For (\ref{-a2}), the Lagrangian function can be defined as $$L^t(\tilde{y}, \lambda)=g(x_t, \tilde{y})+\zeta^T(B\otimes I_{m_2})\tilde{y}$$ where $\zeta\in\mathbb{R}^{nm_2}$. Based on the Karush-Kuhn-Tucker (KKT) condition \cite{Yang}, for any $t\geq0$, $y^*_t=\left[\sigma^T(x_t), \cdots, \sigma^T(x_t)\right]^T$ is the solution of (\ref{-a2}) if and only if
\begin{equation}\label{-a3}\left\{
\begin{split}
&\nabla_2 g(x_t, {y}^*_t)+(B\otimes I_{m_2})\zeta^*_t=0\\
&(B\otimes I_{m_2}){y}^*_t=0
\end{split}
\right.
\end{equation}
for some $\zeta^*_t=col(\zeta^*_{i, t})_{i=1, \cdots, n}$. In the following lemma, we establish the relationship between the variation in sequence $\{\zeta^*_t\}_{t=0, 1, \cdots}$ and that in sequence $\{\nabla_2 g(x_t, {y}^*_t)\}_{t=0, 1, \cdots}$.
\begin{lemma}\label{LE-1}
Under Assumption 1, there exists a solution sequence $\{\zeta^*_t\}_{t=0, 1, \cdots}$ to (\ref{-a3}) satisfying\\
(i)$ \|(B\otimes I_{m_2})(\zeta^*_{t}-\zeta)\|\geq \lambda_2(B)\|\zeta^*_{t}-\zeta\|$ for any $\zeta\in Null(\textbf{1}_n^T\otimes I_{m_2})$,\\
(ii)$ \|\zeta^*_{t+1}-\zeta^*_{t}\|\leq \frac{1}{\lambda_2(B)}\|\nabla_2 g(x_{t+1}, {y}^*_{t+1})-\nabla_2 g(x_t, {y}^*_t)  \|$,\\
 where $B$ is defined in (\ref{-a0}).
\end{lemma}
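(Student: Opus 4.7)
The plan is to exploit the spectral structure of $B = I - A$ together with the freedom in choosing the dual multipliers satisfying the KKT system (\ref{-a3}). Under Assumption 1, the graph is undirected and $A$ is row-stochastic, so $A$ is symmetric and doubly stochastic. Hence $B$ is symmetric positive semidefinite with a simple zero eigenvalue whose eigenspace is $\mathrm{span}(\mathbf{1}_n)$, and all other eigenvalues are at least $\lambda_2(B)$. A Kronecker-product calculation then identifies $\mathrm{Null}(B\otimes I_{m_2}) = \mathbf{1}_n\otimes\mathbb{R}^{m_2}$, whose orthogonal complement is exactly $\mathrm{Null}(\mathbf{1}_n^T\otimes I_{m_2})$.

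The first move is to pin down a canonical representative. The KKT system (\ref{-a3}) determines $\lambda_t^*$ only up to an element of $\mathrm{Null}(B\otimes I_{m_2})=\mathbf{1}_n\otimes\mathbb{R}^{m_2}$, since adding $\mathbf{1}_n\otimes w$ leaves $(B\otimes I_{m_2})\lambda_t^*$ unchanged. I would therefore select, for each $t$, the unique representative $\lambda_t^*\in \mathrm{Null}(\mathbf{1}_n^T\otimes I_{m_2})$, i.e.\ the orthogonal projection onto the complement of the null space. This is the sequence $\{\lambda_t^*\}$ whose existence part (i) claims.

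For part (i), take any $\lambda\in \mathrm{Null}(\mathbf{1}_n^T\otimes I_{m_2})$. Since $\lambda_t^*$ also lies in that subspace by construction, so does $\lambda_t^*-\lambda$; in particular $\lambda_t^*-\lambda$ is orthogonal to $\mathrm{Null}(B\otimes I_{m_2})$. The spectral decomposition of $B\otimes I_{m_2}$ (whose nonzero eigenvalues are precisely the nonzero eigenvalues of $B$ repeated $m_2$ times) then yields
\begin{equation*}
\|(B\otimes I_{m_2})(\lambda_t^*-\lambda)\|\geq \lambda_2(B)\,\|\lambda_t^*-\lambda\|,
\end{equation*}
which is (i).

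For part (ii), I would write (\ref{-a3}) at times $t$ and $t+1$ and subtract, obtaining
\begin{equation*}
(B\otimes I_{m_2})(\lambda_{t+1}^*-\lambda_t^*)=-\bigl(\nabla_2 g(x_{t+1},y^*_{t+1})-\nabla_2 g(x_t,y^*_t)\bigr),
\end{equation*}
so that taking norms gives $\|(B\otimes I_{m_2})(\lambda_{t+1}^*-\lambda_t^*)\|=\|\nabla_2 g(x_{t+1},y^*_{t+1})-\nabla_2 g(x_t,y^*_t)\|$. Because both $\lambda_t^*$ and $\lambda_{t+1}^*$ were chosen in $\mathrm{Null}(\mathbf{1}_n^T\otimes I_{m_2})$, part (i) applied with $\lambda=\lambda_{t+1}^*$ bounds the left-hand side below by $\lambda_2(B)\|\lambda_{t+1}^*-\lambda_t^*\|$, and dividing yields (ii). The main subtlety, and the one point I would emphasize, is the consistent choice of canonical dual representatives across $t$; once that is fixed the argument is a short spectral computation combined with subtracting the optimality conditions.
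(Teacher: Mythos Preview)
Your proposal is correct and follows essentially the same approach as the paper: the paper also selects $\lambda_t^*$ in the range of $B\otimes I_{m_2}$ (which, since $B$ is symmetric, coincides with your $\mathrm{Null}(\mathbf{1}_n^T\otimes I_{m_2})$), uses the spectral decomposition of $B$ to obtain (i), and then subtracts the two KKT identities and invokes (i) with $\lambda=\lambda_{t+1}^*$ for (ii). The only cosmetic difference is that the paper argues the spectral lower bound via an explicit orthogonal diagonalization $U\Lambda^2 U^T$, whereas you phrase it as a direct eigenvalue bound on the subspace orthogonal to $\mathrm{Null}(B\otimes I_{m_2})$; these are the same computation.
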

\begin{proof} See APPENDIX. C for details.\end{proof}

Then, by (\ref{s1}) and (\ref{s4}), the estimated error between ${y}_{i,t}$ and the aggregation $\sigma(x_t)$ is provided, refer to the following lemma for details.
\begin{lemma}\label{LE2}
Under Assumptions \ref{as1}-\ref{as3}, by (\ref{s1}) and (\ref{s4}), for some $0<\gamma<1$,
\begin{equation}\label{-eq103}\begin{split}
\|y_{i,t}-\sigma(x_t)\|\leq \mathcal{O}\left(\gamma^t+\sum_{k=1}^t\gamma^{t-k}\eta_{k-1}\right),~~i\in\mathcal{V}. \\
\end{split}\end{equation}
\end{lemma}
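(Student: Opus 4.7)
The plan is to cast the primal-dual iteration (\ref{s1}) as a tracking problem for the time-varying KKT pair $(\tilde{y}^*_t,\lambda^*_t)$ defined by (\ref{-a3}), and to establish a one-step inequality of the form $V_{t+1}^{1/2}\leq \gamma V_t^{1/2}+C\eta_t$, where $V_t=\|\tilde{y}_t-\tilde{y}^*_t\|^2+\|\tilde{\lambda}_t-\lambda^*_t\|^2$ and $\gamma\in(0,1)$. Unrolling this recursion produces exactly the claimed rate, since $\|y_{i,t}-\sigma(x_t)\|\leq\|\tilde{y}_t-\tilde{y}^*_t\|\leq V_t^{1/2}$. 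Throughout, I work with the stacked form $\tilde{y}_{t+1}=\tilde{y}_t-\kappa((B\otimes I_{m_2})\tilde{\lambda}_t+\nabla_2 g(x_t,\tilde{y}_t))$ and $\tilde{\lambda}_{t+1}=\tilde{\lambda}_t+\kappa(B\otimes I_{m_2})\tilde{y}_{t+1}$, and pin down a specific sequence $\lambda^*_t\in \operatorname{Null}(\textbf{1}_n^T\otimes I_{m_2})^{\perp}$ using Lemma \ref{LE-1}, so that Lemma \ref{LE-1}(i) delivers the spectral-gap bound $\|(B\otimes I_{m_2})(\tilde{\lambda}_t-\lambda^*_t)\|\geq\lambda_2(B)\|\tilde{\lambda}_t-\lambda^*_t\|$.

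The first key step is the frozen-target contraction: with $\tilde{y}^*_t,\lambda^*_t$ held fixed, I expand $\|\tilde{y}_{t+1}-\tilde{y}^*_t\|^2+\|\tilde{\lambda}_{t+1}-\lambda^*_t\|^2$ using the KKT identities $\nabla_2g(x_t,\tilde{y}^*_t)+(B\otimes I_{m_2})\lambda^*_t=0$ and $(B\otimes I_{m_2})\tilde{y}^*_t=0$. After expansion, the antisymmetric cross-terms $\kappa\langle(B\otimes I_{m_2})(\tilde{\lambda}_t-\lambda^*_t),\tilde{y}_t-\tilde{y}^*_t\rangle$ cancel, leaving a main negative term $-2\kappa\langle\tilde{y}_t-\tilde{y}^*_t,\nabla_2g(x_t,\tilde{y}_t)-\nabla_2g(x_t,\tilde{y}^*_t)\rangle$ dominated by $-2\kappa\mu\|\tilde{y}_t-\tilde{y}^*_t\|^2$ through strong convexity, plus quadratic residuals controlled by $\kappa^2 L^2\|\tilde{y}_t-\tilde{y}^*_t\|^2$ and $\kappa^2\lambda_n(B)^2\|\tilde{\lambda}_t-\lambda^*_t\|^2$. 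The step-size conditions $\kappa<1/L$ and $\kappa<\mu/\lambda_n(B)^2$ ensure these residuals are dominated, and Lemma \ref{LE-1}(i) converts a $\kappa\lambda_2(B)^2$-portion of the $\tilde{y}$-error budget into $\lambda$-contraction. A careful accounting then gives $\|\tilde{y}_{t+1}-\tilde{y}^*_t\|^2+\|\tilde{\lambda}_{t+1}-\lambda^*_t\|^2\leq\gamma^2 V_t$ for some $\gamma\in(0,1)$ independent of $t$.

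The second key step is to bound the drift of the target. By the triangle inequality, $V_{t+1}^{1/2}\leq\gamma V_t^{1/2}+\|\tilde{y}^*_{t+1}-\tilde{y}^*_t\|+\|\lambda^*_{t+1}-\lambda^*_t\|$. The first drift term is bounded via (\ref{104-22}) as $\sqrt{n}\ell\|x_{t+1}-x_t\|$. The second is handled by Lemma \ref{LE-1}(ii), together with Lipschitz continuity of $\nabla_2 g$ (Assumption \ref{as3}) applied to both arguments, yielding a bound of the form $C'(\|x_{t+1}-x_t\|+\|\tilde{y}^*_{t+1}-\tilde{y}^*_t\|)$, i.e., ultimately a multiple of $\|x_{t+1}-x_t\|$. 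Finally, the Krasnoselskii-Mann update (\ref{s4}) together with the compactness of $\Omega$ (Assumption \ref{as2}) gives $\|x_{i,t+1}-x_{i,t}\|\leq 2K\eta_t$, so $\|x_{t+1}-x_t\|\leq 2\sqrt{n}K\eta_t$. Combining yields the recursion $V_{t+1}^{1/2}\leq\gamma V_t^{1/2}+C\eta_t$, which unrolls to $V_t^{1/2}\leq\gamma^t V_0^{1/2}+C\sum_{k=1}^t\gamma^{t-k}\eta_{k-1}$, and the claim follows.

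The main obstacle will be the contraction step: saddle-point primal-dual dynamics do not admit a negative Lyapunov drift for free, and one must carefully balance the strong-convexity gain $2\kappa\mu$ against the $\kappa^2$-residuals produced by squaring both updates, while simultaneously extracting contraction in the dual variable from the spectral gap $\lambda_2(B)$. In particular, since $(B\otimes I_{m_2})$ has a nontrivial kernel, the dual contraction only holds after the projection onto $\operatorname{Null}(\textbf{1}_n^T\otimes I_{m_2})^{\perp}$, which is precisely why the specific $\lambda^*_t$ sequence furnished by Lemma \ref{LE-1} is needed; verifying that the algorithm iterates $\tilde{\lambda}_t$ also stay (effectively) in this subspace, given the initialization $\lambda_{i,0}=0$, will be a subtle bookkeeping point.
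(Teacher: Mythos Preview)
Your proposal follows essentially the same approach as the paper's proof, and the key ideas---frozen-target contraction of the primal--dual pair via strong convexity and the spectral gap from Lemma~\ref{LE-1}(i), drift bound on the moving target via Lemma~\ref{LE-1}(ii) combined with (\ref{104-22}) and the Krasnoselskii--Mann increment bound $\|x_{t+1}-x_t\|\leq 2\sqrt{n}K\eta_t$, then unrolling the resulting scalar recursion---are all present and correctly identified.

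Two small points to watch when you carry out the details. First, the \emph{unweighted} Lyapunov function $V_t=\|\tilde{y}_t-\tilde{y}^*_t\|^2+\|\tilde{\lambda}_t-\lambda^*_t\|^2$ does not quite deliver the clean one-step contraction you claim: after the antisymmetric cross-terms cancel (as you correctly observe), the expansion of $\|\tilde{\lambda}_{t+1}-\lambda^*_t\|^2$ leaves behind the residual $\kappa^2\|(B\otimes I_{m_2})(\tilde{y}_{t+1}-\tilde{y}^*_t)\|^2\leq\kappa^2\lambda_n(B)^2\|\tilde{y}_{t+1}-\tilde{y}^*_t\|^2$, which involves the \emph{new} primal iterate and must be absorbed on the left. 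The paper resolves this by passing to the weighted error vector $e_t=\big[\sqrt{1-\kappa^2\lambda_n(B)^2}\,(\tilde{y}_t-\tilde{y}^*_t)^T,\,(\tilde{\lambda}_t-\lambda^*_t)^T\big]^T$, obtaining $\|e_{t+1}\|\leq\gamma\|e_t\|+\|r_t\|$ with $\gamma^2=\max\bigl(1-\kappa\mu(1-\kappa L),\,1-\kappa^2\lambda_2(B)^2\bigr)$; your ``careful accounting'' will lead you to the same device, and the final bound is equivalent since the weight is bounded away from $0$ and $1$. Second, a minor notational slip: $\lambda^*_t$ (and by induction $\tilde{\lambda}_t$, since $\lambda_{i,0}=0$ and the dual update adds an element of $\operatorname{Range}(B\otimes I_{m_2})$) lies in $\operatorname{Null}(\mathbf{1}_n^T\otimes I_{m_2})$, not its orthogonal complement---both have zero block-sum, which is exactly the hypothesis of Lemma~\ref{LE-1}(i). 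Your substantive reasoning about this point is correct; only the ${}^{\perp}$ is misplaced.
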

\begin{proof} See APPENDIX. D for details.
\end{proof}

Based on Lemma \ref{LE2}, by Assumption 2, we know that $y_i(t)$ is bounded. Moreover, if $\lim_{t\rightarrow\infty}\eta_t=0$, then $y_{i,t}$ converges to $\sigma(x_t)$ for any $i\in\mathcal{V}$. Thus, the aggregation is estimated by all players. Particularly, based on the proof of Lemma \ref{LE2}, we know that the bound of $\|y_{i,t}-\sigma(x_t)\|$ is influenced by the variation of actions, i.e., $\|x_{i,t+1}-x_{i,t}\|$, which results in the term $\sum_{k=1}^t\gamma^{t-k}\eta_{k-1}$ in (\ref{-eq103}) by using (\ref{s4}). If we define $f_i$ as Remark 1, and reduce problem (\ref{a2}) to be a distributed optimization problem, then the term $\|x_{i,t+1}-x_{i,t}\|$ is zero. Accordingly, the term $\sum_{k=1}^t\gamma^{t-k}\eta_{k-1}$ in (\ref{-eq103}) does not exist. Then the convergence rate of $y_{i,t}$ to the minimizer is improved to be linear, i.e.,  $\|y_{i,t}-\sigma^*\|\leq \mathcal{O}\left(\gamma^t\right)$, where $\sigma^*=\arg\min_{y}f_i(y)$. By (\ref{-eq103}), we know that $y_{i,t}$ is bounded for any $t\geq0$. Next, we will provide the estimated error of $v_{i,t}$ on the average of Hessian matrices $\nabla_{22}g_i(x_{i,t},y_{i,t})$, $i\in\mathcal{V}$.
\begin{lemma}\label{LE4}
Under Assumptions \ref{as1}-\ref{as3}, for some $0<\tau<1$,
\begin{equation*}\begin{split}
\|v_{i,t}-\bar{G}_{t}\|_\mathbb{F}&\leq \mathcal{O}\Big(\sum_{k=0}^t\tau^{t-k}\eta_{k}+(t+1)\tau^{t}\\
&~~~~+\sum_{k=0}^t\tau^{t-k}\sum_{s=0}^k\tau^{k-s}\eta_{s}\Big),~~i\in\mathcal{V}
\end{split}\end{equation*}
where $\bar{G}_t=\frac{\sum_{i=1}^n\nabla_{22}g_i(x_{i,t},y_{i,t})}{n}$.
\end{lemma}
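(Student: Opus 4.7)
The plan is to turn the update (\ref{s2}) into a perturbed linear consensus recursion for the tracking error $e_{i,t} := v_{i,t}-\bar{G}_t$, and then substitute bounds on the per-step variation of the local Hessians that come from Assumption \ref{as3} combined with Lemma \ref{LE2}.

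First I would verify that $\frac{1}{n}\sum_{i=1}^{n} v_{i,t} = \bar{G}_t$ for every $t\geq 0$. Under Assumption \ref{as1} the matrix $A$ is doubly stochastic, so summing (\ref{s2}) over $i$ yields $\sum_i v_{i,t+1}-\sum_i v_{i,t} = \sum_i(\nabla_{22}g_i(x_{i,t+1},y_{i,t+1})-\nabla_{22}g_i(x_{i,t},y_{i,t}))$, and the initialization $v_{i,0}=\nabla_{22}g_i(x_{i,0},y_{i,0})$ closes the induction. Subtracting $\bar{G}_{t+1}$ from both sides of (\ref{s2}), the error satisfies
$$e_{i,t+1} = \sum_{j\in\mathcal{N}_i} a_{ij}\, e_{j,t} + \delta_i(t+1),$$
where $\delta_i(t+1) = h_i(t+1)-h_i(t) - (\bar{G}_{t+1}-\bar{G}_t)$ with $h_i(t):=\nabla_{22}g_i(x_{i,t},y_{i,t})$. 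The stacked perturbation has zero row-sum over $i$, so it lies in the subspace orthogonal to $\mathbf{1}_n$; on that subspace $A$ is a strict contraction with some rate $\tau_A\in(0,1)$ under Assumption \ref{as1}. Iterating and using $e_{i,0}=0$ would then give
$$\|e_{i,t}\|_{\mathbb{F}} \leq C_1\, \tau_A^t + C_2 \sum_{k=1}^{t} \tau_A^{t-k}\, \max_j \|\delta_j(k)\|_{\mathbb{F}}.$$

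Next I would bound the forcing $\|\delta_i(k)\|_{\mathbb{F}}$. By the Lipschitz continuity of $\nabla_{22}g_i$ in Assumption \ref{as3},
$$\|\delta_i(k)\|_{\mathbb{F}} \leq 2L\bigl(\|x_{i,k}-x_{i,k-1}\| + \|y_{i,k}-y_{i,k-1}\|\bigr).$$
The first summand is $\mathcal{O}(\eta_{k-1})$, because (\ref{s4}) writes $x_{i,k}-x_{i,k-1} = \eta_{k-1}\bigl(P_{\Omega_i}[x_{i,k-1}-k\hat{F}_{i,k-1}]-x_{i,k-1}\bigr)$ and both points lie in the compact set $\Omega_i$ by Assumption \ref{as2}. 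For the second summand I would telescope through the aggregation,
$$\|y_{i,k}-y_{i,k-1}\| \leq \|y_{i,k}-\sigma(x_k)\| + \|\sigma(x_k)-\sigma(x_{k-1})\| + \|\sigma(x_{k-1})-y_{i,k-1}\|,$$
apply Lemma \ref{LE2} at times $k$ and $k-1$ to the outer terms, and use (\ref{104-22}) together with the bound $\|x_k-x_{k-1}\|=\mathcal{O}(\eta_{k-1})$ on the middle term, obtaining $\|y_{i,k}-y_{i,k-1}\|\leq \mathcal{O}\bigl(\gamma^{k-1}+\sum_{s=1}^{k-1}\gamma^{k-1-s}\eta_{s-1}+\eta_{k-1}\bigr)$.

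Finally, I would substitute these bounds back into the recursion and enlarge the contraction rate to $\tau:=\max(\tau_A,\gamma)\in(0,1)$. The $\eta_{k-1}$ contribution collapses to the single-sum term $\sum_{k=0}^{t}\tau^{t-k}\eta_k$; the $\gamma^{k-1}$ contribution gives the convolution $\sum_{k=1}^{t}\tau^{t-k}\gamma^{k-1}$, which is dominated by $(t+1)\tau^{t}$ once $\tau\geq\gamma$; and the inner sum from the bound on $\|y_{i,k}-y_{i,k-1}\|$ produces the double-sum term $\sum_{k=0}^{t}\tau^{t-k}\sum_{s=0}^{k}\tau^{k-s}\eta_s$. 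This yields the three terms claimed in the lemma. The main obstacle is controlling the step-to-step variation $\|y_{i,k}-y_{i,k-1}\|$ in exactly this form, since Lemma \ref{LE2} controls only the distance of $y_{i,k}$ from $\sigma(x_k)$ rather than from $y_{i,k-1}$; the triangle inequality through $\sigma$ together with the Lipschitz estimate (\ref{104-22}) is precisely what closes this gap and couples the average-tracking recursion for $v_{i,t}$ to the outer-level step-size schedule $\eta_t$.
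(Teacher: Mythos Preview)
Your approach is essentially the same as the paper's: establish $\bar v_t=\bar G_t$, write the tracking error as a perturbed consensus recursion (the paper unrolls via the transition-matrix bound $|[A^t]_{ij}-1/n|\leq\sqrt{n}\rho^t$ rather than invoking subspace contraction, but this is equivalent), bound the forcing via the Lipschitz continuity of $\nabla_{22}g_i$ together with the triangle inequality $\|y_{i,k}-y_{i,k-1}\|\leq\|y_{i,k}-\sigma(x_k)\|+\ell\|x_k-x_{k-1}\|+\|y_{i,k-1}-\sigma(x_{k-1})\|$ and Lemma~\ref{LE2}, and set $\tau=\max(\gamma,\rho)$. One small slip: $e_{i,0}=v_{i,0}-\bar G_0=\nabla_{22}g_i(x_{i,0},y_{i,0})-\bar G_0$ is not zero in general; this nonzero initial error is exactly what produces the $C_1\tau_A^t$ term you (correctly) retain, and the paper handles it the same way through the $\|v_0\|_{\mathbb{F}}\rho^t$ contribution.
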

\begin{proof}
 See APPENDIX. E for details.
\end{proof}

By (\ref{a5001}), to achieve the gradient information in the outer level, it is necessary to compute the term $-(\sum_{i=1}^n\nabla_{22}g_i(x_{i, t}, $ $\sigma(x_t)))^{-1}\nabla_2J_i(x_{i,t},\sigma(x_t))$. For simplicity, we denote
\begin{equation}\label{eq11-26}
h_{i, t}=-\left(\sum_{i=1}^n\nabla_{22}g_i(x_{i,t},\sigma(x_t))\right)^{-1}\nabla_2J_i(x_{i,t},\sigma(x_t)).
\end{equation}
In Algorithm 1, $z_{i,t}$ is the estimate on $-(\sum_{i=1}^n\nabla_{22}g_i(x_{i,t}, $ $y_{i,t}))^{-1}\nabla_2J_i(x_{i,t},y_{i,t})$, while $y_{i,t}$ is the estimate on $\sigma(x_t)$. 
In what follows, a preliminary result on the difference between $z_{i,t}$ and the $h_{i, t}$ will be provided.
\begin{lemma}\label{LE5}
Under Assumptions \ref{as1}-\ref{as3}, for any $i\in \mathcal{V}$,
\begin{equation}\label{eq116}\begin{split}
  &\|z_{i,t+1}-h_{i, t+1}\|\\
  &\leq\Big(\vartheta+r_1\| \bar{G}_t-v_{i,t}\|_\mathbb{F}\Big)\|z_{i,t}-h_{i, t}\|\\
   &~~~+\frac{4LK\ell\eta_t}{\mu}+r_2\| \bar{G}_t-v_{i,t}\|_\mathbb{F}\\
   &~~~+r_3\Big(2+\vartheta+r_1\| \bar{G}_t-v_{i,t}\|_\mathbb{F}\Big)\sum_{i=1}^n\|y_{i,t}-\sigma(x_t)\|\\
  \end{split}\end{equation}
where $h_{i, t}$ is defined in (\ref{eq11-26}), $\vartheta=\sqrt{1-{n\mu\alpha}+2n^2K_0^2\alpha^2}<1$, $r_1=\sqrt{\frac{2\alpha n}{\mu}+4n^2\alpha^2}$, $r_2=$ $\sqrt{\frac{2\alpha K_1^2}{n\mu^3}+\frac{4K_1^2}{\mu^2}}$, and $r_3=\frac{nK_1L}{\mu^2}+\frac{nL}{\mu}$.
\end{lemma}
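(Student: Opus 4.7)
The strategy is to split the error via an intermediate target. Define the auxiliary quantity $\bar{h}_{i,t} := -(n\bar{G}_t)^{-1}\nabla_2 J_i(x_{i,t},y_{i,t})$ with $\bar{G}_t$ as in Lemma~\ref{LE4}; this is the unique minimizer of the quadratic (\ref{a5003}) that step~(\ref{s3}) is effectively aiming at when it replaces the exact Hessian $\bar{G}_t$ by the inexact estimate $v_{i,t}$. The target $h_{i,t}$ is the same expression with every $y_{j,t}$ replaced by the true aggregation $\sigma(x_t)$. The plan is then to apply the triangle inequality
\[\|z_{i,t+1}-h_{i,t+1}\| \le \|z_{i,t+1}-\bar{h}_{i,t}\| + \|\bar{h}_{i,t}-h_{i,t}\| + \|h_{i,t}-h_{i,t+1}\|,\]
control the first term by a one-step contraction of the inner gradient-descent loop, and bound the other two by Lipschitz estimates.

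\textbf{Contraction of the inner loop.} Using $n\bar{G}_t\bar{h}_{i,t}=-\nabla_2 J_i(x_{i,t},y_{i,t})$, the update (\ref{s3}) rewrites as
\[z_{i,t+1}-\bar{h}_{i,t} = (I-\alpha n\bar{G}_t)(z_{i,t}-\bar{h}_{i,t}) - \alpha n(v_{i,t}-\bar{G}_t)z_{i,t}.\]
Squaring, using the $\mu$-strong convexity and $K_0$-smoothness of $\bar{G}_t$ on the exact part, and absorbing the cross term $-2\alpha n\langle(I-\alpha n\bar{G}_t)(z_{i,t}-\bar{h}_{i,t}),(v_{i,t}-\bar{G}_t)z_{i,t}\rangle$ via Young's inequality with weight $n\mu\alpha$, produces
\[\|z_{i,t+1}-\bar{h}_{i,t}\|^2 \le \vartheta^2\|z_{i,t}-\bar{h}_{i,t}\|^2 + \bigl(\tfrac{\alpha n}{\mu}+2\alpha^2 n^2\bigr)\|v_{i,t}-\bar{G}_t\|_\mathbb{F}^2\|z_{i,t}\|^2.\]
Splitting $\|z_{i,t}\|^2\le 2\|z_{i,t}-\bar{h}_{i,t}\|^2+2\|\bar{h}_{i,t}\|^2$ together with $\|\bar{h}_{i,t}\|\le K_1/(n\mu)$ produces precisely the coefficients $r_1$ and $r_2$, and $\sqrt{a+b}\le\sqrt{a}+\sqrt{b}$ then delivers
\[\|z_{i,t+1}-\bar{h}_{i,t}\|\le(\vartheta+r_1\|v_{i,t}-\bar{G}_t\|_\mathbb{F})\|z_{i,t}-\bar{h}_{i,t}\|+r_2\|v_{i,t}-\bar{G}_t\|_\mathbb{F}.\]

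\textbf{Mismatch and drift.} For the remaining two terms the identity $A^{-1}-B^{-1}=A^{-1}(B-A)B^{-1}$ is the workhorse. Applied to $(n\bar{G}_t)^{-1}$ versus $(\sum_j\nabla_{22}g_j(x_{j,t},\sigma(x_t)))^{-1}$, together with Assumption~\ref{as3} and the bound $\|\nabla_2 J_i\|\le K_1$ from Assumption~\ref{as2}, it bounds $\|\bar{h}_{i,t}-h_{i,t}\|$ by a constant multiple of $\sum_{j=1}^n\|y_{j,t}-\sigma(x_t)\|$. Applied between consecutive times it reduces $\|h_{i,t}-h_{i,t+1}\|$ to a constant times $\|x_{t+1}-x_t\|$; from (\ref{s4}) and compactness of $\Omega_i$ we have $\|x_{i,t+1}-x_{i,t}\|\le 2K\eta_t$, and combining with the Lipschitz constant $\ell$ of $\sigma$ from (\ref{104-22}) this assembles into the $\frac{4LK\ell\eta_t}{\mu}$ term. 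Finally, chaining $\|z_{i,t}-\bar{h}_{i,t}\|\le\|z_{i,t}-h_{i,t}\|+\|h_{i,t}-\bar{h}_{i,t}\|$ on the right-hand side of the contraction and absorbing every residual $\|\bar{h}-h\|$ into the mismatch bound produces the $(2+\vartheta+r_1\|\bar{G}_t-v_{i,t}\|_\mathbb{F})$ multiplier on $\sum_i\|y_{i,t}-\sigma(x_t)\|$ that appears in the statement.

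\textbf{Main obstacle.} The delicate step is the contraction, because $v_{i,t}$ is neither symmetric nor positive semidefinite in general, so one cannot treat $I-\alpha n v_{i,t}$ spectrally. The perturbation $v_{i,t}-\bar{G}_t$ must be routed through Young's inequality rather than through an operator-norm triangle inequality, and the precise trade-off between how much of the $n\mu\alpha$ contraction is spent absorbing the cross term and how large a residual $\|v_{i,t}-\bar{G}_t\|_\mathbb{F}^2\|z_{i,t}\|^2$ factor is produced is exactly what pins down the specific form of $\vartheta$, $r_1$, and $r_2$. This trade-off is also what forces the step-size restriction $\alpha<\mu/(2nK_0^2)$, which guarantees $\vartheta<1$ and hence makes the recursion genuinely contractive once the Hessian-tracking error $\|v_{i,t}-\bar{G}_t\|_\mathbb{F}$ becomes small.
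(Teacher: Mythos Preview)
Your proposal is correct and follows essentially the same route as the paper: your auxiliary $\bar h_{i,t}$ is exactly the paper's $z^*_{i,t}=-(n\bar G_t)^{-1}\nabla_2 J_i(x_{i,t},y_{i,t})$, and the three-term decomposition (contraction toward $\bar h_{i,t}$, mismatch $\|\bar h_{i,t}-h_{i,t}\|$ via $A^{-1}-B^{-1}=A^{-1}(B-A)B^{-1}$, drift $\|h_{i,t+1}-h_{i,t}\|$ via Lipschitz continuity and the step bound on $\|x_{t+1}-x_t\|$, followed by chaining $\|z_{i,t}-\bar h_{i,t}\|\le\|z_{i,t}-h_{i,t}\|+\|h_{i,t}-\bar h_{i,t}\|$) matches the paper's argument step for step. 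The only cosmetic difference is that the paper expands $\|z_{i,t+1}-z^*_{i,t}\|^2$ by keeping the full quadratic term $\alpha^2\|nv_{i,t}z_{i,t}-M_tz^*_{i,t}\|^2$ and then splitting it, rather than first isolating $(I-\alpha n\bar G_t)e$ as you do; this leads to the same $\vartheta,r_1,r_2$ after the analogous Young and $\|z_{i,t}\|^2\le 2\|e\|^2+2\|z^*_{i,t}\|^2$ steps.
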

\begin{proof}See APPENDIX. F for details.\end{proof}

Now we present the convergence result on Algorithm 1 in the following theorem.
\begin{theorem}
Under Assumptions \ref{as1}-\ref{as4}, if $\eta_{t}=\frac{b}{t+a}$ for some $a>\max (2k(\theta-kL_0)b, b)$ and $b\geq \max \left(\frac{1}{2k(\theta-kL_0)}, 1\right)$, then by Algorithm 1, both $\|x_{t}-x^*\|$ and $\|y_{i,t}-\sigma(x^*)\|$, $\forall~i\in\mathcal{V}$ converge to zero with the convergence rate $\mathcal{O}(\sqrt{\ln t/t})$, where $x^*$ is defined in Definition 1.
\end{theorem}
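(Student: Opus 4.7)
The plan is to reduce the outer–level analysis to a one–dimensional recursion of the form
\[
u_{t+1} \leq (1 - c\eta_t)\, u_t + \mathcal{O}(\eta_t^2), \qquad u_t := \|x_t - x^*\|^2,
\]
with $c = 2k(\theta-kL_0)>0$, and then solve it under the harmonic schedule $\eta_t=b/(t+a)$. The corresponding bound on $\|y_{i,t}-\sigma(x^*)\|$ follows immediately from Lemma \ref{LE2} and (\ref{104-22}).

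First, I rewrite the update (\ref{s4}) as $x_{i,t+1}=(1-\eta_t)x_{i,t}+\eta_t\tilde{x}_{i,t+1}$ with $\tilde{x}_{i,t+1}=P_{\Omega_i}[x_{i,t}-k\hat{F}_{i,t}]$. Convexity of $\|\cdot\|^2$ gives $\|x_{t+1}-x^*\|^2\leq(1-\eta_t)\|x_t-x^*\|^2+\eta_t\|\tilde{x}_{t+1}-x^*\|^2$. Lemma \ref{LE000} provides the fixed-point identity $x^*=P_\Omega[x^*-kF(x^*)]$; combining nonexpansiveness of $P_\Omega$ with Assumption \ref{as4} (strong monotonicity), Lemma \ref{LE0-00} (Lipschitzness of $F$), and the boundedness of $\Omega$ from Assumption \ref{as2}, I obtain
\[
\|\tilde{x}_{t+1}-x^*\|^2 \leq (1-c)\|x_t-x^*\|^2 + C_1\|\hat{F}_t-F(x_t)\| + C_2\|\hat{F}_t-F(x_t)\|^2,
\]
where the middle term uses $\|x_t-x^*\|\leq 2K$ to turn the cross term $\langle x_t-x^*-k(F(x_t)-F(x^*)),\,\hat{F}_t-F(x_t)\rangle$ into a first-order bound. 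Substituting back produces the scalar recursion, provided I can show $\|\hat{F}_t-F(x_t)\|=\mathcal{O}(\eta_t)$.

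Second, I verify the gradient-tracking bound. By Proposition 1 the true gradient satisfies $F_i(x_t)=\nabla_1 J_i(x_{i,t},\sigma(x_t))+\nabla_{21}g_i(x_{i,t},\sigma(x_t))\,h_{i,t}$, whereas the algorithm uses $\hat{F}_{i,t}=\nabla_1 J_i(x_{i,t},y_{i,t})+\nabla_{21}g_i(x_{i,t},y_{i,t})\,z_{i,t}$. Under Assumptions \ref{as2}--\ref{as3} this yields $\|\hat{F}_{i,t}-F_i(x_t)\|\leq C_3(\|y_{i,t}-\sigma(x_t)\|+\|z_{i,t}-h_{i,t}\|)$. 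Plugging $\eta_t=b/(t+a)$ into Lemma \ref{LE2}, the geometric--harmonic convolution collapses to $\sum_{k=1}^t\gamma^{t-k}\eta_{k-1}=\mathcal{O}(\eta_t)$, so $\|y_{i,t}-\sigma(x_t)\|=\mathcal{O}(\eta_t)$. The same substitution in Lemma \ref{LE4} gives $\|v_{i,t}-\bar{G}_t\|_\mathbb{F}=\mathcal{O}(\eta_t)$, whence eventually $\vartheta+r_1\|v_{i,t}-\bar{G}_t\|_\mathbb{F}<1$; the resulting contractive-plus-perturbed recursion of Lemma \ref{LE5} with inputs $\mathcal{O}(\eta_t)$ delivers $\|z_{i,t}-h_{i,t}\|=\mathcal{O}(\eta_t)$, and hence $\|\hat{F}_t-F(x_t)\|=\mathcal{O}(\eta_t)$.

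Third, the scalar recursion becomes
\[
u_{t+1} \leq \left(1-\tfrac{cb}{t+a}\right) u_t + \tfrac{D}{(t+a)^2}.
\]
The hypothesis $b\geq 1/(2k(\theta-kL_0))$ forces $cb\geq 1$; in the borderline case $cb=1$, a standard induction (or telescoping together with the comparison $\sum_{s\leq t}1/s\sim\ln t$) gives $u_t=\mathcal{O}(\ln t/t)$, i.e.\ $\|x_t-x^*\|=\mathcal{O}(\sqrt{\ln t/t})$; the auxiliary condition $a>\max(cb,b)$ ensures $\eta_0<1$ and makes the induction start. Finally, combining Lemma \ref{LE2} and (\ref{104-22}) via the triangle inequality,
\[
\|y_{i,t}-\sigma(x^*)\| \leq \|y_{i,t}-\sigma(x_t)\| + \ell\|x_t-x^*\| = \mathcal{O}(\sqrt{\ln t/t}).
\]

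The main obstacle is the circular dependence in the second step: the contractivity of the $z$-recursion in Lemma \ref{LE5} presupposes that $\|v_{i,t}-\bar{G}_t\|_\mathbb{F}$ is already small, while Lemma \ref{LE4}'s bound on the $v$-error relies on the variation $\|x_{t+1}-x_t\|+\|y_{t+1}-y_t\|$, which is only controlled once we know $\eta_t$ is small \emph{and} the $y$-error itself is small. Breaking this loop rigorously requires either a two-phase argument (first establish boundedness and a uniform $\eta_t$-smallness from some time $t_0$ onward, then activate the three contractions) or a joint induction on the three errors $(\|y_{i,t}-\sigma(x_t)\|,\|v_{i,t}-\bar G_t\|_\mathbb{F},\|z_{i,t}-h_{i,t}\|)$ all decaying at the common rate $\mathcal{O}(\eta_t)$; this is where the bulk of the technical care is needed.
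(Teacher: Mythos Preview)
Your proposal is correct and follows the same route as the paper's proof: derive a contractive recursion for $\|x_t-x^*\|^2$ via the Krasnoselskii--Mann update, nonexpansiveness of $P_\Omega$, strong monotonicity (Assumption~\ref{as4}), and Lipschitzness of $F$ (Lemma~\ref{LE0-00}); bound $\|\hat F_t-F(x_t)\|$ by the $y$- and $z$-errors; reduce the geometric--harmonic convolutions in Lemmas~\ref{LE2}, \ref{LE4}, \ref{LE5} to $\mathcal{O}(\eta_t)$ (the paper packages this as a separate estimate on $\sum_s r^{-s}/(s+a)$); and finish with a scalar lemma of the form $u_{t+1}\leq(1-cb/(t+a))u_t+D/(t+a)^2\Rightarrow u_t=\mathcal{O}(\ln t/t)$.

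The ``circular dependence'' you worry about at the end is not actually there; the chain is strictly sequential. Compactness of $\Omega$ (Assumption~\ref{as2}) gives $\|x_{t+1}-x_t\|\leq 2\sqrt{n}K\eta_t$ \emph{a priori}, before any convergence is established. This alone drives Lemma~\ref{LE2} to bound $\|y_{i,t}-\sigma(x_t)\|$ purely in terms of $\{\eta_s\}_{s\leq t}$; that in turn bounds $\|y_{i,t+1}-y_{i,t}\|$ via the triangle inequality through $\sigma(x_t)$, which is exactly the input Lemma~\ref{LE4} needs; and the resulting $v$-error bound feeds into Lemma~\ref{LE5}. No joint induction is required. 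The only place a $t_0$-argument is genuinely needed is the one you already handled correctly: waiting until $r_1\|v_{i,t}-\bar G_t\|_{\mathbb F}\leq(1-\vartheta)/2$ so that the $z$-recursion in Lemma~\ref{LE5} becomes contractive with factor $(1+\vartheta)/2<1$. The paper does precisely this, and nothing more elaborate.
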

\begin{proof}See APPENDIX. G for details.\end{proof}

\begin{remark}Based on Theorem 1, we know that the convergence rate of Algorithm 1 follows $\mathcal{O}(\sqrt{\ln t/t})$, which is faster than the common convergence rate $\mathcal{O}(\ln t/\sqrt{t})$ of distributed optimization algorithms with decaying step-size \cite{Duchi, Olshevsky}. Although the Krasnoselskii-Mann iterative strategy in (\ref{eq2001}) slows the convergence rate of the projected gradient algorithm, the fixed step-sizes $\alpha, \kappa$ and $k$ help to improve the convergence rate of the whole algorithm.\end{remark}

\section{Distributed methods involving first order gradient information}\label{Se2}
Note that when running Algorithm 1, it is necessary to compute the Hessian matrix. However, in many practical applications such as meta learning systems \cite{1AA}, the loss functions are usually high-dimensional mappings, so computing or estimating the
second order gradient information is rather expensive, sometimes impossible. This factor motivates the developments of distributed algorithms involving first order gradient information.
\subsection{Algorithm design}
To avoid the usage of Hessian matrices in Algorithm 1, now we directly estimate the term $(\nabla_{x_i} \sigma(x_t))^T\nabla_2 J_i(x_{i, t},\sigma(x_t))$. To do this, we define an auxiliary variable as follows.
\begin{equation}\label{eq113001}\begin{split}
D_{i,t}=\frac{\nabla_1 g_i(x_{i,t}, y_{i}(\delta_i,x_t))-\nabla_1 g_i(x_{i,t}, \sigma(x_t))}{\delta_i}, i\in\mathcal{V}
\end{split}\end{equation}
where $y_{i}(\delta_i,x_t)$ is determined by the following auxiliary optimization problem
\begin{equation}\label{eq2001}\begin{split}
y_{i}(\delta_i,x_t)=\arg\min_{y_i}\delta_iJ_i(x_{i, t},y_i)+\sum_{j=1}^ng_j(x_{j, t}, y_i)
\end{split}\end{equation}
for some $\delta_i>0$. $D_{i,t}$ can be viewed as an estimate on the term $(\nabla_{x_i} \sigma(x_t))^T\nabla_2 J_i(x_{i, t},\sigma(x_t))$. Particularly, if $\delta_i=0$, then $y_{i}(0,x_t)=\sigma(x_t)$, where $\sigma(x)$ defined in (\ref{a2}). By (\ref{eq2001}), we know that global information is involved in $y_{i}(\delta_i,x_t))$, we need to further estimate this variable. Before going on, the following assumption is presented.
\begin{assumption}\label{as5}
For any $x_i\in\Omega_i$ and $i\in\mathcal{V}$, $J_i(x_i,\cdot)$ is convex. Moreover, $\|\nabla_{22}J_i(x_{i}, y_i)\|_\mathbb{F}\leq\hbar$, $y_i\in\mathbb{R}^{m_2}$ for some $\hbar>0$.
 \end{assumption}

Since in game $\Gamma(\mathcal{V}, \Omega, {J})$, each player's cost is determined by the aggregation $\sigma(x)$, then $J_i$ may not be convex with respect to $x_{-i}$        under Assumption 5. Moreover, by the strong convexity of $g_i(x_{i, t},\cdot)$, and using Assumption 5, we know that $\delta_iJ_i(x_{i, t},y_i)+\sum_{j=1}^ng_j(x_{j, t}, y_i)$ is $\mu-$strongly convex with respect to the second argument. Accordingly, $y_{i,t}(\delta_i,x_t)$ is the unique solution of (\ref{eq2001}). Due to the differences of $y_i(\delta_i, x_t)$ from $y_k(\delta_k, x_t)$ for any $i, k\in\mathcal{V}$, here we employ $w_{ik,t}\in \mathbb{R}^{m_2}$ to represent the estimate of player $k$ on $y_i(\delta_i, x_t)$. To estimate $y_i(\delta_i, x_t)$ for each player, the following distributed primal-dual algorithm is proposed
\begin{equation}\label{eq2002}
 \left\{\begin{split}
&w_{ik,t+1}=w_{ik,t}+\beta_i\Big(\sum_{j \in {\mathcal{N}_k}}a_{kj}u_{ij,t}-u_{ik,t}\\
&~~~~~~-\nabla_2g_k(x_{k,t},w_{ik,t})-c_{ki}\delta_i\nabla_2J_i(x_{i, t},w_{ii,t})\Big)\\
&u_{ik,t+1}=u_{i,t}-\beta_i\Big(\sum_{j \in {\mathcal{N}_k}}a_{kj}w_{ij,t+1}-w_{ik,t+1}\Big)
\end{split}
\right.
\end{equation}
where $c_{k i}=1$ if $i=k$, $c_{ki}=0$ otherwise, $u_{ik, t}\in  \mathbb{R}^{m_2}$ is the dual variable with initial value $u_{ik, 0}=0$, and $\beta_i>0$ is the step-size.
To handle problem (\ref{a2}), now we propose the FOGD algorithm, see Algorithm 2 for details. In Algorithm 2, we use $d_{i,t}$ to estimate $(\nabla_{x_i} \sigma(x_t))^T\nabla_2 J_i(x_{i, t},$ $ \sigma(x_t))$ by the estimate strategy (\ref{s33}). By Algorithm 2, the Hessian matrix information of the objective functions in the inner level is not required any more. Thus, Algorithm 2 is applicable to practical applications where computing the second order gradient information is rather expensive or impossible. 
\\
\noindent
\rule[0\baselineskip]{8.9cm}{1pt}
\emph{Algorithm 2: FOGD algorithm}\\
\rule[0.53\baselineskip]{8.9cm}{1pt}
\textbullet~\textbf{Initialization:} At each iteration time $t\geq0$, each player $i$ for any $i\in\mathcal{V}$
maintains its action $x_{i, t}\in\mathbb{R}^{m_1}$, the estimate $y_{i, t}\in\mathbb{R}^{m_2}$ on $\sigma(x_t)$ on the aggregation, and the estimate $w_{ik,t}$ on $y_{i}(\delta_i,x_t)$.
Set the initial values as
$x_{i, 0} $ $\in \Omega_i$, $y_{i, 0}\in \mathbb{R}^{m_2}$,  $w_{i, 0}\in \mathbb{R}^{m_2}$.\\
\textbullet~\textbf{Iteration: } For any $t=0, 1, \cdots$ and $i\in\mathcal{V}$, each player $i$ updates variables using the following rules.

\textbullet~Update variables $y_{i,t}$ and {\color{blue}$\zeta_{i,t}$} in the inner level as (\ref{s1}).

\textbullet~Update $w_{ik,t}$ as (\ref{eq2002}) with $0<\beta_i<\min\Big(\frac{1}{L+\hbar\delta_i},$ $ \frac{\mu}{(\lambda_n(I-A))^2}\Big)$.

\textbullet~Compute the estimate on the gradient
\begin{equation}\label{s33}
 \left\{
\begin{split}
&\hat{F}_{i,t}=\nabla_1J_i(x_{i,t},y_{i,t})+d_{i,t}\\
&d_{i,t}=\frac{\nabla_1 g_i(x_{i,t}, w_{ii,t})-\nabla_1 g_i(x_{i,t}, y_{i,t})}{\delta_i}
\end{split}
\right.
\end{equation}
where $\delta_i>0$ is the estimate parameter.

\textbullet~Update action $x_{i,t}$ in the outer level by the following projected gradient strategy
\begin{equation}\label{s44}
\begin{split}
&x_{i,t+1}=(1-\eta_t)x_{i,t}+\eta_tP_{\Omega_i}\big[x_{i,t}-k\hat{F}_{i,t}\big]\\
\end{split}
\end{equation}
where $\eta_t$ is a nonincreasing step-size such that $0\leq\eta_t\leq 1$, and $\frac{\theta}{L_0}> k>0$ is constant.\\
\rule[0.3\baselineskip]{8.9cm}{1pt}

\subsection{Convergence results}
In this section, we will provide the convergence analysis of Algorithm 2. First, in the next lemma, the relationship between $D_{i,t}$ and $(\nabla_{x_i} \sigma(x_t))^T\nabla_2 J_i(x_{i, t},\sigma(x_t))$ is established.
\begin{lemma}\label{LE10}
Under Assumptions 2, 3 and 5,  for any $\delta_i>0$ and $i\in\mathcal{V}$,
\begin{equation*}\label{eq3001}\begin{split}
\|D_{i,t}-(\nabla_{x_i} \sigma(x_t))^T\nabla_2 J_i(x_{i, t},\sigma(x_t))\|\leq\mathcal{C}\delta_i
\end{split}\end{equation*}
where $\mathcal{C}=\frac{LK_1^2}{2n\mu^2}+\frac{\mathcal{C}_0K_2}{2}$, $\mathcal{C}_0=\frac{\sqrt{n}K_1^2L(1+\hbar)+\sqrt{n}\mu K_1L}{\mu^3}$, and $D_{i,t}$ is defined in $(\ref{eq113001})$.
\end{lemma}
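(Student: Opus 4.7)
The plan is to treat $y_i(\delta_i,x_t)$ as an $O(\delta_i)$-perturbation of $\sigma(x_t)$, perform first-order Taylor expansions, and match the resulting leading-order finite-difference term against the implicit-function formula from Proposition 1. Conceptually, $D_{i,t}$ is a finite-difference approximation to the sensitivity of the inner minimizer with respect to adding $\delta_i J_i$ into the inner objective, and Proposition 1 supplies the closed-form version of that sensitivity; the task is to show the two agree up to $O(\delta_i)$.

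I would first establish the a priori bound $\|y_i(\delta_i,x_t)-\sigma(x_t)\|\leq K_1\delta_i/\mu$. Subtracting the KKT conditions of (\ref{eq2001}) and of the inner problem of (\ref{a2}) gives
\[\sum_{j=1}^n \bigl[\nabla_2 g_j(x_{j,t},y_i(\delta_i,x_t))-\nabla_2 g_j(x_{j,t},\sigma(x_t))\bigr] = -\delta_i\nabla_2 J_i(x_{i,t},y_i(\delta_i,x_t)).\]
Taking the inner product with $y_i(\delta_i,x_t)-\sigma(x_t)$, invoking $\mu$-strong monotonicity of $\sum_j\nabla_2 g_j(x_{j,t},\cdot)$, Cauchy-Schwarz, and the uniform bound $\|\nabla_2 J_i\|\leq K_1$ from Assumption \ref{as2} yields the claim. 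This a priori control is what turns every subsequent $O(\|y_i(\delta_i,x_t)-\sigma(x_t)\|)$ error into the target $O(\delta_i)$ error.

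Next I would apply the integral mean-value form of Taylor's theorem twice. Applied to $\nabla_2 g_i$ it yields $D_{i,t}=\bar H_i\,(y_i(\delta_i,x_t)-\sigma(x_t))/\delta_i$ for an averaged Hessian $\bar H_i$; applied to the difference of the two KKT conditions above, it yields $\bar M\,(y_i(\delta_i,x_t)-\sigma(x_t))=-\delta_i\,\nabla_2 J_i(x_{i,t},y_i(\delta_i,x_t))$ with $\bar M=\int_0^1\sum_{j=1}^n \nabla_{22}g_j(x_{j,t},\sigma(x_t)+s(y_i(\delta_i,x_t)-\sigma(x_t)))\,ds$. Eliminating $(y_i(\delta_i,x_t)-\sigma(x_t))/\delta_i$ gives $D_{i,t}=-\bar H_i\,\bar M^{-1}\,\nabla_2 J_i(x_{i,t},y_i(\delta_i,x_t))$. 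I would then freeze $\bar H_i$ at $\nabla_{22}g_i(x_{i,t},\sigma(x_t))$ and $\bar M$ at $M:=\sum_{j=1}^n \nabla_{22}g_j(x_{j,t},\sigma(x_t))$ using the $L$-Lipschitzness from Assumption \ref{as3}, replace the trailing $\nabla_2 J_i(x_{i,t},y_i(\delta_i,x_t))$ by $\nabla_2 J_i(x_{i,t},\sigma(x_t))$ using Assumption \ref{as5} and the step-1 bound, and absorb the inverse-perturbation through $\|\bar M^{-1}-M^{-1}\|\leq \|M^{-1}\|\|\bar M^{-1}\|\|\bar M-M\|$ with $\|M^{-1}\|\leq 1/\mu$. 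The leading term is then identified with $\nabla_{x_i}\sigma(x_t)\,\nabla_2 J_i(x_{i,t},\sigma(x_t))$ via Proposition 1.

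The main obstacle I anticipate is the bookkeeping of three distinct $O(\delta_i^2)$ remainders - the second-order Taylor remainder on $\nabla_2 g_i$ after division by $\delta_i$, the Frobenius-norm deviation $\|\bar M-M\|_{\mathbb F}$ propagated through the matrix-inverse perturbation identity, and the Lipschitz deviation of $\nabla_2 J_i$ between $\sigma(x_t)$ and $y_i(\delta_i,x_t)$ - combined so that the constants collapse into precisely $\mathcal{C}=\frac{LK_1^2}{2n\mu^2}+\frac{\mathcal{C}_0 K_2}{2}$ with $\mathcal{C}_0$ as stated. None of these steps is conceptually deep, but keeping track of the $\sqrt{n}$ factors coming from summing Frobenius-norm Hessian bounds and the $\hbar$ factor entering through Assumption \ref{as5} requires care to recover the exact constant.
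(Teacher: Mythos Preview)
Your approach is correct and reaches the same destination, but the paper organizes the calculus differently. Rather than applying two-point Taylor with averaged Hessians $\bar H_i,\bar M$ and then freezing them, the paper treats the perturbation weight as a continuous parameter $s\in[0,\delta_i]$: it differentiates the optimality condition for (\ref{eq2001}) in $s$ to obtain $\tfrac{d}{ds}y_i(s,x_t)$ explicitly (equation (\ref{eq2005})), recognizes that at $s=0$ this derivative premultiplied by the appropriate Jacobian is exactly $\nabla_{x_i}\sigma(x_t)\nabla_2 J_i(x_{i,t},\sigma(x_t))$ via Proposition~1, writes $D_{i,t}$ as a fundamental-theorem-of-calculus average over $s\in[0,\delta_i]$, and then bounds the integrand's drift from its $s=0$ value by $\mathcal O(s)$. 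Your elimination of $(y_i(\delta_i,x_t)-\sigma(x_t))/\delta_i$ via $\bar M$ is the two-endpoint version of the same implicit-function identity; the paper's continuous parametrization buys slightly cleaner constant tracking, since the $\tfrac12$ factors in $\mathcal C$ drop out directly from $\tfrac{1}{\delta_i}\int_0^{\delta_i}s\,ds=\delta_i/2$, which is precisely the bookkeeping concern you flag in your final paragraph. Your a-priori bound $\|y_i(\delta_i,x_t)-\sigma(x_t)\|\le K_1\delta_i/\mu$ corresponds to the paper's pointwise bound $\|dy_i/ds\|\le K_1/(\sqrt n\mu)$ integrated over $[0,\delta_i]$.

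One caution on your last identification step: Taylor-expanding the $\nabla_2 g_i$ in the definition (\ref{eq113001}) with respect to its second argument yields $\bar H_i\approx\nabla_{22}g_i(x_{i,t},\sigma(x_t))$, whereas Proposition~1 has $\nabla_{x_i}\sigma=-\nabla_{21}g_i(\cdot)M^{-1}$, so matching your frozen $-\bar H_i\bar M^{-1}\nabla_2 J_i$ to the target literally requires $\nabla_{21}$ rather than $\nabla_{22}$. The paper's own proof exhibits the same tension (its FTC step (\ref{eq2008}) writes $\nabla_{21}$ where chain rule in the second argument produces $\nabla_{22}$); this looks like a typesetting slip in the definition of $d_{i,t}$ and $D_{i,t}$ rather than a flaw in either argument, but you should be aware of it when writing up.
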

\begin{proof}
See APPENDIX. H for details.
 \end{proof}

Note that if $w_{ii, t}$ and $y_{i, t}$ approach to $y_{i}(\delta_i,x_t)$ and $\sigma(x_t)$, respectively, then $d_{i,t}$ approximates $D_{i,t}$. Based on Lemma 7, we can conclude that the estimated error is linear with the estimate parameter $\delta_i$. With the help of the lemmas above, now we provide the convergence result on Algorithm 2.
\begin{theorem}
Under Assumptions \ref{as1}-\ref{as5}, if $\eta_{t}=\frac{b}{t+a}$ for some $a>\max (2k(\theta-kL_0)b, b)$ and $b\geq \max \left(\frac{1}{2k(\theta-kL_0)}, 1\right)$, then by Algorithm 2, for any $\forall~i\in\mathcal{V}$,
\begin{equation}\label{3120}\begin{split}
\|x_{t}-x^*\|\leq\mathcal{O}\left(\sqrt{\ln t/{t}}+\sum_{i=1}^n(\sqrt{\mathcal{C}\delta_i}+\mathcal{C}\delta_i)\right)
\end{split}\end{equation}
and
\begin{equation}\label{3121}\begin{split}
\|y_{t}-\sigma(x^*)\|\leq\mathcal{O}\left(\sqrt{\ln t/{t}}+\sum_{i=1}^n(\sqrt{\mathcal{C}\delta_i}+\mathcal{C}\delta_i)\right)
\end{split}\end{equation}where $\mathcal{C}$ is defined in Lemma \ref{LE10}, and $x^*$ is defined in Definition 1..
\end{theorem}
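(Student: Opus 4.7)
The plan is to mirror the Lyapunov argument used for Theorem 1, but with two new ingredients: the perturbed inner-problem tracking provided by (\ref{eq2002}), and the finite-difference bias analysed in Lemma \ref{LE10}. The persistent terms $\mathcal{C}\delta_i$ and $\sqrt{\mathcal{C}\delta_i}$ in (\ref{3120})--(\ref{3121}) should arise from substituting a constant-in-$t$ gradient error into the quadratic Lyapunov recursion.

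First, I would establish an analogue of Lemma \ref{LE2} for the auxiliary iteration (\ref{eq2002}). Under Assumption \ref{as5}, the objective $\delta_i J_i(x_{i,t},\cdot)+\sum_j g_j(x_{j,t},\cdot)$ is $\mu$-strongly convex and $(L+\hbar\delta_i)$-smooth in $y_i$, so the same primal-dual contraction argument that produces Lemma \ref{LE2} (combined with Lemma \ref{LE-1} applied to this perturbed KKT system) yields, for some $\gamma<1$,
\begin{equation*}
\|w_{ik,t}-y_i(\delta_i,x_t)\|\leq\mathcal{O}\Big(\gamma^t+\sum_{s=1}^t\gamma^{t-s}\eta_{s-1}\Big),\qquad i,k\in\mathcal{V},
\end{equation*}
together with the bound on $\|y_{i,t}-\sigma(x_t)\|$ already supplied by Lemma \ref{LE2}.

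Second, I would control the error of the two-point gradient estimate. Writing
\begin{equation*}
\hat{F}_{i,t}-\nabla_{x_i}J_i(x_{i,t},\sigma(x_t))=\bigl(\nabla_1J_i(x_{i,t},y_{i,t})-\nabla_1J_i(x_{i,t},\sigma(x_t))\bigr)+(d_{i,t}-D_{i,t})+\bigl(D_{i,t}-\nabla_{x_i}\sigma(x_t)\nabla_2J_i(x_{i,t},\sigma(x_t))\bigr),
\end{equation*}
Lemma \ref{LE10} bounds the last bracket by $\mathcal{C}\delta_i$; Lipschitz continuity of $\nabla_1 J_i$ bounds the first by $L\|y_{i,t}-\sigma(x_t)\|$; and by $L$-Lipschitzness of $\nabla_2 g_i$ in its second argument, the middle term satisfies
\begin{equation*}
\|d_{i,t}-D_{i,t}\|\leq\frac{L}{\delta_i}\bigl(\|w_{ii,t}-y_i(\delta_i,x_t)\|+\|y_{i,t}-\sigma(x_t)\|\bigr).
\end{equation*}
Combining with the first step, $\|\hat{F}_{i,t}-\nabla_{x_i}J_i(x_{i,t},\sigma(x_t))\|\leq\mathcal{O}(\mathcal{C}\delta_i)+\mathcal{O}(\delta_i^{-1}(\gamma^t+\sum_s\gamma^{t-s}\eta_{s-1}))$, where the $\delta_i^{-1}$ factor will be absorbed into the constants hidden by $\mathcal{O}(\cdot)$ since $\delta_i$ is fixed.

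Third, I would plug this gradient-error bound into the projected gradient recursion (\ref{s44}). As in the proof of Theorem 1, expanding $\|x_{t+1}-x^*\|^2$, using nonexpansiveness of $P_{\Omega}$, the strong monotonicity (Assumption \ref{as4}) and Lipschitz continuity (Lemma \ref{LE0-00}) of $F$, and the identity $x^*=P_{\Omega}[x^*-k F(x^*)]$, yields a Lyapunov inequality of the form
\begin{equation*}
\|x_{t+1}-x^*\|^2\leq\bigl(1-2k(\theta-kL_0)\eta_t\bigr)\|x_t-x^*\|^2+C_1\eta_t\,e_t+C_2\eta_t^2,
\end{equation*}
where $e_t$ collects $\sum_i\mathcal{C}\delta_i$ plus a vanishing transient generated by the inner-loop tracking errors.

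Finally, I would apply the standard recurrence lemma for $\eta_t=b/(t+a)$: the vanishing part of $e_t$ together with $C_2\eta_t^2$ contributes $\mathcal{O}(\ln t/t)$ (hence $\mathcal{O}(\sqrt{\ln t/t})$ after square-rooting), while the constant bias $\sum_i\mathcal{C}\delta_i$ balances against the contraction factor $2k(\theta-kL_0)\eta_t$ to leave a steady-state floor of order $\sum_i\mathcal{C}\delta_i$ in $\|x_t-x^*\|^2$, i.e.\ $\sum_i\sqrt{\mathcal{C}\delta_i}$ in $\|x_t-x^*\|$; the extra linear $\mathcal{C}\delta_i$ term appears from the non-squared contribution through the cross term $\eta_t e_t$. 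This proves (\ref{3120}). The bound (\ref{3121}) then follows via $\|y_{i,t}-\sigma(x^*)\|\leq\|y_{i,t}-\sigma(x_t)\|+\|\sigma(x_t)-\sigma(x^*)\|$, bounding the first by Lemma \ref{LE2} and the second by $\ell\|x_t-x^*\|$ using (\ref{104-22}).

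The main obstacle will be the $1/\delta_i$ amplification in the two-point estimate: one must verify that the geometric plus $\eta_{t-1}$-type decay of $\|w_{ii,t}-y_i(\delta_i,x_t)\|$ and $\|y_{i,t}-\sigma(x_t)\|$ is fast enough that, after division by $\delta_i$, the resulting error is still summable against $\eta_t$ in the Lyapunov recursion, so that only the finite-difference bias $\mathcal{C}\delta_i$ survives as a non-vanishing term. Balancing this amplification with the Krasnoselskii--Mann step-size schedule, so that the $\sqrt{\ln t/t}$ transient and the $\sqrt{\mathcal{C}\delta_i}$ floor decouple cleanly, is the delicate part of the argument.
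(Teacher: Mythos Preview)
Your proposal is correct and follows essentially the same route as the paper's proof: the paper first proves a tracking lemma for $w_{ik,t}$ exactly as you describe (your first step is the paper's Lemma \ref{LE12}), decomposes $\hat F_{i,t}-\nabla_{x_i}J_i(x_{i,t},\sigma(x_t))$ into the same three pieces you list, bounds $\|d_{i,t}-D_{i,t}\|\le \frac{L}{\delta_i}(\|w_{ii,t}-y_i(\delta_i,x_t)\|+\|y_{i,t}-\sigma(x_t)\|)$ just as you do, feeds everything into the Lyapunov recursion (\ref{15})--(\ref{15-2}) for $\|x_t-x^*\|^2$, and closes with Lemma \ref{LE6}, where the persistent $b_4$-term $\sum_i(\mathcal C\delta_i+\mathcal C^2\delta_i^2)$ survives as the floor you identify; (\ref{3121}) is then obtained exactly via the triangle inequality and (\ref{104-22}) as you propose. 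Your remark about the $1/\delta_i$ amplification being absorbed into the $\mathcal O(\cdot)$ constants (since each $\delta_i$ is fixed) is precisely how the paper handles it.
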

\begin{proof}See APPENDIX. I for details.\end{proof}

\begin{remark} In (\ref{3120}) and (\ref{3121}), the term $\sqrt{\mathcal{C}\delta_i}+\mathcal{C}\delta_i$ comes from the estimate errors of $d_{i,t}$ defined in (\ref{s33}) on the value of $\nabla_{x_i} \sigma(x_t)\nabla_2 J_i(x_{i, t},\sigma(x_t))$. For any $i\in\mathcal{V}$, $\delta_i$ is a significant variable that is used to estimate the term $\nabla_{x_i} \sigma(x_t)\nabla_2 J_i(x_{i, t},\sigma(x_t))$. On the one hand, by (\ref{3120}) and (\ref{3121}), smaller value of $\delta_i$ implies more accurate convergence. On the other hand, to ensure (\ref{eq113001}) and (\ref{s33}) being well defined, the value of $\delta_i$ can not be too small. How to select an optimal value of $\delta_i$ is still a difficult problem, which will be further studied in our future work by establishing some new performance indicators.
\end{remark}

\section{A numerical simulation example}\label{se3}

\begin{figure}
\centering
\includegraphics[width=0.2\textwidth]{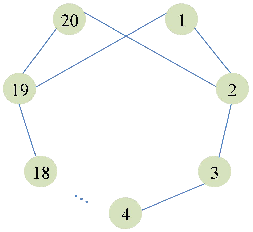}
\caption{The communication graph.} \label{fig0}
\end{figure}
In this section, we use our algorithms to deal with the distributed power allocation problem of small-cell networks. Here we consider a small-cell network consisting of 20 small-cells, labeled as $\mathcal{V}=\{1, \cdots, 20\}$. And each small-cell has a local small-cell base station,
which can cover many users. The small-cell base stations are viewed as players. Here we use $x_i\in\mathbb{R}$ to denote the transmission power strategy of small cell base station $i$, the range of the transmission power strategy is given by $x_i\in[0, P_{i, 0}]$, i.e. $\Omega_i=\{x_i|0\leq x_i\leq P_{i, 0}\}$. We model the power allocation problem as a game with bilevel structures \cite{Hwang}. Each small-cell base station
aims to adjust its transmission power to minimize its cost as follows
    \begin{equation}\label{9121}\begin{split}
    J_{i}(x_i, y)=a_iyx_i-b_i(1+c_ix_i),~i\in\mathcal{V}
     \end{split}\end{equation}
where $a_i$, $b_i$ and $c_i$ are some constants determined by the power transmission conditions of small-cell base station $i$. Moreover, $y\in\mathbb{R}$ represents the price of the power, which can not be directly accessed by the small cell and is determined by the minimization of a total cost as follows.
\begin{equation}\label{9122}\begin{split}
\sum_{i=1}^{20}\left(d_iy^2-a_ix_iy\right)
     \end{split}\end{equation}
where $d_i$ is a local parameter used to adjust the price by small-cell base station $i$ based on the power consumption of the users in the areas covered by small-cell base station $i$. The small-cell base stations cover different users, whose power requirements or consumptions are often different. Accordingly, the values of $d_i$, $i=1, \cdots, n$ are different. The costs in (\ref{9121}) and (\ref{9122}) can be viewed as the costs in the outer level and the objective function in the inner level, respectively. Using the full information, we can obtain that $\sigma(x)=\frac{\sum_{i=1}^{20}a_ix_i}{2\sum_{i=1}^{20}d_i}$, which is an aggregation. 
  In the large-scaled network, it is usually impossible for small-cells to use the full information. When making decisions, here we assume that each small-cell $i$ can only access the information of $a_iyx_i-b_i(1+c_ix_i)$ and $d_iy^2-a_ix_iy$. And each small-cell $i$ can communicate with its neighbors via the connected graph shown in Fig. \ref{fig0}. In this formulation, the parameters are given by $a_1=a_7=a_{9}=a_{11}=a_{19}=2.5$, $a_2=a_{12}=3$, $a_3=a_{13}=a_{17}=1.5$, $a_4=a_{14}=2$, $a_{5}=a_{8}=a_{10}=a_{15}=a_{18}=a_{20}=4$, $a_{6}=a_{16}=1$, $b_i =3,$ $c_i=a_i$, $d_1=d_3=d_7=d_{11}=d_{13}=d_{17}=0.1,$ $ d_2=0.05,~d_4=d_{14}=0.09,$  $d_5=d_{15}=0.2,$  $ d_6=d_{16}$  $=0.02$, $ d_8=d_{18}=0.07,$ $ d_9=d_{19}=0.095,$ $ d_{10}$  $=d_{20}=0.13,$ and $P_{i, 0}=0.9$ for any $i\in\mathcal{V}$. Accordingly, the NE follows that $x^*=[0.22, 0.18, 0.36, 0.27, $ $0.14, 0.55, 0.22, 0.14, 0.22, 0.14, 0.22, 0.18, 0.36, 0.27, 0.14, $ $0.55, 0.36, 0.14, 0.22, 0.14]^T$, which implies that $\sigma(x^*)=$ $2.86$.

Now we solve the power allocation problem by using Algorithm 1. Each small-cell $i$ maintains an estimate on its own transmission power strategy, denoted by $x_{i, t}\in\mathbb{R}$,
as well as a local estimate on the price of
the transmission power, denoted by $y_{i, t}\in\mathbb{R}$. We select the step-sizes as $\alpha=0.01, k=\kappa=1$ and $\eta_t=\frac{3}{t+4}$. And initial values of $x_{i, t}$ and $z_{i, t}$ are randomly chosen from $0$ to $1.5$, while initial values of $y_{i, t}$ are randomly chosen from $0$ to $2$ for all $i\in \mathcal{V}$. Let $E(t)=\sum_{i=1}^{20}(x_i(t)-x_i^*)^2$ and $y^*=\sigma(x^*)$, by running Algorithm 1, the trajectories of $E(t)$ and $y_{i, t}-y^*$ are shown in Fig. \ref{fig1} and Fig. \ref{fig2}, respectively. From Fig. \ref{fig1} and Fig. \ref{fig2}, we see that after a period of time, $(x_t, y_{i, t})$ approaches to $(x^*, \sigma(x^*))$ for any $i\in\mathcal{V}$. Thus, the simulations are consistent with the theoretical results established in
Theorem 1.

Next, we use Algorithm 2 to deal with the same problem by selecting $k=\kappa=0.8$, $\beta_i=1$, $\delta_1=\delta_7=\delta_{12}=\delta_{13}=\delta_{16}=1.3,~\delta_2=\cdots=\delta_6=\delta_{10}=\delta_{11}=\delta_{15}=\delta_{17}=\cdots=\delta_{19}=1$, $\delta_9=\delta_{14}=1.5$, $\delta_8=\delta_{20}=1.6$, and $\eta_t=\frac{0.12}{t+1}$. Under Algorithm 2, the trajectories of $E(t)$ and $y_{i, t}-y^*$ are shown in Fig. \ref{fig3} and Fig. \ref{fig4}, respectively. From Fig. \ref{fig3} and Fig. \ref{fig4}, we see that $(x_t, y_{i, t})$ can also keep close to $(x^*, \sigma(x^*))$ for any $i\in\mathcal{V}$. The simulations are consistent with the theoretical results established in
Theorem 2.

\begin{figure}
\begin{minipage}[t]{0.46\linewidth}
\centering
\includegraphics[width=1\textwidth]{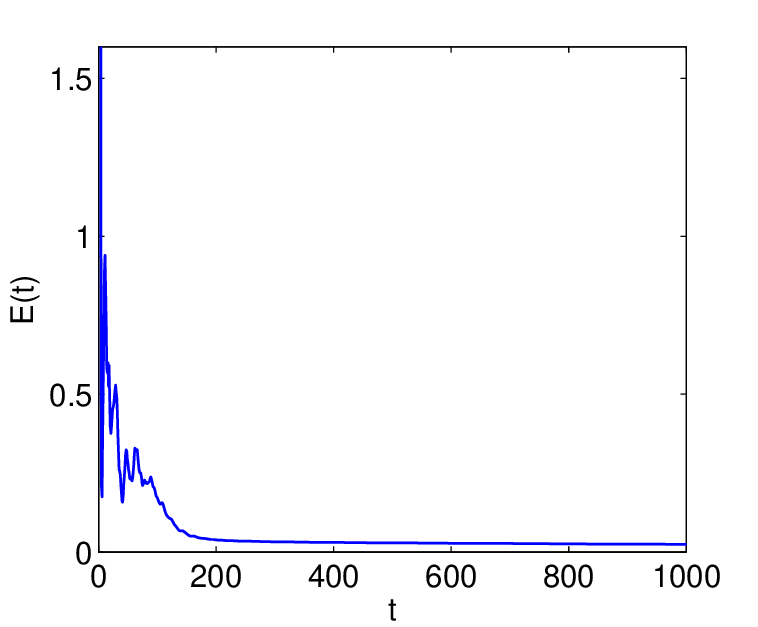}
\caption{The trajectory of $E(t)$ under Algorithm 1.}\label{fig1}
\end{minipage}
\begin{minipage}[t]{0.46\linewidth}
\centering
\includegraphics[width=1\textwidth]{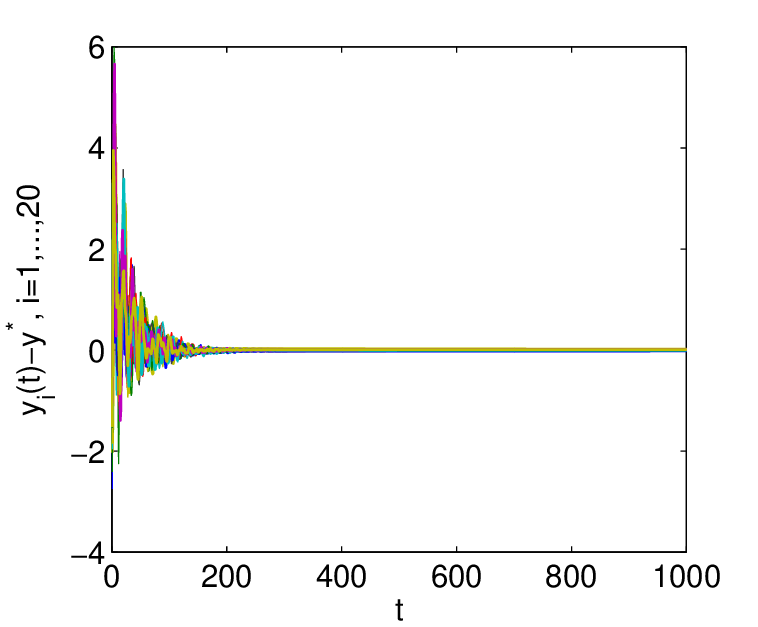}
\caption{The trajectories of $y_i(t)-y^*,~i$ $=1,~\cdots,~20$ under Algorithm 1.}\label{fig2}
\end{minipage}
\end{figure}
\begin{figure}
\begin{minipage}[t]{0.48\linewidth}
\centering
\includegraphics[width=1\textwidth]{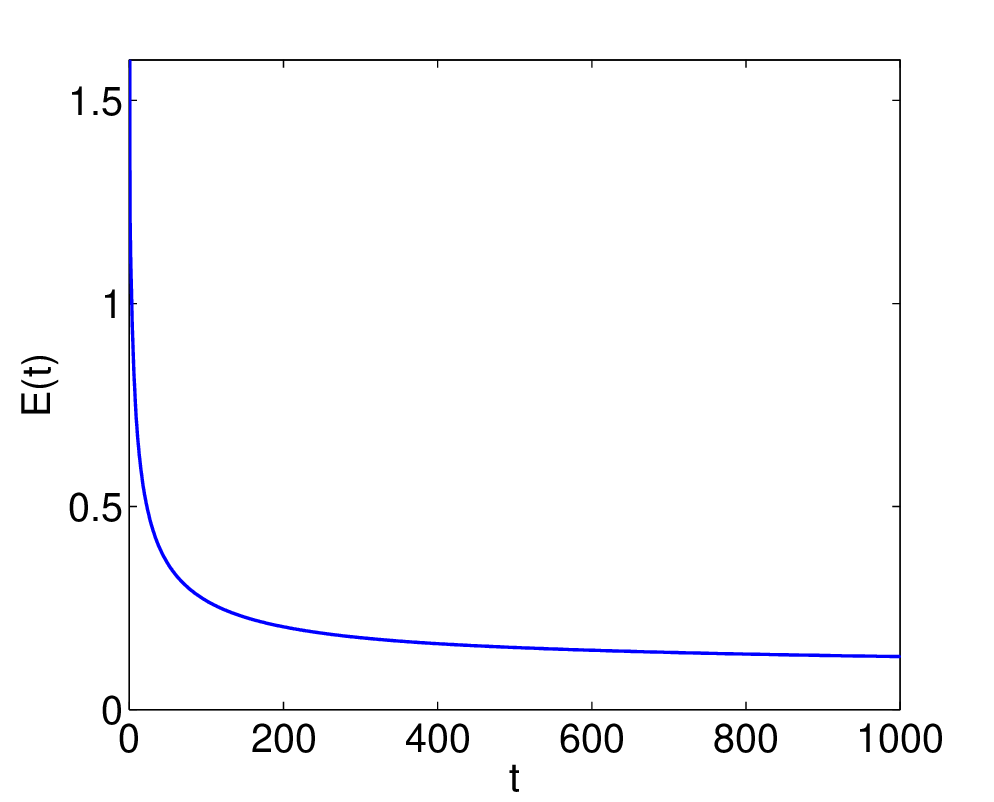}
\caption{The trajectory of $E(t)$ under Algorithm 2.}\label{fig3}
\end{minipage}
\begin{minipage}[t]{0.48\linewidth}
\centering
\includegraphics[width=1\textwidth]{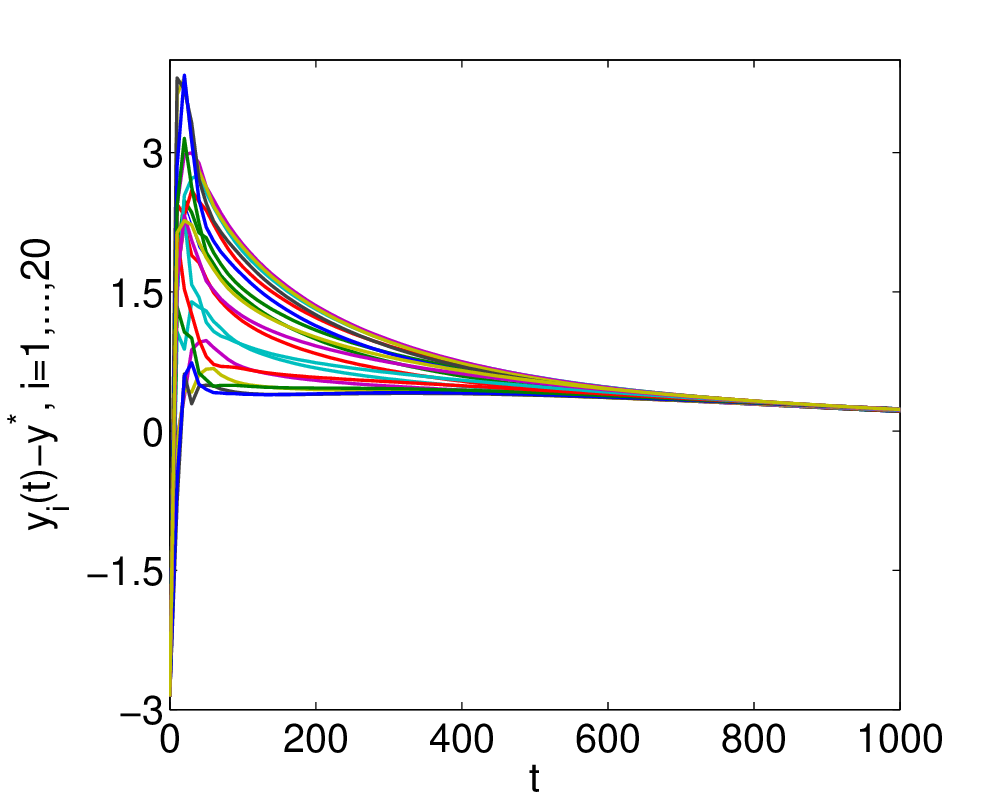}
\caption{The trajectories of $y_i(t)-y^*,~i$ $=1,~\cdots,~20$ under Algorithm 2.}\label{fig4}
\end{minipage}
\end{figure}

\section{Conclusions}\label{se4}
In this paper, we have studied the problem of distributively seeking the NEs of the AGs with bilevel structures, where the aggregation is determined by the minimizer of the global objective function of a virtual leader in the inner level. To handle this problem, first, an SOGD algorithm involving the Hessian matrices associated with the objective functions in the inner level has been proposed. By implementing the algorithm, players in the
outer level make decisions only using
the information associated with its own cost function, a local part of the virtual leader's objective function, its own action set and its own action, as well as the actions
of its neighbors. The result shows that if the graph is connected, then the actions of players asymptotically converge to the NE. Furthermore, for the case where the Hessian matrices associated with the objective functions in the inner level are not
available, we have proposed an FOGD algorithm, where a distributed estimate strategy
is developed to estimate the gradients of cost functions in the outer level. The result shows that the convergence errors of players' actions to the NE are linear with respect to the estimate parameters.

Based on the result in Theorem 2, using the distributed algorithm involving the first order gradient information only can achieve an approximate solution. Currently, achieving an accurate convergence only by using the first order gradient information is still a difficult problem, which will be further studied. In our future work, some other issues on the AGs with bilevel structures will be considered, such as the case with network induced time delays, packet loss and communication bandwidth constraints, which
will bring new challenges in seeking the NEs in distributed manners.

\section{Appendix}
In this section, we will provide the proofs
of our results in detail. First, let us begin with presenting the
proof of Proposition 1.
\subsection{Proof of Proposition 1}
(i) By the fact that $\sigma(x)=\arg\min_y\sum_{i=1}^n g_i(x_i, y),$ there holds that $\sum_{i=1}^n\nabla_2g_i(x_i, \sigma(x))=0$. Then taking the Jacobian matrix with respect to $x_i$ and using the chain derivative rule yield
\begin{equation}\label{104}
\begin{split}
\nabla_{21}g_i(x_i, \sigma(x))+\left(\sum_{i=1}^n \nabla_{22}g_i(x_i, \sigma(x))\right)\nabla_{x_i} \sigma(x)=0.
\end{split}
\end{equation}
Since $\sum_{i=1}^ng_i(x_i,\sigma(x))$ is strongly convex, then we know that $\sum_{i=1}^n \nabla_{22}g_i(x_i, \sigma(x))$ is symmetric and positive definite. Accordingly, (\ref{104}) implies (i).

(ii) Using the chain derivative rule, we have \begin{equation*}
\begin{split}{\nabla _{{x_i}}}J_{i}(x_i,\sigma(x))=&\nabla_1 J_i(x_i,\sigma(x))\\
&+(\nabla_{x_i} \sigma(x))^T\nabla_2 J_i(x_i,\sigma(x)).\end{split}
\end{equation*} Hence, the validity of (ii) is verified.\QEDA
\subsection{Proof of Lemma 2}
Using Assumption 2 and the strong convexity of $g_i$ to (i) in proposition 1, we know that $\|\nabla_{x_i} \sigma(x)\|_\mathbb{F}\leq\ell$ for any $x, y\in\Omega$. Moreover, by the continuity of $\sigma(\cdot)$, there holds that $\sigma(x)$ and $\sigma(y)$ are bounded for any $x, y\in\Omega$. Then, we have
\begin{equation*}\begin{split}
&\|\nabla_{x_i} \sigma(x)-\nabla_{x_i} \sigma(y)\|_\mathbb{F}\\
&= \Big\|\Big(\sum_{i=1}^n \nabla_{22}g_i(x_i, \sigma(x))\Big)^{-1}\nabla_{21}g_i(x_i, \sigma(x))-\\
&~~~\Big(\sum_{i=1}^n \nabla_{22}g_i(y_i, \sigma(y))\Big)^{-1}\nabla_{21}g_i(y_i, \sigma(y))\Big\|_\mathbb{F}\\
&\leq K_3\Big\|\Big(\sum_{i=1}^n \nabla_{22}g_i(x_i, \sigma(x))\Big)^{-1}\Big\|_\mathbb{F}\Big\|\sum_{i=1}^n \nabla_{22}g_i(x_i, \sigma(x))\\
&~~~-\sum_{i=1}^n \nabla_{22}g_i(y_i, \sigma(y))\Big\|_\mathbb{F}\Big\|\Big(\sum_{i=1}^n \nabla_{22}g_i(y_i, \sigma(y))\Big)^{-1} \Big\|_\mathbb{F}\\
&~~~+\Big\| \Big(\sum_{i=1}^n \nabla_{22}g_i(y_i, \sigma(y))\Big)^{-1}\Big\|_\mathbb{F}\|\nabla_{21}g_i(x_i, \sigma(x))\\
&~~~-\nabla_{21}g_i(y_i, \sigma(y))\|_\mathbb{F}\\
&\leq \left(\frac{K_3L}{\sqrt{n}\mu^2}+\frac{L}{\mu}+\frac{K_3L\ell}{\mu^2}+\frac{L\ell}{\sqrt{n}\mu}\right)\|x-y\|.\\
\end{split}
\end{equation*}
Note that
\begin{equation*}\begin{split}
&\|(\nabla_{x_i} \sigma(x))^T\nabla_2 J_i(x_i,\sigma(x))-(\nabla_{x_i} \sigma(y))^T\nabla_2 J_i(y_i,\sigma(y))\|\\
&\leq\|\nabla_{x_i} \sigma(x)\|_\mathbb{F}\|\nabla_2 J_i(x_i,\sigma(x))-\nabla_2 J_i(y_i,\sigma(y))\|\\
&~~~+\|\nabla_{x_i} \sigma(x)-\nabla_{x_i} \sigma(y)\|_\mathbb{F}\|\nabla_2 J_i(y_i,\sigma(y))\|\\
&\leq\frac{K_3}{\sqrt{n}\mu}\|\nabla_2 J_i(x_i,\sigma(x))-\nabla_2 J_i(y_i,\sigma(y))\|\\
\end{split}
\end{equation*}
\begin{equation}\label{as2222}\begin{split}
&~~~+K_2\|\nabla_{x_i} \sigma(x)-\nabla_{x_i} \sigma(y)\|_\mathbb{F}\\
&\leq\frac{K_3}{\sqrt{n}\mu} (L+L\ell)\|x-y\|\\
&~~~+K_2\|\nabla_{x_i} \sigma(x)-\nabla_{x_i} \sigma(y)\|_\mathbb{F}\\
&\leq L_1\|x-y\|
\end{split}
\end{equation}
where $L_1=\frac{K_2L(K_3+\sqrt{n}\mu +\sqrt{n}K_3\ell+\mu \ell)+K_3 \mu L(1+\ell)}{\sqrt{n}\mu^2}$. Moreover,
\begin{equation}\label{as2223}\begin{split}
&\|\nabla_1 J_i(x_i,\sigma(x))-\nabla_1 J_i(y_i,\sigma(y))\|\\
&\leq (1+\ell)L\|x-y\|.
\end{split}
\end{equation}
Based on (\ref{as2222}) and (\ref{as2223}), we can achieve that
\begin{equation*}\begin{split}
\|F(x)-F(y)\|&\leq \sum_{i=1}^n\|\nabla_1 J_i(x_i,\sigma(x))-\nabla_1 J_i(y_i,\sigma(y))\|\\
&~~~+\sum_{i=1}^n\|(\nabla_{x_i} \sigma(x))^T\nabla_2 J_i(x_i,\sigma(x))\\
&~~~-(\nabla_{x_i} \sigma(y))^T\nabla_2 J_i(y_i,\sigma(y))\| \\
&\leq  L_0\|x-y\|.
\end{split}
\end{equation*}
Hence, the validity of the result is verified.\QEDA
\subsection{Proof of Lemma 3}
(i) Note that for the linear equations in the first line of (\ref{-a3}), there must exist a solution $\zeta^*_t$ belongs to the range space of $B\otimes I_{m_2}$, i.e., there exists some $v\in\mathbb{R}^{nm_2}$ such that $(B\otimes I_{m_2})v=\zeta^*_t$. Since $(\textbf{1}_n^T\otimes I_{m_2})(B\otimes I_{m_2})v=0$, then $\zeta^*_t\in Null(\textbf{1}_n^T\otimes I_{m_2})$. Together with the fact that $\zeta\in Null(\textbf{1}_n^T\otimes I_{m_2})$, we have $\zeta^*_{t}-\zeta\in Null(\textbf{1}_n^T\otimes I_{m_2})$, i.e., $ (\textbf{1}_n^T\otimes I_{m_2})(\zeta^*_{t}-\zeta)=0$. Note that $\textbf{1}_n^TB=0$ and $B^2$ is symmetric, so there exists an orthogonal matrix $U=[\textbf{1}_n/\sqrt{n}, U_0]^T$ satisfying
\begin{equation*}\begin{split}
(\zeta^*_{t}-\zeta)^T(B\otimes I_{m_2})^2(\zeta^*_{t}-\zeta)&= (\zeta^*_{t}-\zeta)^T U \Lambda^2 U^T(\zeta^*_{t}-\zeta)\\
&\geq (\lambda_2(B))^2\|\zeta^*_{t}-\zeta\|^2
 \end{split}
 \end{equation*}
 where $\Lambda=diag\{0, \lambda_2(B), \cdots, \lambda_n(B)\}$.
Thus, the validity of (i) is verified.

(ii) By the arbitrariness of $t\geq0$, we have
\begin{equation}\label{-a4}\left\{
\begin{split}
&\nabla_2 g(x_t, {y}^*_t)+(B\otimes I_{m_2})\zeta^*_t=0\\
&\nabla_2 g(x_{t+1}, {y}^*_{t+1})+(B\otimes I_{m_2})\zeta^*_{t+1}=0.
\end{split}
\right.
\end{equation}
By (\ref{-a4}), there holds that $$\|(B\otimes I_{m_2})(\zeta^*_{t+1}-\zeta^*_{t})\|=\|\nabla_2 g(x_t, {y}^*_t)-\nabla_2 g(x_{t+1}, {y}^*_{t+1})\|.$$
Using (i) and letting $\zeta=\zeta^*_{t+1}$ in (i), it immediately leads to the validity of (ii).\QEDA
\subsection{Proof of Lemma 4}
Denote $y_t=col(y_{i,t})_{i\in\mathcal{V}}$, $\zeta_t=col(\zeta_{i,t})_{i\in\mathcal{V}}$, $y^*_t=\left[\sigma^T(x_t), \cdots, \sigma^T(x_t)\right]^T$, and let $\zeta^*_t$ be defined in Lemma \ref{LE-1}, by (\ref{s1}) in Algorithm 1 and KKT condition in (\ref{-a3}), we have
\begin{equation}\label{9}
 \left\{
\begin{split}
&y_{t+1}-y^*_t=y_t-y^*_t-\kappa((B\otimes I_{m_2})(\zeta_t-\zeta^*_t)\\
&~~~~~~~~~~~~~~~+\nabla_2g(x_t,y_t)-\nabla_2g(x_t,y^*_t))\\
&\zeta_{t+1}-\zeta^*_t=\zeta_t-\zeta^*_t+\kappa((B\otimes I_{m_2})(y_{t+1}-y^*_t)).
\end{split}
\right.
\end{equation}
Taking the square at both sides of the equations in (\ref{9}) yields that
\begin{equation}\label{10}\begin{split}
&\|y_{t+1}-y^*_t\|^2\\
&=\|y_{t}-y^*_t-\kappa(\nabla_2g(x_t,y_t)-\nabla_2g(x_t,y_t^*))\|^2\\
&~~~~-2\kappa\langle (B\otimes I_{m_2})(\zeta_t-\zeta_t^*), y_{t}-y^*_t\\
&~~~~-\kappa(\nabla_2g(x_t,y_t)-\nabla_2g(x_t,y_t^*))\rangle\\
&~~~~+\kappa^2\|(B\otimes I_{m_2})(\zeta_t-\zeta_t^*)\|^2
\end{split}\end{equation}
and
\begin{equation}\label{10-1}\begin{split}
&\|\zeta_{t+1}-\zeta^*_t\|^2\\
&=\|\zeta_{t}-\zeta^*_t\|^2+\kappa^2\|(B\otimes I_{m_2})(y_{t+1}-y^*_t)\|^2\\
&~~~~+2\kappa\langle\zeta_{t}-\zeta^*_t, (B\otimes I_{m_2})(y_{t+1}-y^*_t)\rangle\\
&= \|\zeta_{t}-\zeta^*_t\|^2+\kappa^2\|(B\otimes I_{m_2})(y_{t+1}-y^*_t)\|^2\\
&~~~-2\kappa^2\|(B\otimes I_{m_2})(\zeta_t-\zeta^*_t)\|^2+\\
&~~~2\kappa\langle (B\otimes I_{m_2})(\zeta_t-\zeta_t^*), y_{t}-y^*_t\\
&~~~-\kappa(\nabla_2g(x_t,y_t)-\nabla_2g(x_t,y_t^*))\rangle.
\end{split}\end{equation}
Combining (\ref{10}) and (\ref{10-1}) results in that
\begin{equation}\label{11}\begin{split}
&\|y_{t+1}-y^*_t\|^2+\|\zeta_{t+1}-\zeta^*_t\|^2\\
&=\|y_{t}-y^*_t-\kappa(\nabla_2g(x_t,y_t)-\nabla_2g(x_t,y_t^*))\|^2\\
&~~~+\kappa^2\|(B\otimes I_{m_2})(\zeta_t-\zeta_t^*)\|^2+\|\zeta_{t}-\zeta^*_t\|^2\\
&~~~+\kappa^2\|(B\otimes I_{m_2})(y_{t+1}-y^*_t)\|^2\\
&~~~-2\kappa^2\|(B\otimes I_{m_2})(\zeta_t-\zeta^*_t)\|^2\\
&\leq \|y_{t}-y^*_t-\kappa(\nabla_2g(x_t,y_t)-\nabla_2g(x_t,y_t^*))\|^2\\
&~~~+\kappa^2(\lambda_n(B))^2\|y_{t+1}-y^*_t\|^2\\
&~~~+(1-\kappa^2 (\lambda_2(B))^2)\|\zeta_{t}-\zeta^*_t\|^2\\
\end{split}\end{equation}
where the inequality holds by the fact that $\|B(\zeta_t-\zeta^*_t)\|\leq \lambda_n(B)\|\zeta_{t}-\zeta^*_t\|^2$ and using (i) in Lemma \ref{LE-1}.
By (\ref{s4}), we know that $x_i(t)\in\Omega_i$ for any $i\in{\mathcal{V}}$. Using the Lipschitz
continuity and the strong monotonicity of $\nabla_2g_i(\cdot, \cdot)$ with respect to the second argument yields that
\begin{equation}\label{11-1}\begin{split}
&\|y_{t}-y^*_t-\kappa(\nabla_2g(x_t,y_t)-\nabla_2g(x_t,y_t^*))\|^2\\
&\leq \|y_{t}-y^*_t\|^2+\kappa^2L\langle y_{t}-y^*_t, \nabla_2g(x_t,y_t)-\nabla_2g(x_t,y_t^*)\rangle\\
&~~~-2\kappa\langle y_{t}-y^*_t, \nabla_2g(x_t,y_t)-\nabla_2g(x_t,y_t^*)\rangle\\
&\leq \|y_{t}-y^*_t\|^2+(\kappa^2L-2\kappa)\mu \|y_{t}-y^*_t\|^2\\
&=(1-\kappa\mu(1-\eta_yL))(1-\kappa^2(\lambda_n(B))^2)\|y_{t}-y^*_t\|^2\\
&~~~-\kappa(\mu-\kappa(\lambda_n(B))^2(1-\kappa\mu(1-\eta_yL)))\|y_{t}-y^*_t\|^2\\
&\leq (1-\kappa\mu(1-\eta_yL))(1-\kappa^2(\lambda_n(B))^2)\|y_{t}-y^*_t\|^2\\
\end{split}\end{equation}
where the last inequality is true due to the facts that $\kappa< \frac{1}{L}$ and $\kappa< \frac{\mu}{(\lambda_n(B))^2}$.
From (\ref{11}) and (\ref{11-1}), it follows that
\begin{equation}\label{eq100}\begin{split}
&(1-\kappa^2(\lambda_n(B))^2)\|y_{t+1}-y^*_t\|^2+\|\zeta_{t+1}-\zeta^*_t\|^2\\
&\leq \gamma_0 ((1-\kappa^2(\lambda_n(B))^2)\|y_{t}-y^*_t\|^2+\|\zeta_{t}-\zeta^*_t\|^2)\\
\end{split}\end{equation}
where $\gamma_0=\max(1-\kappa\mu(1-\kappa L),1-\kappa^2 (\lambda_2(B))^2))$. By the fact that $\mu/L\leq 1$, it is not difficult to verify that $\gamma_0\in[0, 1)$. Letting $e_t=\left[\sqrt{1-\kappa^2(\lambda_n(B))^2}(y_{t}-y^*_t)^T, (\zeta_{t}-\zeta^*_t)^T\right]^T$ and $\gamma=\sqrt{\gamma_0}$, by (\ref{eq100}), we have
\begin{equation}\label{eq101}\begin{split}
\|e_{t+1}\|\leq \gamma \|e_t\|+ \|r_t\| \\
\end{split}\end{equation}
where $r_t=\Big[\sqrt{1-\kappa^2(\lambda_n(B))^2}(y_{t+1}^*-y^*_{t})^T, (\zeta_{t+1}^*-\zeta^*_t)^T\Big]^T$. Using (ii) in Lemma \ref{LE-1} yields
\begin{equation*}\begin{split}
\|\zeta^*_{t+1}-\zeta^*_{t}\|&\leq \frac{1}{\lambda_2(B)}\|\nabla_2 g(x_{t+1}, {y}^*_{t+1})-\nabla_2 g(x_t, {y}^*_t)  \|\\
&\leq \frac{L}{\lambda_2(B)} (\|x_{t+1}-x_t\|+\|{y}^*_{t+1}-{y}^*_t\|)
\end{split}\end{equation*}
where the second inequality holds by using the Lipschitz
continuity of $\nabla_2g_i(\cdot, \cdot)$. Then letting $\varrho_1=\frac{L}{\lambda_2(B)}$ and $\varrho_2=\sqrt{1-\kappa^2(\lambda_n(B))^2}+\varrho_1$, it is easy to verify that
\begin{equation}\label{eq102}\begin{split}
\|r_t\|&\leq \varrho_1\|x_{t+1}-x_t\|+\varrho_2\|{y}^*_{t+1}-{y}^*_t\|.
\end{split}\end{equation}
By (\ref{eq101}) and (\ref{eq102}), we have
\begin{equation}\label{eq103}\begin{split}
\|e_{t+1}\|\leq \gamma \|e_t\|+ \varrho_1\|x_{t+1}-x_t\|+\varrho_2\|{y}^*_{t+1}-{y}^*_t\|. \\
\end{split}\end{equation}
Due to the fact that $\sum_{i=1}^n g_i(x_i, \sigma(x))$ is strongly convex, one knows that $\sum_{i=1}^n \nabla_{22}g_i(x_i, \sigma(x))$ is bounded. By
(\ref{104-22}), there holds $\|{y}^*_{t+1}-{y}^*_t\|=\sqrt{n}\|\sigma(x_{t+1})-\sigma(x_{t})\|\leq\sqrt{n}\ell\|x_{t+1}-x_t\|$. Moreover, by (\ref{s4}), we have $\|x_{t+1}-x_t\|\leq 2\sqrt{n}K\eta_t$, where $K$ is defined in Assumption 2. By (\ref{eq103}), letting $\varrho_0=2\sqrt{n}K(\varrho_1+\sqrt{n}\varrho_1\ell)$, we have
\begin{equation*}\begin{split}
\|e_{t+1}\|\leq \gamma \|e_t\|+ \varrho_0\eta_t. \\
\end{split}
\end{equation*}
It follows that $e_{t+1}\leq \gamma^{t} \|e_0\|+\varrho_0\sum_{k=1}^t\gamma^{t-k}\eta_{k-1}$, which immediately implies (\ref{-eq103}). Thus, the validity of the result is verified.\QEDA
\subsection{Proof of Lemma 5}
Note that if the graph is connected, then $A$ has a simple 1 eigenvalue, and the absolute values of the other eigenvalues are less than 1 \cite{Olfati, Duchi}. So the absolute values of the eigenvalues can be presented in an ascending order: $\rho_1(A)\leq\cdots\leq\rho_{n-1}(A)<\rho_n(A)=1$. To prove Lemma 5, we present the following lemma.
\begin{lemma}[\cite{Duchi}]\label{LE1}
Under Assumption \ref{as1}, for any $i,j\in\mathcal{V}$,
\begin{equation}\label{1}\begin{split}
\left|[A^t]_{ij}-\frac{1}{n}\right|\leq \sqrt{n}\rho^{t}
\end{split}\end{equation}where $\rho=\rho_{n-1}(A)$.
\end{lemma}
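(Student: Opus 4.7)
The plan is to proceed by spectral decomposition of $A$, exploiting that under Assumption 1 the matrix $A$ is row-stochastic, and since the communication graph is undirected, $A$ is symmetric and hence doubly stochastic. Connectedness then guarantees that $1$ is a simple eigenvalue with normalized eigenvector $\textbf{1}_n/\sqrt{n}$, and that the remaining eigenvalues satisfy $|\lambda_k(A)|<1$ for $k\geq 2$; thus $\rho:=\max_{k\geq 2}|\lambda_k(A)|\in(0,1)$ is exactly the quantity appearing in the statement.

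With this in hand, I would write the eigendecomposition $A=\sum_{k=1}^{n}\lambda_k u_k u_k^T$ with an orthonormal basis $\{u_k\}$ and $u_1=\textbf{1}_n/\sqrt{n}$, so that raising to the $t$-th power yields
\[
A^t=\frac{1}{n}\textbf{1}_n\textbf{1}_n^T+\sum_{k=2}^{n}\lambda_k^t\, u_k u_k^T,
\]
and hence $\|A^t-\tfrac{1}{n}\textbf{1}_n\textbf{1}_n^T\|_2\leq \rho^t$. The entrywise bound then follows from
\[
\left|[A^t]_{ij}-\frac{1}{n}\right|=\left|e_i^T\Bigl(A^t-\frac{1}{n}\textbf{1}_n\textbf{1}_n^T\Bigr)e_j\right|\leq \Big\|A^t-\frac{1}{n}\textbf{1}_n\textbf{1}_n^T\Big\|_2\leq \rho^t\leq \sqrt{n}\,\rho^t,
\]
the last step being a trivial loosening. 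Alternatively, one may derive the $\sqrt{n}$ factor more organically by bounding entries via the Frobenius norm: $|[A^t]_{ij}-\tfrac{1}{n}|\leq \|A^t-\tfrac{1}{n}\textbf{1}_n\textbf{1}_n^T\|_F\leq \sqrt{n}\,\rho^t$, since there are $n-1$ nonzero singular values each of size at most $\rho^t$.

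The step I expect to be the main obstacle is justifying the strict inequality $\rho<1$ purely from Assumption 1. Connectivity of the undirected graph gives $\lambda_2(A)<1$, but a symmetric doubly stochastic $A$ on a bipartite graph could in principle have $\lambda_n(A)=-1$, which would invalidate geometric decay. In practice this is ruled out by the standard tacit hypothesis that $\mathcal{G}(A)$ carries self-loops (so that the diagonal of $A$ is strictly positive, yielding aperiodicity), or else is handled by invoking the cited result of Duchi directly; the remainder of the argument is the straightforward linear-algebra computation above.
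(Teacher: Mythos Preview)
Your spectral-decomposition argument is correct and is the standard route to this bound. Note, however, that the paper does not actually prove Lemma~\ref{LE1}: it is simply quoted from the cited reference \cite{Duchi}, with no accompanying argument. So there is no ``paper's own proof'' to compare against; you have supplied a self-contained justification where the paper relies on citation.

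A couple of minor remarks on your write-up. First, your entrywise bound via the operator norm in fact gives $|[A^t]_{ij}-\tfrac{1}{n}|\leq \rho^t$ directly, which is strictly sharper than the stated $\sqrt{n}\,\rho^t$; the extra $\sqrt{n}$ in the lemma (and in \cite{Duchi}) arises because those references work under weaker hypotheses (time-varying or non-symmetric transition matrices) where only a Frobenius-type estimate is available. Second, your caveat about $\rho<1$ is well taken: Assumption~\ref{as1} as written does not exclude a bipartite graph with eigenvalue $-1$, so strictly speaking one needs positive diagonal entries in $A$ (self-loops) or some other aperiodicity condition. The paper sidesteps this by invoking \cite{Duchi}, where such a condition is part of the standing hypotheses; it is a tacit assumption here rather than a flaw in your reasoning.
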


\textbf{Proof of Lemma 5.} Denote $\bar{v}_t=\frac{1}{n}\sum_{i=1}^nv_{i,t}$, taking the average at both sides of (\ref{s2}) in Algorithm 1, we have $\bar{v}_{k+1}=\bar{v}_{k}+\bar{G}_{k+1}-\bar{G}_{k}$.
Taking the summation with respect
to $k=0,\cdots, t$ yields $\bar{v}_{t+1}-\bar{v}_0=\bar{G}_{t+1}-\bar{G}_{0}$.
Due to the fact that $\bar{v}_0=\bar{G}_{0}$, there holds $\bar{v}_t=\bar{G}_{t}$. Moreover, denote ${v}_t=col(v_{i,t})_{i\in\mathcal{V}}$, $d_{i,t}=\nabla_{22}g_i(x_{i,t},y_{i,t})-\nabla_{22}g_i(x_{i,t-1},y_{i,t-1})$ and $d_t=col(d_{i, t})_{i\in\mathcal{V}}$, we have
\begin{equation}\label{eq105}\begin{split}
v_{i,t}&=\sum_{j\in \mathcal{N}_{i}}a_{ij,t-1}v_{j,t-1}+d_{i,t}\\
&=\Big([A]_{i\cdot}\otimes I_{m_2}\Big)v_{t-1}+d_{i,t}\\
&=\Big([A^{t}]_{i\cdot}\otimes I_{m_2}\Big)v_0+\sum_{k=1}^t\Big([A^{t-k}]_{i\cdot}\otimes I_{m_2}\Big)d_{k}.
\end{split}\end{equation}
By the fact that $A^t$ is a doubly stochastic matrix, one has
\begin{equation}\label{eq106}\begin{split}
\bar{v}_{t}=\Big(\frac{\textbf{1}_n^T}{n}\otimes I_{m_2}\Big)v_0+\sum_{k=1}^t\Big(\frac{\textbf{1}_n^T}{n}\otimes I_{m_2}\Big)d_{k}.
\end{split}\end{equation}
Combining (\ref{eq105}) and (\ref{eq106}) results in that for any $i\in \mathcal{V}$,
\begin{eqnarray}\label{107}\begin{split}
\|v_{i,t}-\bar{v}_{t}\|_\mathbb{F}&\leq\left\|\left([A^{t}]_{i\cdot}-\frac{{\textbf{1}}_n^T}{n}\right)\otimes I_{m_2}\right\|_\mathbb{F}\|v_0\|_\mathbb{F}\\
&~~~+\sum_{k=1}^t\left\|\left([A^{t}]_{i\cdot}-\frac{{\textbf{1}}_n^T}{n}\right)\otimes I_{m_2}\right\|_\mathbb{F}\|d_{k}\|_\mathbb{F}.
\end{split}\end{eqnarray}
Note that
\begin{equation*}\begin{split}
\|d_t\|_\mathbb{F}&\leq \sum_{i=1}^n\|\nabla_{22}g_i(x_{i,t},y_{i,t})-\nabla_{22}g_i(x_{i,t-1},y_{i,t})\|_\mathbb{F}\\
&~~~+\sum_{i=1}^n\|\nabla_{22}g_i(x_{i,t-1},y_{i,t})-\nabla_{22}g_i(x_{i,t-1},y_{i,t-1})\|_\mathbb{F}\\
&\leq \sum_{i=1}^nL\|x_{i,t}-x_{i,t-1}\|+\sum_{i=1}^nL\|{y}_{i,t}-{y}_{i,t-1}\|\\
&\leq 2nLK\eta_{t-1}+\sum_{i=1}^nL\|{y}_{i,t}-{y}_{i,t-1}\|
\end{split}\end{equation*}
where the last inequality results from (\ref{s4}) and the boundedness of $\Omega_i$ in Assumption 2.
Moreover, by
(\ref{104-22}), we have $\|\sigma(x_{t})-\sigma(x_{t-1})\|\leq\ell\|x_{t}-x_{t-1}\|$. Then, for any $\in\mathcal{V}$,
\begin{equation*}\begin{split}
&\|{y}_{i,t}-{y}_{i,t-1}\|\\&=\|({y}_{i,t}-\sigma(x_t))+(\sigma(x_t)-\sigma(x_{t-1}))\\
&~~~~+(\sigma(x_{t-1})-{y}_{i,t-1})\|\\
&\leq\|{y}_{i,t}-\sigma(x_t)\|+\|\sigma(x_t)-\sigma(x_{t-1})\|\\
&~~~+\|\sigma(x_{t-1})-{y}_{i,t-1}\|\\
&\leq \mathcal{O}\left(\gamma^t+\sum_{k=1}^t\gamma^{t-k}\eta_{k-1}+2K\sqrt{n}\ell\eta_{t-1}\right)
\end{split}\end{equation*}
 where the second inequality holds by using Lemma \ref{LE2}, (\ref{s4}) and the compactness of $\Omega_i$.
Then,
\begin{equation}\label{108}\begin{split}
\|d_t\|_\mathbb{F}&\leq 2nLK(1+\sqrt{n}\ell)\eta_{t-1}\\
&~~~+\frac{2n\rho L }{\gamma}\left(\gamma^t+\sum_{k=1}^t\gamma^{t-k}\eta_{k-1}\right).
\end{split}\end{equation}
By (\ref{107}) and (\ref{108}), using Lemma \ref{LE1} results in that
\begin{equation*}\begin{split}
&\|v_{i,t}-\bar{v}_{t}\|_\mathbb{F}\\
&\leq \frac{2n^2LK(1+\sqrt{n}\ell)\sqrt{{m_2}}}{\rho}\sum_{k=0}^t\rho^{t-k}\eta_{k}\\
& ~~~~+n\sqrt{{m_2}}\|v_0\|_\mathbb{F}\rho^t+\frac{2n^2L\rho\sqrt{{m_2}}}{\gamma\rho}\sum_{k=0}^t\rho^{t-k}\gamma^{k}\\ &~~~~+\frac{2n^2L\rho\sqrt{{m_2}}}{\gamma\rho}\sum_{k=0}^t\rho^{t-k}\sum_{s=0}^k\gamma^{k-s}\eta_{s}.
\end{split}\end{equation*}
Let $\tau=\max\left(\gamma, \rho\right)$, and recall the fact that $\bar{v}_t=\bar{G}_{t}$, it immediately leads to the validity of the result.\QEDA
\subsection{Proof of Lemma 6}
Denote $M_t=n\bar{G}_t$, since matrix $M_t$ is positive definite, then $z_{i, t}^*=-M_t^{-1}\nabla_2J_i(x_{i,t},y_{i,t})$ is the unique solution to equation $M_tz+\nabla_2J_i(x_{i,t},y_{i,t})=0$. Accordingly, $M_tz^*_t+\nabla_2J_i(x_{i,t}, y_{i,t})=0$, $\forall~t\geq0$ and $i\in\mathcal{V}$. By (\ref{s3}) in Algorithm 1, we have
\begin{equation}\label{6-0}\begin{split}
 &\|z_{i,t+1}-z_{i, t}^*\|^2\\
&=\|z_{i,t}-z_{i, t}^*\|^2-2\alpha\langle z_{i,t}-z_{i, t}^*, M_t(z_{i,t}-z_{i, t}^*)\rangle\\
&~~~+2\alpha\langle z_{i,t}-z_{i, t}^*, M_tz_{i,t}-nv_{i,t}z_{i,t}\rangle\\
&~~~+\alpha^2\|nv_{i,t}z_{i,t}-M_tz_{i, t}^*\|^2\\
&\leq(1-2n\mu\alpha)\|z_{i,t}-z_{i, t}^*\|^2\\
&~~~+2\alpha\langle z_{i,t}-z_{i, t}^*, M_tz_{i,t}-nv_{i,t}z_{i,t}\rangle\\
&~~~+\alpha^2\|nv_{i,t}z_{i,t}-M_tz_{i, t}^*\|^2\\
\end{split}\end{equation}
where the inequality holds true due to the strong monotonicity of $g_i(x_{i,t}, \cdot)$. It is not difficult to verify that $\|z_{i, t}^*\|\leq \frac{K_1}{n\mu}$, then there holds
\begin{equation}\label{eq109}\begin{split}
    &\langle z_{i,t}-z_{i, t}^*, M_tz_{i,t}-nv_{i,t}z_{i,t}\rangle\\
    &\leq\|z_{i,t}-z_{i, t}^*\|\| (M_t-nv_{i,t})z_{i,t}\|\\
    &\leq\frac{n\mu}{2}\|z_{i,t}-z_{i, t}^*\|^2+\frac{1}{2n\mu}\| (M_t-nv_{i,t})z_{i,t}\|^2\\
    &\leq\frac{n\mu}{2}\|z_{i,t}-z_{i, t}^*\|^2+\frac{1}{n\mu}\| M_t-nv_{i,t}\|_\mathbb{F}^2\|z_{i,t}-z_{i, t}^*\|^2\\
    &~~~+\frac{K_1^2}{n^3\mu^3}\| M_t-nv_{i,t}\|_\mathbb{F}^2\\
 \end{split}\end{equation}
where the second inequality results by using the Young's inequality theory, and the last one holds due to the fact that $\|u+v\|^2\leq 2\|u\|^2+2\|v\|^2$ for any vectors $u, v\in\mathbb{R}^{m_2}$.
Moreover,
\begin{equation}\label{eq110}\begin{split}
  &\|nv_{i,t}z_{i,t}-M_tz_{i, t}^*\|^2\\
&=\|(nv_{i,t}z_{i,t}-M_tz_{i,t})+(M_tz_{i,t}-M_tz_{i, t}^*)\|^2\\
  &\leq2\|nv_{i,t}-M_t\|_\mathbb{F}^2\|z_{i,t}\|^2+2\|M_t\|_\mathbb{F}^2\|z_{i,t}-z_{i, t}^*\|^2\\
  &\leq4\|nv_{i,t}-M_t\|_\mathbb{F}^2\|z_{i,t}-z_{i, t}^*\|^2+4\|z_{i, t}^*\|^2\| M_t-nv_{i,t}\|_\mathbb{F}^2\\
  &~~~+2\|M_t\|_\mathbb{F}^2\|z_{i,t}-z_{i, t}^*\|^2\\
  &\leq\left(4\|nv_{i,t}-M_t\|_\mathbb{F}^2+2n^2K_0^2\right)\|z_{i,t}-z_{i, t}^*\|^2\\
  &~~~+\frac{4K_1^2}{n^2\mu^2} \| M_t-nv_{i,t}\|_\mathbb{F}^2\\
  \end{split}\end{equation}
where the last inequality holds by using the boundedness of $\nabla_{22}g_i(x_{i,t},y_{i,t})$ in Assumption 2. By (\ref{6-0}), (\ref{eq109}) and (\ref{eq110}), and using the fact that $M_t=n\bar{G}_t$, we have
\begin{equation*}\begin{split}
  \|z_{i,t+1}-z_{i, t}^*\|^2&\leq\Big(\vartheta^2+r_1^2\| \bar{G}_t-v_{i,t}\|_\mathbb{F}^2\Big)\|z_{i,t}-z_{i, t}^*\|^2\\
  &~~~+r_2^2\| \bar{G}_t-v_{i,t}\|_\mathbb{F}^2.\\
  \end{split}\end{equation*}
Due to the fact that $0<\alpha<\frac{\mu}{2nK_0^2}$ and $\mu<K_0$, there holds that $0<1-{n\mu\alpha}+2n^2K_0^2\alpha^2<1$. Thus,
\begin{equation}\label{eq111}\begin{split}
  &\|z_{i,t+1}-h_{i, t+1}\|\\
  &\leq \|z_{i,t+1}-z_{i, t}^*\|+\|h_{i, t+1}-h_{i, t}\|+\|z_{i, t}^*-h_{i,t}\|\\
&\leq\Big(\vartheta+r_1\| \bar{G}_t-v_{i,t}\|_\mathbb{F}\Big)\|z_{i,t}-h_{i,t}\|\\
  &~~~+r_2\| \bar{G}_t-v_{i,t}\|_\mathbb{F}+\|h_{i,t+1}-h_{i, t}\|\\
 &~~~+\Big(1+\vartheta+r_1\| \bar{G}_t-v_{i,t}\|_\mathbb{F}\Big)\|z_{i,t}^*-h_{i,t}\|.\\
  \end{split}\end{equation}
Denote $H_t=\sum_{i=1}^n\nabla_{22}g_i(x_{i,t},\sigma(x_t))$, by the strong convexity of $\sum_{i=1}^ng_i(\cdot,\cdot)$ with respect to the second argument, one has $\|M_t^{-1}\|_\mathbb{F}\leq \frac{1}{\mu}$ and $\|H_t^{-1}\|_\mathbb{F}\leq \frac{1}{\mu}$. Furthermore,
\begin{equation}\label{6-2}\begin{split}
&\|z_{i, t}^*-h_{i,t}\|\\
&=\|M_t^{-1}\nabla_2J_i(x_{i,t},y_{i,t})-H_t^{-1}\nabla_2J_i(x_{i,t},y_{i,t})\\
&~~~~+H_t^{-1}\nabla_2J_i(x_{i,t},y_{i,t})-H_t^{-1}\nabla_2J_i(x_{i,t},\sigma(x_t))\|\\
&\leq \|\nabla_2J_i(x_{i,t},y_{i,t})\|\|M_t^{-1}-H_t^{-1}\|_\mathbb{F}\\
&~~~~+\|H_t^{-1}\|_\mathbb{F}\|\nabla_2J_i(x_{i,t},y_{i,t})-\nabla_2J_i(x_{i,t},\sigma(x_t))\|\\
&\leq K_1\|M_t^{-1}(H_t-M_t)H_t^{-1}\|+\frac{L}{\mu}\|y_{i,t}-\sigma(x_t)\|\\
&\leq r_3\sum_{i=1}^n\|y_{i,t}-\sigma(x_t)\|
\end{split}\end{equation}
where the second inequality is true due to the boundedness of $\nabla_2J_i$ in Assumption 2 and its Lipschitz
continuity in Assumption 3, and the last one holds by the Lipschitz
continuity of $\nabla_{22}g_i$. Similarly, we have
\begin{equation}\label{113}\begin{split}
&\|h_{i,t+1}-h_{i,t}\|\\
&\leq \|\nabla_2J_i(x_{i,t+1},\sigma(x_{t+1}))\|\|H_{t+1}^{-1}-H_t^{-1}\|_\mathbb{F}\\
&~~~~+\|H_t^{-1}\|_\mathbb{F}\|\nabla_2J_i(x_{i,t+1},\sigma(x_{t+1}))-\nabla_2J_i(x_{i,t},\sigma(x_t))\|\\
&\leq K_1\|H_{t+1}^{-1}(H_{t}-H_{t+1})H_t^{-1}\|+\frac{2L}{\mu}\|\sigma(x_{t+1})-\sigma(x_t)\|\\
&\leq \frac{nK_1L}{\mu^2}\sum_{i=1}^n\|y_{i,t}-\sigma(x_t)\|+\frac{2L\ell}{\mu}\|x_{t+1}-x_t\|.\\
\end{split}\end{equation}
By (\ref{eq111})-(\ref{113}), and using (\ref{s4}), we immediately achieve (\ref{eq116}). Thus, the validity of the results is verified.\QEDA
\subsection{Proof of Theorem 1}
Before providing the proof of Theorem 1, the following two lemmas are necessary.
\begin{lemma}\label{LE6}
Given a nonnegative sequence $\{\xi_t\}_{t=0,1,\cdots}$ such that
 \begin{equation*}\begin{split}
\xi_{t+1}\leq \Big(1-\frac{b_1}{t+a}\Big)\xi_t+\frac{b_2}{(t+a)^2}+b_3c^t+\frac{b_4}{t+a}
 \end{split}
 \end{equation*}
for some $b_i>0$, $i=1, \cdots, 4$, if $b_1\geq1$, $a>b_1$ and $0<c<1$, then for any $t\geq0$,
 \begin{equation}\label{LEQ16}\begin{split}
\xi_{t}&\leq  \frac{c_1(\ln(t+a)+1)+2^{b_1}}{t-1+a}+c_2c^t+2b_4
 \end{split}
 \end{equation}
 where $c_1=(a\xi_0+2b_2+2^{b_1}b_2+b_3)\frac{c^{1-\lceil 1+a\rceil}}{(1-c)^2}$, $c_2=\frac{2^{b_1}b_3}{c}$.
\end{lemma}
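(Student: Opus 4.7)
The plan is to prove the bound by induction on $t$ after first eliminating the non-decaying term in the recursion. Setting $\zeta_t := \xi_t - 2b_4$ yields
\begin{equation*}
\zeta_{t+1} \le \Big(1-\tfrac{b_1}{t+a}\Big)\zeta_t + \tfrac{b_2}{(t+a)^2} + b_3 c^t + \tfrac{(1-2b_1)b_4}{t+a},
\end{equation*}
and since $b_1\ge 1$ forces $(1-2b_1)b_4/(t+a)\le 0$, this last term can be dropped. It therefore suffices to show $\zeta_t \le A_t + B_t$ for the reduced recursion, where $A_t := [c_1(\ln(t+a)+1)+2^{b_1}]/(t-1+a)$ and $B_t := c_2 c^t$; restoring the shift then gives the claim.

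For the inductive step, assuming $\zeta_t \le A_t + B_t$, the goal is to verify $(1-b_1/(t+a))(A_t+B_t) + b_2/(t+a)^2 + b_3 c^t \le A_{t+1} + B_{t+1}$, which I would split into an $A$-piece and a $B$-piece. The $A$-piece uses the key inequality $(t+a-b_1)/(t-1+a) \le 1$ (equivalent to $b_1 \ge 1$) together with $\ln(t+1+a) - \ln(t+a) \ge 1/(2(t+a))$ for $t+a\ge 1$; these give $(1-b_1/(t+a))A_t + b_2/(t+a)^2 \le A_{t+1}$ provided $c_1 \ge 2b_2$, which is ensured by the $2b_2$ summand in $c_1$. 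The $B$-piece inequality $(1-b_1/(t+a))c_2 c^t + b_3 c^t \le c_2 c^{t+1}$ holds directly in the regime $b_1/(t+a) \ge 1 - c + c/2^{b_1}$, i.e.\ for small $t$, and the choice $c_2 = 2^{b_1} b_3/c$ is tuned exactly to make this work; for larger $t$ the shortfall decays at the geometric rate $c^t$ and is absorbed back into $A_t$ by enlarging $c_1$ through the extra summands $b_3$ and $2^{b_1}b_2$ together with the amplification factor $c^{1-\lceil 1+a\rceil}/(1-c)^2$.

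The principal obstacle is the transient regime $t \le \lceil 1+a\rceil$, where the asymptotic log and geometric estimates used above are not yet tight and the ansatz must dominate the raw iterates directly. The additive $2^{b_1}$ in the numerator of $A_t$ is precisely the uncontracted worst-case bound on $\prod_{j=0}^{\lceil 1+a\rceil}(1-b_1/(j+a))^{-1}$-type quantities over this window; the $a\xi_0$ summand in $c_1$ absorbs the initial condition, and the factor $c^{1-\lceil 1+a\rceil}/(1-c)^2$ inflates $c_1$ by exactly the geometric buffer needed before the asymptotic $c^t$ decay kicks in. The hardest part will be the bookkeeping that matches the transient constants to the asymptotic ones so that the induction closes uniformly in $t$; once these constants are identified, adding the $A$- and $B$-piece estimates and applying the shift $\xi_t = \zeta_t + 2b_4$ recovers the stated bound.
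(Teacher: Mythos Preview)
Your approach is genuinely different from the paper's, and the gap you flag as ``the hardest part'' is in fact where the argument breaks down as written.

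The paper does not use induction at all. It unrolls the recursion directly:
\[
\xi_t \le \prod_{k=0}^{t-1}\Big(1-\tfrac{b_1}{k+a}\Big)\xi_0 + \sum_{k=0}^{t-1}\prod_{s=k+1}^{t-1}\Big(1-\tfrac{b_1}{s+a}\Big)\Big[\tfrac{b_2}{(k+a)^2}+b_3c^k+\tfrac{b_4}{k+a}\Big],
\]
bounds the products via $1-r\le e^{-r}$ to get $\prod_{s=k}^{t}(1-b_1/(s+a))\le((k+a)/(t+a))^{b_1}$, and then estimates the three resulting sums using $\sum_{k\le t}1/(k+a)\le \ln(t+a)+1/a$ and $\sum_{k}c^k(k+1+a)\le c^{1-\lceil 1+a\rceil}/(1-c)^2$. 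The constants $c_1,c_2$ and the additive $2^{b_1}$ fall out mechanically from this computation; they were not engineered to close an induction.

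Your induction plan runs into trouble at the $B$-piece. The regime $b_1/(t+a)\ge 1-c+c/2^{b_1}$ you identify for the $B$-inequality to hold may well be empty: since $a>b_1$, already at $t=0$ one has $b_1/(t+a)<1$, and for instance with $b_1=1$, $a=2$, $c=1/2$ the threshold is $3/4$, exceeded by $b_1/a=1/2$ only in the wrong direction. So the $B$-shortfall must be absorbed by the $A$-slack at \emph{every} step, and this requires verifying
\[
\tfrac{c_1}{2(t+a)^2}-\tfrac{b_2}{(t+a)^2}\ \ge\ c^t\big[c_2(1-c)+b_3\big]\qquad\text{for all }t\ge 0.
\]
This amounts to $c_1/2-b_2 \ge \max_t c^t(t+a)^2\cdot[c_2(1-c)+b_3]$, and while the orders of magnitude roughly match (the factor $c^{1-\lceil 1+a\rceil}/(1-c)^2$ in $c_1$ is comparable to $\max_t c^t(t+a)^2$), you have not shown the inequality with the \emph{exact} constants stated, and there is no reason to expect it to hold on the nose since those constants were produced by a different calculation. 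The paper's unrolling avoids this coupling entirely: each source term is summed separately, and no step-by-step balance between geometric and polynomial pieces is needed.
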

\begin{proof} It is not difficult to verify that (\ref{LEQ16}) holds if either $t=0$ or $t=1$. In what follows, we just verify the case when $t\geq2$.
Using the inequality $\exp(-r)\geq 1-r$ for any $r>0$ yields
\begin{equation}\label{2-0}\begin{split}
&\prod_{k=s}^t\Big(1-\frac{b_1}{k+a}\Big)\leq \exp\Big(-\sum_{k=s}^t\frac{r_1}{k+a}\Big)\\
&\leq \exp(b_1(\ln (s+a)-\ln (t+a)))\\
&=\Big(\frac{s+a}{t+a}\Big)^{b_1}
\end{split}\end{equation}
where the inequality holds due to $\sum_{k=s}^t\frac{1}{k+a}\geq \int_s^t\frac{1}{s+a}ds=\ln(t+a)-\ln(s+a)$.
By the iteration of $\xi_t$,  there holds that for any $t\geq2$,
\begin{equation*}\label{2-1}\begin{split}
\xi_t&\leq \prod_{k=0}^{t-1}\Big(1-\frac{b_1}{k+a}\Big)\xi_0+b_2\sum_{k=0}^{t-1}\prod_{s=k+1}^{t-1}\Big(1-\frac{b_1}{s+a}\Big)\frac{1}{(k+a)^2}\\
&~~~~+b_3\sum_{k=0}^{t-1}\prod_{s=k+1}^{t-1}\Big(1-\frac{b_1}{s+a}\Big)c^k\\
&~~~~+b_4\sum_{k=0}^{t-1}\prod_{s=k+1}^{t-1}\Big(1-\frac{b_1}{s+a}\Big)\frac{1}{k+a}\\
&\leq \Big(\frac{a}{t-1+a}\Big)^{b_1}\xi_0+b_2\sum_{k=0}^{t-1}\frac{(k+1+a)^{b_1}}{(t-1+a)^{b_1}(k+a)^2}\\
\end{split}\end{equation*}
\begin{equation}\label{2-1}\begin{split}
&~~~~+\sum_{k=0}^{t-1}\frac{b_3(k+1+a)^{b_1}c^k}{(t-1+a)^{r_1}}+\sum_{k=0}^{t-1}\frac{b_4(k+1+a)^{b_1}}{(t-1+a)^{b_1}(k+a)}\\
&\leq \frac{a\xi_0}{t-1+a}+\frac{2b_2}{t-1+a}\sum_{k=0}^{t-2}\frac{1}{k+a}+\frac{2^{b_1}b_2}{(t-1+a)^2}\\
&~~~~+\frac{b_3}{t-1+a}\sum_{k=0}^{t-2}(k+1+a)c^k+2^{b_1}b_3c^{t-1}\\
&~~~~+\frac{2^{b_1}b_4}{t-1+a}+{2b_4}\\
\end{split}\end{equation}
where the second inequality results from (\ref{2-0}), and the last one is true due to the facts that $r_1, a\geq 1$.
It is easy to verify that
\begin{equation}\label{2-2}\begin{split}
\sum_{k=0}^{t}\frac{1}{k+a}\leq \ln (t+a)+\frac{1}{a}.
\end{split}\end{equation}
Moreover, it is easy to compute that
\begin{equation}\label{2-3}\begin{split}
\sum_{k=0}^{t-2}c^k(k+1+a)&\leq c^{-\lceil 1+a\rceil}\sum_{k=0}^{t+\lceil 1+a\rceil}c^kk\\
&\leq \frac{c^{1-\lceil 1+a\rceil}}{(1-c)^2}.
\end{split}\end{equation}
By (\ref{2-1})-(\ref{2-3}), it immediately leads to the validity of Lemma \ref{LE6}.
\end{proof}
\begin{lemma}\label{LE7}
For any $a>0$ and $0<r<1$,
 \begin{equation*}\begin{split}
\sum_{s=0}^t\frac{r^{-s}}{s+a}\leq\frac{\delta_1r^{-t}}{t+a}+\frac{\delta_2 }{t+a}+\delta_3
 \end{split}
 \end{equation*}
 where $\delta_1=\frac{t_0+a}{r(\ln r^{-1}(t_0+a)-2)}$, $\delta_2=\frac{t_0r^{-t_0}(t_0+a)}{a}$, $\delta_3=\frac{2t_0r^{-t_0}(t_0+a)}{a^2(\ln r^{-1}(t_0+a)-2)}+\frac{1}{a}$ and $t_0=\max (\lceil \frac{2}{\ln r^{-1}}-a\rceil+1, 1)$.
\end{lemma}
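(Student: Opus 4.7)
The plan is to split the sum $S := \sum_{s=0}^t g(s)$, where $g(s) := r^{-s}/(s+a)$, at the threshold $s = t_0$ and control the two pieces by different mechanisms. Writing $q := \ln r^{-1} > 0$, the definition of $t_0$ guarantees $q(t_0+a) \ge 2 + q > 2$, which is exactly what is needed to make $g$ grow super-geometrically beyond $t_0$.

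For the tail $s \ge t_0$, I would establish the discrete ratio bound
\[
\frac{g(s)}{g(s+1)} \;=\; r\Bigl(1 + \frac{1}{s+a}\Bigr) \;\le\; \rho \;:=\; \frac{r(t_0+a+1)}{t_0+a} \;<\; 1,
\]
from which a telescoping argument immediately gives $\sum_{s=t_0}^t g(s) \le g(t)/(1-\rho)$. The crucial analytic step is then to show $1/(1-\rho) \le \delta_1$. Writing $1-\rho = [(1-r)(t_0+a) - r]/(t_0+a)$, one applies the elementary inequality $e^q \ge 1+q$ to obtain the lower bound $1 - r \ge r\bigl(q - 1/(t_0+a)\bigr)$, which yields $(1-r)(t_0+a) - r \ge r(q(t_0+a) - 2)$ and hence $1/(1-\rho) \le (t_0+a)/[r(q(t_0+a)-2)] = \delta_1$. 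This produces the dominant term $\delta_1 r^{-t}/(t+a)$ in the bound.

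For the head $\sum_{s=0}^{\min(t,t_0-1)} g(s)$, I would use the uniform estimates $r^{-s} \le r^{-(t_0-1)}$ and $1/(s+a) \le 1/a$, producing a constant of order $t_0 r^{-t_0}/a$. Cases then split: if $t < t_0$, the factor $(t_0+a)/(t+a) \ge 1$ allows $\delta_2/(t+a) = t_0 r^{-t_0}(t_0+a)/(a(t+a))$ to absorb the whole sum; if $t \ge t_0$, then $\delta_2/(t+a)$ may be small but the constant $\delta_3$, whose leading piece $2 t_0 r^{-t_0}(t_0+a)/[a^2(q(t_0+a)-2)]$ shares the structural constant $q(t_0+a)-2$ with $\delta_1$, supplies the remainder. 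The extra $1/a$ inside $\delta_3$ accounts for the single term $g(0) = 1/a$ that is not well controlled by the geometric comparison.

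The main obstacle will be matching the constants $\delta_2$ and $\delta_3$ uniformly in $t$: verifying that the three-term right-hand side stays above $S$ requires careful checks at the boundary $t = t_0$ (where $\delta_2/(t+a)$ transitions from dominating to decaying) and as $t \to \infty$ (where only $\delta_3$ must support the entire head contribution). The sharpness of the threshold $q(t_0+a) > 2$ is essential, because it is exactly what $e^q \ge 1+q$ requires in order to produce the constant $2$ in the denominators of $\delta_1$ and $\delta_3$; any weaker threshold would inflate these constants and break the bookkeeping.
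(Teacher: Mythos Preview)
Your approach is viable but genuinely different from the paper's. The paper does not split the sum discretely at $t_0$; instead it bounds $\sum_{s=0}^t g(s)\le \tfrac{1}{a}+\int_0^{t+1}g(s)\,ds$ and then, for $t\ge t_0$, integrates $\int_0^{t+1}\frac{r^{-s}}{s+a}\,ds$ by parts once. This produces a boundary term $\frac{r^{-t}}{rq(t+a)}$ plus a remainder $\frac{1}{q}\int_0^{t+1}\frac{r^{-s}}{(s+a)^2}\,ds$; splitting that remainder at $t_0$ and using $1/(s+a)\le 1/(t_0+a)$ on $[t_0,t+1]$ gives back $\frac{2}{q(t_0+a)}\int_0^{t+1}\frac{r^{-s}}{s+a}\,ds$, i.e.\ a self-referential inequality with contraction factor $2/[q(t_0+a)]<1$. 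Solving it yields exactly $\delta_1$ and the nontrivial piece of $\delta_3$ in one stroke, while the case $t<t_0$ is handled by the crude bound $\int_0^{t_0}g\le t_0r^{-t_0}/a$, which is where $\delta_2$ enters. So in the paper the constants $\delta_1,\delta_3$ are \emph{outputs} of the contraction argument, not quantities to be matched after the fact. Your discrete ratio test is more elementary and gives the tail bound $\delta_1 r^{-t}/(t+a)$ just as cleanly, but the obstacle you flag---showing the head bound $t_0 r^{-(t_0-1)}/a$ sits below $\delta_3$ for all $t\ge t_0$---is real and must be checked separately. It does go through: when $t_0=1$ the head is $1/a\le\delta_3$ trivially, and when $t_0\ge 2$ the definition of $t_0$ forces $a<2/q$, hence $raq<2$, from which $r\le \frac{2(t_0+a)}{a(q(t_0+a)-2)}$ and therefore $t_0 r^{-(t_0-1)}/a\le \frac{2t_0 r^{-t_0}(t_0+a)}{a^2(q(t_0+a)-2)}\le\delta_3$. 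Adding this short verification would close your argument; the paper's route avoids it by design.
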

\begin{proof}
It is noticed that
\begin{equation*}\begin{split}
\sum_{s=0}^t\frac{r^{-s}}{s+a}\leq \frac{1}{a}+\int_0^{t+1}\frac{r^{-s}}{s+a}ds.
 \end{split}
 \end{equation*}
For $t\geq t_0$, one has
\begin{equation*}\begin{split}
&\int_0^{t+1}\frac{r^{-s}}{s+a}ds\\
&=\frac{r^{-s}}{(s+a)\ln r^{-1}}\Big|_{s=0}^{t+1}-\int_0^{t+1}\frac{r^{-s}}{\ln r^{-1}}d\Big(\frac{1}{s+a}\Big)\\
 \end{split}\end{equation*}
\begin{equation}\label{8}\begin{split}
&\leq \frac{r^{-t}}{r\ln r^{-1}(t+a)}+\frac{2}{\ln r^{-1}}\int_0^{t_0}\frac{r^{-s}}{(s+a)^2}ds\\
&~~~~+\frac{2}{\ln r^{-1}}\int_{t_0}^{t+1}\frac{r^{-s}}{(s+a)^2}ds\\
&\leq \frac{r^{-t}}{r\ln r^{-1}(t+a)}+\frac{2t_0r^{-t_0}}{a^2\ln r^{-1}}\\
&~~~~+\frac{2}{\ln r^{-1}(t_0+a)}\int_{0}^{t+1}\frac{r^{-s}}{s+a}ds.
 \end{split}\end{equation}
Due to the fact that $\frac{2}{\ln r^{-1}(t_0+a)}\leq 1$, inequality (\ref{8}) implies that for any $t\geq t_0 $,
\begin{equation}\label{8-1}\begin{split}
\int_{0}^{t+1}\frac{r^{-s}}{s+a}ds&\leq \frac{t_0+a}{r(\ln r^{-1}(t_0+a)-2)}\frac{r^{-t}}{t+a}\\
&~~~~+\frac{2t_0r^{-t_0}(t_0+a)}{a^2(\ln r^{-1}(t_0+a)-2)}.
\end{split}\end{equation}
Moreover, for $t<t_0$, we have $\int_0^{t+1}\frac{r^{-s}}{s+a}ds\leq \int_0^{t_0}\frac{r^{-s}}{s+a}ds\leq \frac{t_0r^{-t_0}}{a}$. Together with (\ref{8-1}), it immediately leads to the validity of the result.
\end{proof}
 Then we can provide the proof of Theorem 1.

\textbf{The proof of Theorem 1.} By (\ref{s4}) in Algorithm 1 and using Lemma \ref{LE000}, we have
\begin{equation}\label{15}\begin{split}
&\sum_{i=1}^n\|x_{i,t+1}-x^*_{i}\|^2\\
&=\sum_{i=1}^n\Big\| (1-\eta_t)(x_{i,t}-x^*_{i})+\eta_tP_{\Omega_i}\big[x_{i,t}-k\hat{F}_{i,t}\big]\\
&~~~-\eta_tP_{\Omega_i}[x_{i}^*-k{\nabla _{{x_i}}}J_{i}(x_i^*,\sigma(x^*))]\Big\|^2\\
&\leq \sum_{i=1}^n\|x_{i,t}-x^*_i\|^2+2k^2\eta_t\sum_{i=1}^n\|\hat{F}_{i,t}-\nabla_{x_i}J_i(x_{i,t},\sigma(x_t))\|^2\\
&~~~-2k\eta_t\sum_{i=1}^n\Big\langle x_{i,t}-x^*_i, \hat{F}_{i,t}-\nabla_{x_i}J_i(x^*_i,\sigma(x^*))\Big\rangle\\
&~~~+2k^2\eta_t\|F(x_t)-F(x^*)\|^2\\
&\leq \sum_{i=1}^n\|x_{i,t}-x^*_i\|^2+2k^2\eta_t\sum_{i=1}^n\|\hat{F}_{i,t}-\nabla_{x_i}J_i(x_{i,t},\sigma(x_t))\|^2\\
&~~~-2k\eta_t\sum_{i=1}^n\Big\langle x_{i,t}-x^*_i, \hat{F}_{i,t}-\nabla_{x_i}J_i(x^*_i,\sigma(x^*))\Big\rangle\\
&~~~+2k^2L_0\eta_t\|x_t-x^*\|^2\\
\end{split}\end{equation}
where the first inequality holds by using the Jensen's inequality theory and the non-expansive property of projection, and the second one results from Lemma 2. Using the strong monotonicity of the pseudo-gradient mapping $F(x)$ yields that
\begin{equation}\label{15-2}\begin{split}
&-\sum_{i=1}^n\Big\langle x_{i,t}-x^*_i,\hat{F}_{i,t}-\nabla_{x_i}J_i(x^*_i,\sigma(x^*))\Big\rangle\\
&=-\sum_{i=1}^n\Big\langle x_{i,t}-x^*_i,\hat{F}_{i,t}-\nabla_{x_i}J_i(x_{i,t},\sigma(x_t))\Big\rangle\\
&~~~-\Big\langle x_{t}-x^*,F(x_t)-F(x^*)\Big\rangle\\
&\leq 2K\sum_{i=1}^n\|\hat{F}_{i,t}-\nabla_{x_i}J_i(x_{i,t},\sigma(x_t))\|-\theta\|x_t-x^*\|^2.
\end{split}\end{equation}
Moreover, by the fact that $\|h_{i,t}\|\leq\frac{K_1}{\mu}$, we have
\begin{equation}\label{15-3}\begin{split}
&\|\hat{F}_{i,t}-\nabla_{x_i}J_i(x_{i,t},\sigma(x_t))\|\\
& \leq \|\nabla_1J_i(x_{i,t},y_{i,t})-\nabla_1J_i(x_{i,t},\sigma(x_t))\|\\
&~~~~+\|\nabla_{21}g_i(x_{i,t},y_{i,t})z_{i,t}-(\nabla_{x_i}\sigma(x_t))^T\nabla_2J_i(x_{i,t},\sigma(x_t))\|\\
&\leq L\|y_{i,t}-\sigma(x_t)\|+\|\nabla_{21}g_i(x_{i,t},y_{i,t})z_{i,t}\\
&~~~~-\nabla_{21}g_i(x_{i,t},\sigma(x_t))h_{i,t}\|\\
&\leq (1+K_1/\mu)L\|y_{i,t}-\sigma(x_t)\|+K_3\|z_{i,t}-h_{i,t}\|
\end{split}\end{equation}
where the second inequality results from the  Lipschitz continuity of $\nabla_1J_i(\cdot, \cdot)$ with respect to the second argument, and the last one holds by using the Lipschitz continuity of $\nabla_{21}g_i(\cdot, \cdot)$ with respect to the second argument. By (\ref{15})-(\ref{15-3}), there holds that
\begin{equation}\label{15-4}\begin{split}
&\|x_{t+1}-x^*\|^2\\
&\leq(1-2k(\theta-kL_0)\eta_t) \|x_{t}-x^*\|^2+\\
&~~~4k^2(1+K_1/\mu)^2L^2\eta_t\sum_{i=1}^n\|y_{i,t}-\sigma(x_t)\|^2\\
&~~~+4k^2K_3^2\eta_t\sum_{i=1}^n\|z_{i,t}-h_{i,t}\|^2\\
&~~~+4kK(1+K_1/\mu)L\eta_t\sum_{i=1}^n\|y_{i,t}-\sigma(x_t)\|\\
&~~~+4k KK_3\eta_t\sum_{i=1}^n\|z_{i,t}-h_{i,t}\|.\\
\end{split}\end{equation}
Using Lemma \ref{LE2} and Lemma \ref{LE7}, we have
\begin{equation}\label{eq1101}\begin{split}
&\|y_{i,t}-\sigma(x_t)\|\leq \mathcal{O}\left(\gamma^t+\frac{1}{t+a} \right).\\
\end{split}\end{equation}
Note that $t\tau^{t}\leq\mathcal{O}\left(\frac{1}{t+a}\right)$. Combining Lemma \ref{LE4} and Lemma \ref{LE7} results in that
\begin{equation}\label{eq1102}\begin{split}
\|v_{i,t}-\bar{G}_{t}\|_\mathbb{F}&\leq \mathcal{O}\left(\frac{1}{t+a}+\tau^{t}\right).
\end{split}\end{equation}
Moreover, based on (\ref{eq116}) in Lemma \ref{LE5}, there holds
\begin{equation*}\begin{split}
  \|z_{i,t+1}-h_{i, t+1}\|&\leq\left(\vartheta+ \mathcal{O}\left(\frac{1}{t+a}+\tau^{t}\right)\right)\|z_{i,t}-h_{i, t}\|\\
 &~~~+ \mathcal{O}\left(\frac{1}{t+a}+\tau^{t}\right).
  \end{split}\end{equation*}
Note that $\frac{1}{t+a}+\tau^{t}$ decays, there must exist some finite $t_0>0$ such that $\mathcal{O}\left(\frac{1}{t+a}+\tau^{t}\right)\leq \frac{1-\vartheta}{2}$ for any $t\geq t_0$. Denote $\varepsilon=\frac{1+\vartheta}{2}$, by the fact that $0<\vartheta<1$, we have $0<\varepsilon<1$. Accordingly,
\begin{equation}\label{eq1103}\begin{split}
&\|z_{i,t+1}-h_{i, t+1}\|\leq\mathcal{O}\left(\frac{1}{t+a}+\varepsilon^{t}\right).
  \end{split}\end{equation}
By (\ref{15-4}), (\ref{eq1101}) and (\ref{eq1103}), we have
\begin{equation}\label{15-5}\begin{split}
\|x_{t+1}-x^*\|^2\leq&\left(1-\frac{2kb(\theta-kL_0)}{t+a}\right) \|x_{t}-x^*\|^2\\
&+\mathcal{O}\left(\frac{1}{(t+a)^2}+\epsilon^{t}\right).
\end{split}\end{equation}
where $\epsilon=\max(\gamma,\tau, \varepsilon)$.  By the facts that $a> \max (2k(\theta-kL_0)b, b)$ and $b\geq \max \left(\frac{1}{2k(\theta-kL_0)}, 1\right)$, one knows that $2kb(\theta-kL_0)\geq1$, $a> 1$, and ${\frac{b}{t+a}}\leq1$. Then
using Lemma \ref{LE6} and letting $r_4=0$ immediately imply $\|x_t-x^*\|\leq\mathcal{O}\left(\sqrt{\ln t/{t}}+\sqrt{\epsilon}^{t}\right)$.  Together with the fact that $\mathcal{O}\left(\sqrt{\epsilon}^{t}\right)\leq\mathcal{O}\left(\sqrt{\ln t/{t}}\right)$, we have $\|x_t-x^*\|\leq\mathcal{O}\left(\sqrt{\ln t/{t}}\right)$. Furthermore, from (\ref{eq1101}), it follows that
\begin{equation}\label{15-115}\begin{split}
\|y_{i,t}-\sigma(x^*)\|&=\|(y_{i,t}-\sigma(x_t))+(\sigma(x_t)-\sigma(x^*))\|\\
&\leq\|y_{i,t}-\sigma(x_t)\|+\ell\|x_t-x^*\|\\
&\leq \mathcal{O}\left(\sqrt{\ln t/{t}}+\tau^t+\frac{1}{t+a} \right)\\
&= \mathcal{O}\left(\sqrt{\ln t/{t}} \right)
\end{split}\end{equation}
where the second inequality results from (\ref{104-22}). This completes the proof.\QEDA
\subsection{Proof of Lemma 7}
By (\ref{eq2001}), we have
\begin{equation}\label{eq2003}\begin{split}
&\delta_i\nabla_2 J_i(x_{i, t},y_{i}(\delta_i,x_t))+\sum_{j=1}^n\nabla_2 g_j(x_{j, t}, y_{i}(\delta_i,x_t))\\
&=0.
\end{split}\end{equation}
Taking the derivative results in that
\begin{equation*}\begin{split}
&\left(\sum_{j=1}^n\nabla_{22} g_j(x_{j, t}, y_i(s,x_t))\right)\frac{dy_{i}(s,x_t)}{ds}+\nabla_2 J_i(x_{i, t},\\
&y_{i}(s,x_t))+\frac{sdy_{i}(s,x_t)}{ds}\nabla_{22} J_i(x_{i, t},y_{i}(s,x_t))=0, s\in[0, \delta_i].
\end{split}\end{equation*}
Thus,
\begin{equation}\label{eq2005}\begin{split}
&\frac{dy_{i}(s,x_t)}{ds}=-\Big(\sum_{j=1}^n\nabla_{22} g_j(x_{j, t}, y_{i}(s,x_t))+s\nabla_{22} J_i(x_{i, t},\\
&~~~y_{i}(s,x_t))\Big)^{-1}\nabla_2 J_i(x_{i, t},y_{i}(s,x_t)),~~s\in[0, \delta_i].\\
\end{split}\end{equation}
Letting $s=0$, by (\ref{a2}) and (\ref{eq2001}), we have $y_{i}(0,x_t)=\sigma(x_t)$ for any $i\in \mathcal{V}$. Then there holds that
\begin{equation}\label{eq2006}\begin{split}
&~~~~~~\frac{dy_{i}(0,x_t)}{ds}=\frac{dy_{i}(s,x_t)}{ds}\Big|_{s=0}=\\
&-\Big(\sum_{j=1}^n\nabla_{22} g_j(x_{j, t}, \sigma(x_t))\Big)^{-1}\nabla_2 J_i(x_{i, t},\sigma(x_t)).\\
\end{split}\end{equation}
By (\ref{eq2006}), using Proposition 1, we have
\begin{equation}\label{eq2007}\begin{split}
&(\nabla_{21}g_i(x_{i, t}, \sigma(x_t)))^T\frac{dy_{i}(0,x_t)}{ds}\\
&=(\nabla_{x_i} \sigma(x_t))^T\nabla_2 J_i(x_{i, t},\sigma(x_t)).\\
\end{split}\end{equation}
From (\ref{eq2007}), by the fact that $(\nabla_{21}g_i(x_{i, t}, \sigma(x_t)))^T=\nabla_{12}g_i(x_{i, t}, \sigma(x_t))$, and using the fundamental theorem of calculus, it follows that
\begin{equation}\label{eq2008}\begin{split}
&\|D_{i,t}-(\nabla_{x_i} \sigma(x_t))^T\nabla_2 J_i(x_{i, t},\sigma(x_t))\Big\|\\
&=\frac{1}{\delta_i}\Big\|\int_0^{\delta_i}\nabla_{12} g_i(x_{i,t}, y_{i}(s,x_t))\frac{dy_{i}(s,x_t)}{ds}ds\\
&~~~-\nabla_{12}g_i(x_{i, t}, \sigma(x_t))\frac{dy_{i}(0,x_t)}{ds}\Big\|\\
&=\frac{1}{\delta_i}\Big\|\int_0^{\delta_i}\Big(\nabla_{12} g_i(x_{i,t}, y_{i}(s,x_t))\frac{dy_{i}(s,x_t)}{ds}\\
&~~~-\nabla_{12}g_i(x_{i, t}, y_{i}(0,x_t))\frac{dy_{i}(s,x_t)}{ds}\Big)ds\\
&~~~+\frac{1}{\delta_i}\int_0^{\delta_i}\Big(\nabla_{12} g_i(x_{i,t}, \sigma(x_t))\frac{dy_{i}(s,x_t)}{ds}\\
&~~~-\nabla_{12}g_i(x_{i, t}, \sigma(x_t))\frac{dy_{i}(0,x_t)}{ds}\Big)ds\Big\|\\
&\leq\frac{L}{\delta_i}\int_0^{\delta_i}\|y_{i}(s,x_t)-y_{i}(0,x_t)\|\left\|\frac{dy_{i}(s,x_t)}{ds}\right\|ds\\
&~~~+\frac{K_2}{\delta_i}\int_0^{\delta_i}\left\|\frac{dy_{i}(s,x_t)}{ds}-\frac{dy_{i}(0,x_t)}{ds}\right\|ds.\\
\end{split}\end{equation}
By (\ref{eq2005}), and using Assumption \ref{as2} yields that $\|\frac{dy_{i}(s,x_t)}{s}\|\leq\frac{K_1}{\sqrt{n}\mu}$. As a result, $\|y_{i}(s_1, x_t)-y_{i}(s_2, x_t)\|\leq \frac{K_1|s_1-s_2|}{\sqrt{n}\mu}$ for any $s_1, s_2\in[0, \delta_i]$. Employing the same approaches used in (\ref{6-2}), we can achieve that
\begin{equation}\label{eq2009}\begin{split}
&\left\|\frac{dy_{i}(s,x_t)}{ds}-\frac{dy_{i}(0,x_t)}{ds}\right\|\\
&\leq \frac{nK_1L}{\mu^2}(\|y_{i}(s,x_t)-y_{i}(0,x_t)\|+\hbar s)\\
&~~~+\frac{nL}{\mu}\|y_{i,t}(s,x_t)-y_{i}(0,x_t)\|\\
&\leq \mathcal{C}_0s.
\end{split}\end{equation}
It follows from (\ref{eq2008}) and (\ref{eq2009}) that
\begin{equation}\label{eq2010}\begin{split}
&\|D_{i,t}-(\nabla_{x_i} \sigma(x_t))^T\nabla_2 J_i(x_{i, t},\sigma(x_t))\|\\
&\leq\frac{LK_1^2}{n\delta_i\mu^2}\int_0^{\delta_i}s ds+\frac{\mathcal{C}_0K_2}{\delta_i}\int_0^{\delta_i}s ds\\
&=  \mathcal{C}\delta_i.
\end{split}\end{equation}
Thus, the validity of the result is verified.\QEDA
\subsection{Proof of Theorem 2}
To prove Theorem 2, we present the error between the estimate $w_{i,t}$ and the real value of $y_{i}(\delta_i, x_t)$.
\begin{lemma}\label{LE12}
Under Assumptions 1-3 and $5$, by Algorithm 2, for any $t\geq0$ and $i, k\in\mathcal{V}$,
\begin{equation}\label{eq4001}\begin{split}
\|w_{ik,t}-y_{i}(\delta_i, x_t)\|\leq\mathcal{O}\left(\gamma^t+\sum_{s=1}^t\gamma^{t-s}\eta_{s-1}\right). \\
\end{split}\end{equation}
\end{lemma}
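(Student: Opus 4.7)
The plan is to adapt the proof of Lemma \ref{LE2} almost verbatim, since the update (\ref{eq2002}) is structurally the same primal-dual scheme as (\ref{s1}), just applied to a modified objective that defines $y_i(\delta_i, x_t)$ via (\ref{eq2001}). First, I would reformulate (\ref{eq2001}) as the consensus-constrained problem
\begin{equation*}
\min_{\tilde w_i \in \mathbb{R}^{nm_2}} \delta_i J_i(x_{i,t}, w_{ii}) + \sum_{k=1}^n g_k(x_{k,t}, w_{ik}), \quad \text{s.t. } (B\otimes I_{m_2})\tilde w_i = 0,
\end{equation*}
where $\tilde w_i = \mathrm{col}(w_{ik})_{k\in\mathcal{V}}$ and $B = I - A$. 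By Assumption \ref{as5} and the $\mu$-strong convexity of $\sum_k g_k(x_{k,t},\cdot)$, the combined objective is $\mu$-strongly convex in the second argument, so the unique primal solution is the consensus vector $w^*_{i,t} = \mathbf{1}_n \otimes y_i(\delta_i, x_t)$. Writing out the KKT system as in (\ref{-a3}) produces a dual sequence $\{u^*_{i,t}\}$ that lies in $\mathrm{Null}(\mathbf{1}_n^T \otimes I_{m_2})$, and the analog of Lemma \ref{LE-1} follows by the same orthogonal decomposition argument, giving $\|u^*_{i,t+1}-u^*_{i,t}\| \le \tfrac{1}{\lambda_2(B)}$ times the variation of the gradient of the Lagrangian in $t$.

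Second, I would mirror the quadratic-form computation leading to (\ref{eq100}). Expanding $\|w_{i,t+1}-w^*_{i,t}\|^2 + \|u_{i,t+1}-u^*_{i,t}\|^2$ and cancelling the cross terms works exactly as in (\ref{10})--(\ref{11-1}); the only modification is that the operator whose strong monotonicity and Lipschitz continuity drive the contraction is now $y \mapsto \delta_i \nabla_2 J_i(x_{i,t}, y) \, e_i + \nabla_2 g(x_t, y)$ (with $e_i$ picking out the $i$-th block), which is $\mu$-strongly monotone and $(L + \hbar \delta_i)$-Lipschitz in the second argument by Assumptions \ref{as3} and \ref{as5}. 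The step-size restriction $\beta_i < \min(\tfrac{1}{L+\hbar\delta_i}, \tfrac{\mu}{(\lambda_n(I-A))^2})$ imposed in Algorithm 2 is precisely what is needed so that the coefficient $\gamma_0 = \max(1 - \beta_i \mu(1-\beta_i(L+\hbar\delta_i)), 1 - \beta_i^2 (\lambda_2(B))^2)$ lies in $[0,1)$. Taking $\gamma = \sqrt{\gamma_0}$ yields $\|e_{i,t+1}\| \le \gamma \|e_{i,t}\| + \|r_{i,t}\|$, where $e_{i,t}$ is the weighted concatenation of the primal and dual errors and $r_{i,t}$ absorbs the drift of the KKT point between consecutive iterations.

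Third, I would bound $\|r_{i,t}\|$ using the analog of (\ref{eq102}): the Lipschitz continuity of $\nabla_2 J_i$ and $\nabla_2 g_k$ in the first argument plus the fact (parallel to (\ref{104-22})) that $y_i(\delta_i, \cdot)$ is Lipschitz in $x_t$ (which follows from the implicit function theorem applied to (\ref{eq2003}), using that $\nabla_{22} J_i$ is bounded by $\hbar$ and $\sum_k \nabla_{22} g_k$ is uniformly positive definite). This gives $\|r_{i,t}\| \le c\|x_{t+1}-x_t\|$ for some $c$ independent of $t$. Then by the projected gradient step (\ref{s44}) and the boundedness of $\Omega_i$ from Assumption \ref{as2}, we have $\|x_{t+1}-x_t\| \le 2\sqrt{n} K \eta_t$. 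Unrolling the recursion $\|e_{i,t+1}\| \le \gamma \|e_{i,t}\| + \mathcal{O}(\eta_t)$ gives $\|e_{i,t}\| \le \gamma^t \|e_{i,0}\| + \mathcal{O}(\sum_{s=1}^t \gamma^{t-s}\eta_{s-1})$, and since $\|w_{ik,t} - y_i(\delta_i, x_t)\|$ is bounded above by the primal-part norm of $e_{i,t}$, (\ref{eq4001}) follows.

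The main obstacle is primarily bookkeeping: tracking the contribution of the extra $\delta_i \nabla_2 J_i$ term throughout the quadratic expansion and verifying that the step-size condition in Algorithm 2 is exactly the one required to preserve the contraction. The implicit-function argument establishing Lipschitz dependence of $y_i(\delta_i, x_t)$ on $x_t$, which underwrites the $\|r_{i,t}\| = \mathcal{O}(\|x_{t+1}-x_t\|)$ bound, is the only genuinely new ingredient, but it is routine given the uniform strong convexity in $y$ and the boundedness of $\nabla_{22} J_i$ from Assumption \ref{as5}.
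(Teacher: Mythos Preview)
Your proposal is correct and takes essentially the same approach as the paper. The paper's own proof is much terser: it simply observes that the per-block objective $g_k(x_k,\cdot)+c_{ki}\delta_i J_i(x_i,\cdot)$ is $\mu$-strongly convex and $(L+\hbar\delta_i)$-Lipschitz in the second argument, derives the Lipschitz continuity of $y_i(\delta_i,\cdot)$ via the implicit-function identity (\ref{3104}), and then invokes ``the same approaches used in the proof of Lemma~\ref{LE2}''---which is exactly the contraction-plus-drift argument you have written out in full.
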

\begin{proof} Based on  Assumptions 3 and 5, it is easy to conclude that $\nabla_2g_k(x_k,\cdot)-c_{ki}\delta_i\nabla_2J_i(x_{i},\cdot)$ is $\mu-$strongly monotone and $L_1-$Lipschitz continuous, where $L_1=L+\delta_i\hbar$. Moreover, by (\ref{eq2003}), letting $x_t=x$, and taking the Jacobian matrix with respect to $x_i$, we have
\begin{equation*}
\begin{split}
&\delta_i\nabla_{21} J_i(x_{i},y_{i}(\delta_i,x)) + \nabla_{21}g_i(x_i, y_{i}(\delta_i,x))+ \nabla_{x_i} y_{i}(\delta_i, \\
&x)\Big(\sum_{i=1}^n \nabla_{22}g_i(x_i, y_{i}(\delta_i,x))+\delta_i\nabla_{22} J_i(x_{i},y_{i}(\delta_i,x))\Big)=0.
\end{split}
\end{equation*}
It follows that
\begin{equation}\label{3104}
\begin{split}
\nabla_{x_i} y_{i}(\delta_i,x)=&-\Big(\delta_i\nabla_{21} J_i(x_{i},y_{i}(\delta_i,x))+ \nabla_{21}g_i(x_i, \\
& y_{i}(\delta_i,x))\Big)\Big(\sum_{i=1}^n \nabla_{22}g_i(x_i, y_{i}(\delta_i,x))\\
&+\delta_i\nabla_{22} J_i(x_{i},y_{i}(\delta_i,x))\Big)^{-1}.
\end{split}
\end{equation}
By (\ref{3104}), using Assumptions 2 and 5, we know that $\nabla_{x_i} y_{i}(\delta_i,x)$ is bounded for any $i\in\mathcal{V}$. Thus, $y_{i}(\delta_i,x)$ is Lipschitz continuous.  Then using the same approaches used in the proof of Lemma \ref{LE2}, the validity of (\ref{eq4001}) is verified.
\end{proof}

\textbf{Proof of Theorem 2.} By (\ref{15}) and (\ref{15-2}), we have
\begin{equation}\label{3105}\begin{split}
&\sum_{i=1}^n\|x_{i,t+1}-x^*_{i}\|^2\\
&\leq (1-2k(\theta-kL_0)\eta_t)\sum_{i=1}^n\|x_{i,t}-x^*_i\|^2\\
&~~~+2k^2\eta_t\sum_{i=1}^n\|\hat{F}_{i,t}-\nabla_{x_i}J_i(x_{i,t},\sigma(x_t))\|^2\\
&~~~-4kK\eta_t\sum_{i=1}^n\|\hat{F}_{i,t}-\nabla_{x_i}J_i(x_{i,t},\sigma(x_t))\|.\\
\end{split}\end{equation}
It is noticed that
\begin{equation}\label{3109}\begin{split}
&\|\hat{F}_{i,t}-\nabla_{x_i}J_i(x_{i,t},\sigma(x_t))\|\\
& \leq \|\nabla_1J_i(x_{i,t},y_{i,t})-\nabla_1J_i(x_{i,t},\sigma(x_t))\|\\
&~~~~+\|d_{i,t}-(\nabla_{x_i}\sigma(x_t))^T\nabla_2J_i(x_{i,t},\sigma(x_t))\|\\
& \leq \|\nabla_1J_i(x_{i,t},y_{i,t})-\nabla_1J_i(x_{i,t},\sigma(x_t))\|\\
&~~~~+\|D_{i,t}-(\nabla_{x_i}\sigma(x_t))^T\nabla_2J_i(x_{i,t},\sigma(x_t))\|\\
&~~~~+\|D_{i,t}-d_{i,t}\|\\
& \leq L\|y_{i,t}-\sigma(x_t)\|+\|D_{i,t}-d_{i,t}\|+\mathcal{C}\delta_i\\
\end{split}\end{equation}
where the last inequality holds by using the Lipschitz continuity in Assumption 3 and Lemma \ref{LE12}. Moreover, combining (\ref{s33}) and (\ref{eq113001}) results in that
\begin{equation}\label{3110}\begin{split}
\|D_{i,t}-d_{i,t}\|&\leq\frac{1}{\delta_i}\|\nabla_2 g_i(x_{i,t}, w_{ii,t})-\nabla_2 g_i(x_{i,t}, y_{i}(\delta_i,x_t))\|\\
&~~~+\frac{1}{\delta_i}\|\nabla_2 g_i(x_{i,t}, y_{i,t})-\nabla_2 g_i(x_{i,t}, \sigma(x_t))\|\\
&\leq\frac{L}{\delta_i}(\|w_{ii,t}-y_{i}(\delta_i,x_t)\|+\|y_{i,t}- \sigma(x_t)\|).
\end{split}\end{equation}
By (\ref{3105})-(\ref{3110}), we have
\begin{equation*}\begin{split}
&\sum_{i=1}^n\|x_{i,t+1}-x^*_{i}\|^2\\
&\leq (1-2k(\theta-kL_0)\eta_t)\sum_{i=1}^n\|x_{i,t}-x^*_i\|^2\\
&~~~+\frac{6k^2L^2\eta_t}{\delta_i^2}\sum_{i=1}^n  \|w_{ii,t}-y_{i}(\delta_i,x_t)\|^2 \\
&~~~+6k^2L^2\left(1+\frac{1}{\delta_i}\right)^2\eta_t\sum_{i=1}^n  \|y_{i,t}- \sigma(x_t)\|^2 +      \\
&~~~+4kK L\left(1+\frac{1}{\delta_i}\right)\eta_t\sum_{i=1}^n\|y_{i,t}- \sigma(x_t)\|+\\
&~~~\frac{4kKL\eta_t}{\delta_i}\sum_{i=1}^n\|w_{ii,t}-y_{i}(\delta_i,x_t)\|+\\
&~~~\left(4kK\mathcal{C}\sum_{i=1}^n\delta_i+6k^2\mathcal{C}^2\sum_{i=1}^n\delta_i^2\right)\eta_t.\\
\end{split}\end{equation*}
By the same approaches used in (\ref{15-4})-(\ref{15-5}), there holds that
\begin{equation*}\begin{split}
\|x_{t}-x^*\|^2\leq&\left(1-\frac{2kb(\theta-kL_0)}{t+a}\right) \|x_{t}-x^*\|^2\\
&+\mathcal{O}\left(\frac{1}{(t+a)^2}+\epsilon^{t}+\frac{\sum_{i=1}^n(\mathcal{C}\delta_i+\mathcal{C}^2\delta_i^2)}{t+a}\right)
\end{split}\end{equation*}
where $\epsilon$ is defined in (\ref{15-5}). Using Lemma \ref{LE6} again immediately implies (\ref{3120}), and inequality (\ref{3121}) can be achieved by using the same approach used in (\ref{15-115}). This completes the proof.\QEDA



%



\ifCLASSOPTIONcaptionsoff
  \newpage
\fi




\begin{thebibliography}{1}
\bibitem{Jensen}M. K. Jensen. Aggregative games. Handbook of Game Theory and Industrial Organization, volume I, pp. 66-92, 2018.
\bibitem{Alpcan} T. Alpcan, T. Basar. A game-theoretic framework for congestion control in general topology networks. Proceedings of the 41st IEEE Conference on Decision and Control, vol. 2: pp. 1218-1224, 2002.
\bibitem{Shahidehpour}H. Wu, M. Shahidehpour, A. Alabdulwahab, A. Abusorrah. A game theoretic approach to risk-based optimal bidding strategies for electric vehicle aggregators in electricity markets with variable wind energy resources. IEEE Transactions on Sustainable Energy, vol. 7, no. 1, pp. 374-385, 2015.
\bibitem{Algazin}G. I. Algazin, D. G. Algazina. Aggregate estimates of reflexive collective behavior
dynamics in a Cournot oligopoly model. Automation and Remote Control, vol. 85, no. 9, pp. 923-933, 2024.
 \bibitem{Y. You}Y. You, Q. Xu, C. Fischione. Hierarchical online game-theoretic framework for real-time energy trading in smart grid. IEEE Transactions on Smart Grid, vol. 15, no. 2, pp. 1634-1645, 2024.
\bibitem{Pave2}F. Salehisadaghiani, L. Pavel. Distributed Nash equilibrium seeking: A
gossip-based algorithm. Automatica, vol. 72, pp. 209-216, 2016.
\bibitem{Ye1} M. Ye, G. Hu. Game design and analysis for price-based demand response: An aggregate game approach. IEEE transactions on cybernetics, vol. 47, no. 3, pp.720-730, 2016.
\bibitem{Lu}K. Lu, G. Jing, L. Wang. Distributed algorithms for searching generalized Nash equilibrium of noncooperative games. IEEE Transactions on Cybernetics, vol. 49, no. 6, pp. 2362-2371, 2019.
\bibitem{Parise2020}F. Parise, S. Grammatico, B. Gentile, J. Lygeros. Distributed convergence to Nash equilibria in network and average aggregative games. Automatica, vol. 117, pp. 108959, 2020.

\bibitem{Paccagnan2019}D. Paccagnan, B. Gentile, F. Parise, M. Kamgarpour, J. Lygeros. Nash and Wardrop equilibria in aggregative games with coupling constraints. IEEE Transactions on Automatic Control, vol. 64, no. 4, pp. 1373-1388, 2019.
\bibitem{Zhu2021}Y. Zhu, W. Yu, G. Wen, G. Chen. Distributed Nash equilibrium seeking in an aggregative game on a directed graph. IEEE Transactions on Automatic Control, vol. 66, no. 6, pp. 2746-2753, 2021.
\bibitem{Lei2020}J. Lei, U. V. Shanbhag, J. Chen. Distributed computation of Nash equilibria for monotone aggregative games via iterative regularization. IEEE Conference on Decision and Control (CDC), pp. 2285-2290, 2020.



\bibitem{Ye2022}M. Ye, G. Hu, L. Xie, S. Xu. Differentially private distributed Nash equilibrium seeking for aggregative games. IEEE Transactions on Automatic Control, vol. 67, no. 5, pp. 2451-2458, 2022.


\bibitem{Wang2024}Y. Wang, A. Nedi\'{c}. Differentially private distributed algorithms for aggregative games with guaranteed convergence. IEEE Transactions on Automatic Control, vol. 69, no. 8, pp. 5168-5183, 2024.
    \bibitem{Yi2023}T. Wang, P. Yi, J. Chen, Distributed mirror descent method with operator extrapolation for stochastic aggregative games, Automatica, vol. 159, pp. 111356, 2024.
\bibitem{Belgioioso2021}G. Belgioioso, A. Nedi\'{c}, S. Grammatico. Distributed generalized Nash equilibrium seeking in aggregative games on time-varying networks. IEEE Transactions on Automatic Control, vol. 66, no. 5, pp. 2061-2075, 2021.

\bibitem{Carnevale2024}G. Carnevale, F. Fabiani, F. Fele, K. Margellos, G. Notarstefano. Tracking-dased distributed equilibrium seeking for aggregative games. IEEE Transactions on Automatic Control, vol. 69, no. 9, pp. 6026-6041, 2024.
\bibitem{Liang2017}S. Liang, P. Yi, Y. Hong. Distributed Nash equilibrium seeking for aggregative games with coupled constraints. Automatica, vol. 85, pp. 179-185, 2017.
\bibitem{Chen} M. Maljkovic, G. Nilsson, N. Geroliminis. On finding the leader's strategy in quadratic aggregative stackelberg pricing games. European Control Conference (ECC), IEEE, pp. 1-6, 2023.
\bibitem{Maljkovic} R. Li, G. Chen, D. Gan, H. Gu, J. L\"{u}. Stackelberg and Nash equilibrium computation in non-convex leader-follower network aggregative games. IEEE Transactions on Circuits and Systems I: Regular Papers, vol. 71, no. 2, pp. 898-909, 2024.
\bibitem{Fabiani} F. Fabiani, M. A. Tajeddini, H. Kebriaei, S. Grammatico. Local Stackelberg equilibrium seeking in generalized aggregative games. IEEE Transactions on Automatic Control, vol. 67, no. 2, pp. 965-970, 2022.
\bibitem{Shokri} M. Shokri, H. Kebriaei. Leader-follower network aggregative game with stochastic agents' communication and activeness. IEEE Transactions on Automatic Control, vol. 65, no. 12, pp. 5496-5502, 2021.
\bibitem{Facchinei}F. Facchinei, J. S. Pang, Finite-dimensional variational inequalities
and complementarity problems. Springer Science \& Business Media, 2007.


\bibitem{Lu1}K. Lu, G. Li, L. Wang. Online distributed algorithms for seeking generalized Nash equilibria in dynamic environments. IEEE Transactions on Automatic Control, vol. 66, no. 5, pp. 2289-2296, 2021.
\bibitem{Lu10}K. Lu. Online distributed algorithms for online noncooperative games with stochastic cost functions: High probability bound of regrets. IEEE Transactions on Automatic Control, vol. 69, no. 12, pp. 8860-8867, 2024.

\bibitem{Feijer}D. Feijer, F. Paganini. Stability of primal-dual gradient dynamics
and applications to network optimization. Automatica, vol. 46, no. 12,
pp. 1974-1981, 2010.
\bibitem{Ling}Q. Ling, W. Shi, G. Wu, A. Ribeiro. DLM: Decentralized linearized
alternating direction method of multipliers. IEEE Transactions on
Signal Processing, vol. 63, pp. 4051-4064, 2015.
\bibitem{Hamedani} E. Y. Hamedani, N. S. Aybat. A primal-dual algorithm with line search for general convex-concave saddle point problems. SIAM Journal on Optimization, vol. 31, no. 2, pp. 1299-1329, 2021.
\bibitem{Olfati} R. Olfati-Saber, J. A. Fax, R. M. Murray. Consensus and cooperation in networked multi-agent systems. Proceedings of the IEEE, vol. 95, no. 1, pp. 215-233, 2007.
\bibitem{Ren1}F. Xiao, L. Wang. Asynchronous consensus in continuous-time multi-agent systems with switching topology and time-varying delays. IEEE Transactions on Automatic Control, vol. 53, no. 8, pp. 1804-1816, 2008.
\bibitem{Ren}L. Wang, F. Xiao. Finite-time consensus problems for networks of dynamic agents. IEEE Transactions on Automatic Control, vol. 55, no. 4, pp. 950-955, 2010.
 \bibitem{Barshooi} A. Amirkhani, A. H. Barshooi. Consensus in multi-agent systems: a review. Artificial Intelligence Review, vol. 55, no. 5, pp. 3897-3935, 2022.


\bibitem{BU1}K. Scaman, F. Bach, S. Bubeck, Y. T. Lee, L. Massouli\'{e}. Optimal algorithms for smooth and strongly convex distributed optimization in networks. International Conference on Machine Learning, pp. 3027-3036, 2017.

\bibitem{BU2} A. Nedi\'{c}. Distributed Gradient Methods for Convex Machine Learning Problems in Networks: Distributed Optimization. IEEE Signal Processing Magazine, vol. 37, no. 3, pp. 92-101, 2020.

\bibitem{BU3}D. Kovalev, A. Salim, P. Richt\'{a}rik. Optimal and practical algorithms for smooth and strongly convex decentralized optimization. Advances in Neural Information Processing Systems, vol. 33, pp. 18342-18352, 2020.

\bibitem{Qu}S. Pu, A. Nedi\'{c}. Distributed stochastic gradient tracking methods. Mathematical Programming, vol. 187, no. 1, pp. 409-457, 2021.
\bibitem{Pu}G. Qu, N. Li. Harnessing smoothness to accelerate distributed optimization. IEEE Transactions on Control of Network Systems, vol. 5, no. 3,
pp. 1245-1260, 2018.


\bibitem{Duchi}J. C. Duchi, A. Agarwal, M. J. Wainwright. Dual averaging for
distributed optimization: Convergence analysis and network scaling.
IEEE Transactions on Automatic control, vol. 57, no. 3, pp. 592-606, 2012.
\bibitem{Olshevsky}   Nedi\'{c} A, Olshevsky A. Distributed optimization over time-varying directed graphs. IEEE Transactions on Automatic Control, vol. 60, no. 3, pp. 601-615, 2015.


\bibitem{Yang} Y. F. Yang, D. H. Li, L. Qi. A feasible sequential linear equation method for inequality constrained optimization. SIAM Journal on Optimization, vol. 13, no. 4, pp. 1222-1244, 2003.
\bibitem{Hwang}I. Hwang, B. Song, S. S. Soliman. A holistic view on hyper-dense heterogeneous and small cell networks. IEEE Communications Magazine, vol. 51, no. 6, pp. 20-27, 2013.
\bibitem{1AA} K. Ji, J. D. Lee, Y. Liang, et al. Convergence of meta-learning with task-specific adaptation over partial parameters. Advances in Neural Information Processing Systems, vol. 33, pp. 11490-11500, 2020.

\end{thebibliography}
%

\begin{IEEEbiography}[{\includegraphics[width=1in,height=1.25in,clip,keepaspectratio]{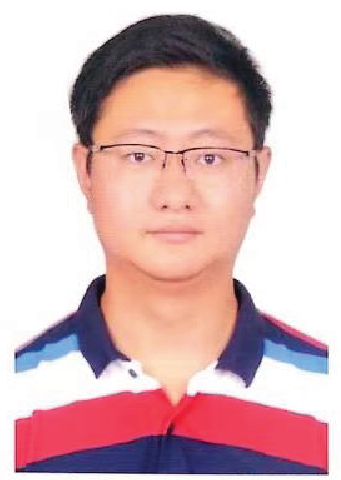}}]
{Kaihong Lu} received the Ph.D. degree in control theory and control engineering from Xidian University, Xi'an, China, in 2019. He is currently a Professor with the School of Electrical and Automation Engineering, Shandong University of Science and Technology, Qingdao, China. His current research interests include distributed optimization, game theory, and networked systems. Email: khong\_lu@163.com
\end{IEEEbiography}

\begin{IEEEbiography}[{\includegraphics[width=1in,height=1.25in,clip,keepaspectratio]{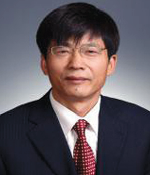}}]
{Huanshui Zhang} received the Ph.D. degree in control theory from Northeastern University, Shenyang, China, in 1997. He was a Postdoctoral Fellow at Nanyang Technological University, Singapore, from 1998 to 2001 and Research Fellow at Hong Kong Polytechnic University, Hong Kong, China, from 2001 to 2003. He currently holds a Professorship at Shandong University of Science and Technology, Qingdao, China. He was a Professor with the Harbin Institute of Technology, Shenzhen, China, from 2003 to 2006 and a Professor with Shandong university, Jinan, China, from 2006 to 2019. His interests include optimal estimation and control, time-delay systems, stochastic systems, signal processing and wireless sensor networked systems. Email: hszhang@sdu.edu.cn
\end{IEEEbiography}

\begin{IEEEbiography}[{\includegraphics[width=1in,height=1.25in,clip,keepaspectratio]{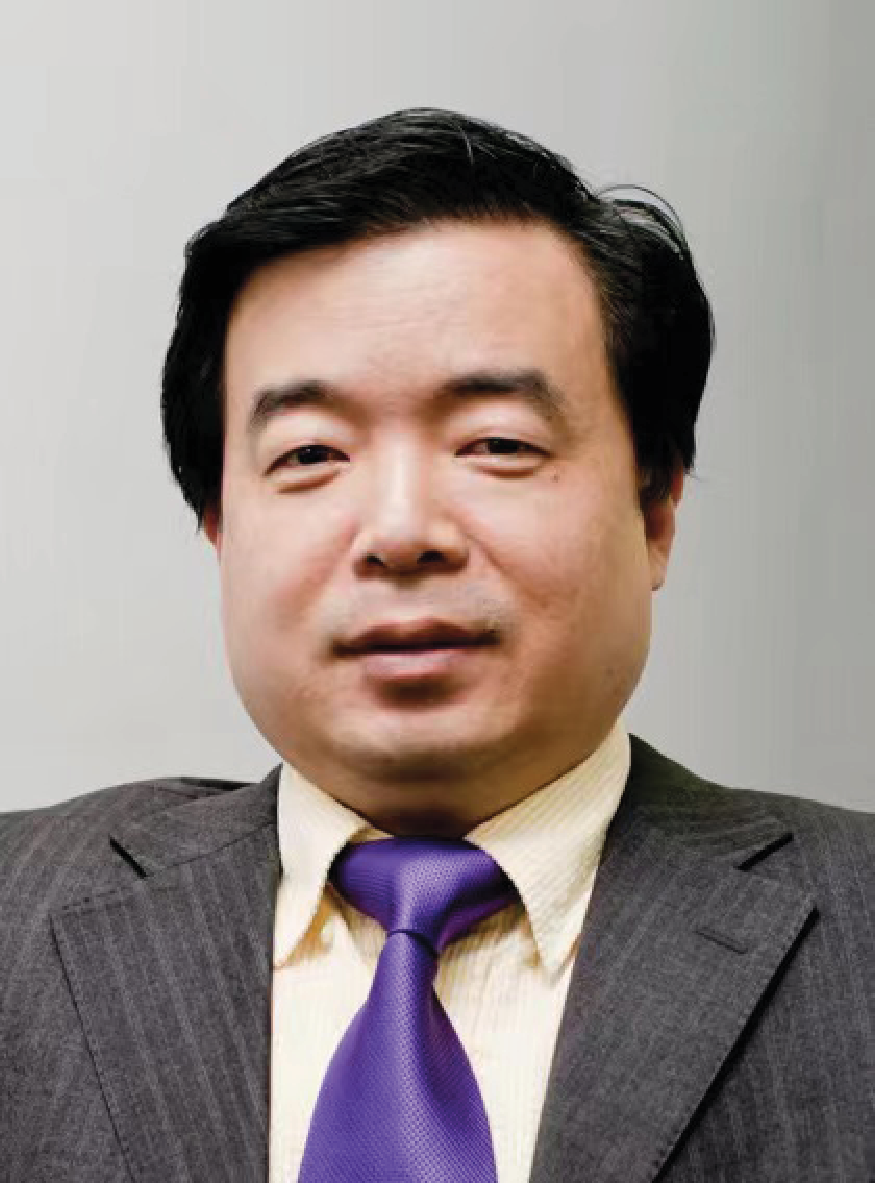}}]
{Long Wang} received the bachelor's degree in automatic control from Tsinghua University in 1986, and Ph.D. degree in Dynamics and Control from Peking University in 1992. During 1993-1997, He held postdoctoral research positions at the University of Toronto, Canada (with Professor Bruce A. Francis) and the German Aerospace Center, Munich, Germany (with Professor Juergen E. Ackermann). He is currently the Cheung-Kong Chair Professor of Dynamics and Control, and the Director of Center for Systems and Control of Peking University. His research interests include complex networked systems, evolutionary game dynamics, artificial intelligence, and bio-mimetic robotics. Email: longwang@pku.edu.cn
\end{IEEEbiography}

\end{document}